\documentclass{article}

\usepackage{custom-preamble}
\usepackage{custom-todo}
\usepackage{authblk}
\usepackage{booktabs}
\usepackage{tikz}

\usepackage{xcolor}
\definecolor{mypink}{rgb}{0.9, 0.0, 0.4}
\definecolor{mygreen}{rgb}{0.0,0.5,0.0}
\definecolor{mypurple}{rgb}{0.6,0.0,0.6}
\definecolor{myfuchsia}{rgb}{0.9,0.0,0.9}
\definecolor{mybrown}{rgb}{0.7,0.3,0.3}
\definecolor{mygray}{rgb}{0.6,0.6,0.6}
\definecolor{myltgray}{rgb}{0.85,0.85,0.85}
\definecolor{stnavy}{HTML}{0D1CC9}
\definecolor{teal}{rgb}{0.0, 0.5, 0.5}

\title{The nearest neighbor interchange proximity measure for the tree-basedness of rooted binary phylogenetic networks}
\author[1,*]{Martin Frohn}
\author[2]{Takatora Suzuki}

\affil[1]{Department of Advanced Computing Sciences, Maastricht University, Maastricht, 6229~EN, The Netherlands}
\affil[2]{Department of Applied Mathematics, Faculty of Fundamental Science and Engineering, Waseda University, Japan}
\affil[*]{Corresponding Author. Email: martin.frohn@maastrichtuniversity.nl}

{\bf}{}
\newcommand{\head}{\mathrm{head}}
\newcommand{\tail}{\mathrm{tail}}

\begin{document}
\maketitle

\noindent\makebox[\linewidth]{\rule{\textwidth}{1pt}} \\

\begin{abstract}
Tree-based networks are a broad graph class in phylogenetics which posit a model of evolution composed of tree-like processes and hybridization events. Hence, for any non-tree-based phylogenetic network, it is meaningful to study the magnitude of violation of the tree-based model assumptions. For rooted networks, we introduce the proximity measure $\delta_{\text{rNNI}}$, defined by the minimum number of rooted nearest neighbor interchanges (NNIs) required to transform a network into a tree-based network. For binary networks, we prove that $\delta_{\text{rNNI}}$ is pairwise distinct from all previously known proximity measures in the literature and that no approximation guarantees between $\delta_{\text{rNNI}}$ and other known proximity measures holds. In addition, we characterize a family of networks for which a linear time lower bound on $\delta_{\text{rNNI}}$ is tight. Moreover, we analyze upper bounds and give an exact solution algorithm to compute $\delta_{\text{rNNI}}$.
\end{abstract}

\begin{keywords}
    Tree-based phylogenetic network, proximity measure, nearest neighbor interchange
\end{keywords}

\noindent\makebox[\linewidth]{\rule{\textwidth}{1pt}} \\

\section{Introduction}
Directed weighted acyclic graphs which model evolutionary histories of a set of species are called \emph{phylogenetic networks} if the leaves are bijectively labeled with the set of species. In the study of evolutionary relationships species are called \emph{taxa} and it is oftentimes assumed that phylogenetic networks are \emph{rooted}, i.e., a unique vertex has no parents and is called the \emph{root} of the network. 
The evolutionary model constituted by a phylogenetic network that is a tree whose internal vertices have at least degree three is also called a \emph{phylogenetic tree}. The fundamental class of \emph{tree-based} phylogenetic networks $N$ generalizes the notion of phylogenetic trees by requiring $N$ to contain a tree $T$ on the same set of taxa and with the same root as $N$ (if it exists), called the \emph{support tree} of $N$, such that $T$ has the same vertex set as $N$~\cite{kong2022classes}. Notice that a support tree does not adhere to any degree requirements on vertices like phylogenetic trees. Tree-based networks appear to form a proper generalization of phylogenetic trees by allowing for biological phenomena like horizontal gene transfer and incomplete lineage sorting which are absent from any phylogenetic tree model~\cite{bapteste2013networks}. However, using them as a model of evolution can exclude biological processes which are inherently not tree-like~\cite{dagan2006tree,doolittle2007pattern}. Consequently, the question arises to which degree a phylogenetic network has a tree-based graph structure.

The proximity of a phylogenetic networks to being tree-based can be measured in many different ways. For example by counting the number of vertices deleted from or added to a network to make it tree-based~\cite{francis2018new}, by counting the number of deleted or added edges to make a network tree-based or by modifying it with topological rearrangement moves ~\cite{fischer2020tree}. Connections between these different proximity measures and algorithms to compute them are typically derived from characterizations of tree-based phylogenetic networks. Specifically, the tree-based property is related to more general, well-known notions in graph theory such as bipartite matchings~\cite{zhang2016tree} and vertex-disjoint path partitions~\cite{francis2018new}.

In this article, we study nearest neighbor interchanges (NNIs) as a proximity measure $\delta_{\text{rNNI}}$ of tree-basedness for rooted binary phylogenetic networks. To the best of our knowledge a NNI proximity measure for rooted phylogenetic network has not been studied before while the unrooted version has recevied some attention in the literature~\cite{fischer2020tree}. We show that all known tree-basedness proximity measures for rooted phylogenetic networks from the literature do guarantee only a linear approximation of $\delta_{\text{rNNI}}$. To prove this distinguishing feature of $\delta_{\text{rNNI}}$ we leverage the characterization of rooted binary phylogenetic networks $N$ as a partition of zig-zag-paths~\cite{hayamizu2021structure}. Furthermore, we introduce lower and upper bounds on $\delta_{\text{rNNI}}$ by extending Theorems~11 and~19 in~\cite{suzuki2024bridging} and we analyse the families of networks on which these bounds are tight and not tight.

In~\cref{sec:prelim} we introduce notation to formalize the measure $\delta_{\text{rNNI}}$ and provide background information on tree-basedness proximity measures. In~\cref{sec:rNTBPcomb} we show that $\delta_{\text{rNNI}}$ is pairwise distinct from the known tree-basedness proximity measures for binary rooted phylogenetic networks and we discuss an algorithm to compute $\delta_{\text{rNNI}}$. In~\cref{sec:future} we give a conclusion and future research directions.


\section{Preliminaries on phylogenetic networks and tree-basedness proximity measures}\label{sec:prelim}
In this section we consider various preliminaries which we will use throughout this article.
\subsection{Graph theoretical terminology}
Throughout this paper, we assume that the graphs are finite, directed, simple, and acyclic, unless otherwise stated. 
For a graph $G$ we denote $V(G)$ and $E(G)$ as the vertex set and edge set of $G$, respectively. 
For two graphs~$G$ and~$H$, $G$ is a \emph{subgraph} of $H$ if both $V(G)\subseteq V(H)$ and $E(G)\subseteq E(H)$ hold, in which case we write $G\subseteq H$. 
Two directed graphs $G$ and $H$ are said to be \emph{isomorphic} if there exists a bijection between $V(G)$ and $V(H)$ that yields a bijection between $E(G)$ and $E(H)$.  
In other words, $G$ and $H$ are isomorphic if there exists a labelling map $f: V(G)\to V(H)$ so that there is a one-to-one correspondence between $V(G)$ and $V(H)$ and for any $v_i$ and $v_j$ in $V(G)$, $(v_i, v_j)$ is in $E(G)$ if and only if  $(f(v_i),  f(v_j))$ is in $E(H)$, including the loops and parallel edges if they exist. 
A subgraph $G$ of $H$ is \emph{proper} if $G\neq H$.

Given a graph $G$ and a non-empty subset $S \subseteq E(G)$, the edge-set $S$ is said to \emph{induce the subgraph  $G[S]$ of $G$}, that is, the one whose edge-set is $S$ and whose vertex-set is the set of the ends of all edges in $S$. 
For a graph $G$ with $|E(G)| \geq 1$ and a partition  $\{E_1,\dots,E_{d}\}$ of $E(G)$, the collection  $\{G[E_1], \dots, G[E_{d}]\}$ is a  \emph{decomposition} of $G$, and $G$ is said to be \emph{decomposed into} $\{G[E_1], \dots, G[E_{d}]\}$. 

For an edge $e=(u,v)$ of a graph $G$, $u$ and $v$ are denoted by $\tail(e)$ and $\head(e)$, respectively. For a graph~$G$ and $v\in V(G)$, we denote $d_{\text{in}}(v)$ and $d_{\text{out}}(v)$ as the number of edges $e$ of $G$ with $\head(e) = v$ and $\tail(e) = v$, respectively. 
We call $G$ \emph{binary} and \emph{almost-binary} if $d_{\text{in}}(v)+d_{\text{out}}(v)\leq 3$ and $d_{\text{in}}(v)+d_{\text{out}}(v)\leq 4$, respectively, for all $v\in V(G)$. 
For an edge $(u, v)$ of a directed graph $G$,  \emph{subdividing edge $(u, v)$} refers to the operation of introducing a new vertex $p$ and replacing $(u, v)$ with the new consecutive edges $(u, p)$ and $(p, v)$. Conversely, for a vertex $p\in V(G)$ with $d_{\text{in}}(p)=d_{\text{out}}(p)=1$, we call $p$ an \emph{edge subdivision} of $G$. Similarly, subdividing an edge $\{u,v\}$ of an undirected graph $G$ refers to the replacement of $\{u,v\}$ by two edges $\{u,p\},\{p,v\}$ for a new vertex $p$, and a edge subdivision $p\in V(G)$ is defined by $\text{deg}(p)=2$. 

Two directed graphs $G$ and $H$ are said to be \emph{(graph) isomorphic} if there exists a bijection between $V(G)$ and $V(H)$ that yields a bijection between $E(G)$ and $E(H)$.  
In other words, $G$ and $H$ are isomorphic if there exists a labelling map $f: V(G)\to V(H)$ so that there is a one-to-one correspondence between $V(G)$ and $V(H)$ and for any $v_i$ and $v_j$ in $V(G)$, $(v_i, v_j)$ is in $E(G)$ if and only if  $(f(v_i),  f(v_j))$ is in $E(H)$, including the loops and parallel edges if they exist. 

\subsection{Phylogenetic networks}
We complete the definition of a phylogenetic network $N$ from the introduction by requiring that the underlying directed weighted acyclic graph of $N$ contains no edge subdivisions. Since edge lengths and the bijection between leaves and species are not a focal point in this article we refer to phylogenetic networks as their underlying graphs if not explicitly stated otherwise. If a phylogenetic network $N$ is rooted, then we denote the unique root of $N$ by $\rho(N)$. By definition $d_{\text{in}}(\rho(N))=0$ and there exists a directed path from $\rho(G)$ to any vertex $v$ with $d_{\text{in}}(v)=1$ and $d_{\text{out}}(v)=0$, called a \emph{leaf} of $N$. For a rooted phylogenetic network $N$, we call $v\in V(N)$ a \emph{reticulation vertex} if $d_{\text{in}}(v)\geq 2$ and $d_{\text{out}}(v)=1$, a \emph{internal tree vertex} if $d_{\text{in}}(v)=1$ and $d_{\text{out}}(v)\geq 2$ and a \emph{leaf} if $d_{\text{in}}(v)=1$ and $d_{\text{out}}(v)=0$. 
Notice that every unrooted binary phylogenetic network can be transformed into a rooted binary phylogenetic network and vice versa by deleting one edge incident to a leaf and orienting edges with respect to the binary constraint or attaching a new leaf to the root and removing edge orientations.


For a phylogenetic network $N = (V, E)$, a \emph{covering subtree} of $N$ is a subtree of $N$ with the same leaf-set and root. Recall that a covering subtree $T$ of $N$ is called a support tree of $N$ if $V(T) = V(N)$ holds, and a phylogenetic network $N$ is called tree-based if there exists a support tree of $N$. For simplicity of notation, we denote the set of rooted binary phylogenetic networks by $\mathcal{N}$ and the set of rooted binary tree-based phylogenetic networks by $\mathcal{N}_{\text{TB}}$.

Next, we review definitions from \cite{hayamizu2021structure, suzuki2024bridging} who are relevant to our study. Given a rooted almost-binary phylogenetic network $N$, a connected subgraph $Z$ of $N$ with $m:=|E(Z)|\geq 1$  is a \emph{zig-zag trail} (\emph{in $N$}) if the edges of $Z$ can be permuted as $(e_1, \dots, e_m)$ such that either $\head(e_i)=\head(e_{i+1})$ or $\tail(e_i)=\tail(e_{i+1})$ holds for each $i\in [1,m-1]$. 
A zig-zag trail is represented by an alternating sequence of (not necessarily distinct) vertices and distinct edges, e.g., $(v_0, \allowbreak (v_0, v_1), \allowbreak v_1,\allowbreak  (v_2, v_1), \allowbreak v_2, \allowbreak \dots, \allowbreak (v_m, v_{m-1}), \allowbreak v_m)$, but this can be more concisely written as $v_0 > v_1 < v_2 > \cdots > v_{m-1} < v_m$ or its reverse, where each edge $(v_i, v_{i+1})$ is represented by $v_i > v_{i+1}$. 
	A zig-zag trail $Z$ in $N$ is  \emph{maximal} if $N$ contains no zig-zag trail $Z^\prime$  such that $Z$ is a proper subgraph of $Z^\prime$. Each maximal zig-zag trail in $N$ falls into one of the following four types. 
	A \emph{crown} is a maximal zig-zag trail $Z$ that has even $m:=|E(Z)|\geq 4$ and can be written in the cyclic form $v_0 < v_1 > v_2 < v_3 > \cdots  > v_{m-2} < v_{m-1} > v_{m} = v_0$.  A \emph{fence} is any maximal zig-zag trail that is not a crown. An \emph{N-fence} is a fence $Z$ with odd $m:=|E(Z)|\geq 1$,  expressed as $v_0 > v_1 < v_2 > \cdots < v_{m-1} > v_m$. A fence $Z$ with even $m:=|E(Z)|\geq 2$ is an \emph{M-fence} if expressed as $v_0 < v_1 > v_2 < \cdots < v_{m-1} > v_m$, and it is a \emph{W-fence} if expressed as $v_0 > v_1 < v_2 > \cdots > v_{m-1} < v_m$. 
We will make use of these various types of zig-zag trails because they characterize rooted binary phylogenetic networks:


\begin{theorem}~\cite{hayamizu2021structure}\label{thm::TB}
Let $N\in\mathcal{N}$. Then, there exists a unique decomposition of $N$ into maximal zig-zag trails.
\end{theorem}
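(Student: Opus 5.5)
We will prove the statement by building an equivalence relation on $E(N)$ and showing that its classes are exactly the edge sets of the maximal zig-zag trails. Call two edges $e,f\in E(N)$ \emph{adjacent} if $\head(e)=\head(f)$ or $\tail(e)=\tail(f)$. Let $\sim$ be the reflexive–transitive closure of adjacency. This is an equivalence relation, so its classes $E_1,\dots,E_d$ partition $E(N)$. The plan is to show three things. First, each $N[E_i]$ is a zig-zag trail. Second, each such trail is maximal. Third, every maximal zig-zag trail has edge set equal to some $E_i$. Uniqueness then follows: any decomposition into maximal zig-zag trails must use exactly the classes $E_i$.

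\textbf{Local structure.} The key input is the degree constraint in $N\in\mathcal{N}$. Every vertex has in-degree at most $2$ and out-degree at most $2$. In fact a vertex with two out-edges (the root or a tree vertex) has exactly one in-edge, and a reticulation has exactly one out-edge. So each edge $e$ is adjacent via its head to at most one other edge, namely the second in-edge of $\head(e)$ when that vertex is a reticulation. Likewise $e$ is adjacent via its tail to at most one other edge, the sibling out-edge at $\tail(e)$. Hence the adjacency graph $A$ on $E(N)$ has maximum degree at most $2$. Its connected components, which are the $\sim$-classes, are therefore simple paths or simple cycles.

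Along such a component the adjacencies alternate between ``shared head'' and ``shared tail''. This holds because each edge has at most one head-neighbour and at most one tail-neighbour. Listing the edges of a path component in order therefore gives a sequence $(e_1,\dots,e_m)$ of distinct edges satisfying the definition, so $N[E_i]$ is a connected zig-zag trail. A cycle component must have even length by the alternation. Its length is at least $4$, since length $2$ would mean two edges sharing both head and tail, i.e.\ parallel edges, which simplicity excludes. So a cycle component is written in the cyclic crown form.

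\textbf{Maximality and uniqueness.} Suppose $Z$ is any zig-zag trail. Consecutive edges of $Z$ are adjacent, so $E(Z)$ lies in a single class $E_i$. Thus $N[E_i]$ contains every zig-zag trail meeting it, which gives maximality of $N[E_i]$. It also shows that a maximal $Z$ contained in $N[E_i]$ must equal $N[E_i]$, because $N[E_i]$ is itself a zig-zag trail. Now take any decomposition into maximal zig-zag trails. Each part lies inside a single class and is maximal, so it is a whole class. Since the parts partition $E(N)$, they are exactly the classes $E_1,\dots,E_d$, and the decomposition is unique.

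\textbf{Main obstacle.} The delicate step is the degree bound on $A$ together with the alternation argument. These must be checked at every vertex type, including the root (out-degree $1$ or $2$) and leaves. They must also cover cases where vertices of a trail repeat but edges do not, and cycle components, to make sure such a component really is a trail in the sense of the definition rather than something more tangled.
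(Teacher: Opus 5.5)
Your proof is correct and is essentially the standard argument behind this structure theorem; the paper gives no proof of its own and defers to Hayamizu (2021). Because every vertex has $d_{\text{in}}\le 2$ and $d_{\text{out}}\le 2$, the ``shares a head or a tail'' graph on $E(N)$ splits into alternating paths and even cycles, and these components are exactly the maximal zig-zag trails, which gives existence and uniqueness at once. Only those two degree bounds are used, so your aside that the root has exactly one in-edge is a harmless slip (it has none), whereas the bound $d_{\text{out}}(\rho(N))\le 2$ that you assume is genuinely needed, since a root of out-degree three whose children are all reticulations already makes the statement fail.
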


We denote the decomposition of $N\in\mathcal{N}$ into maximal zig-zag trails by $\mathcal{Z}(N)$. Furthermore, maximal zig-zag trails provide a certificate of tree-basedness which has been proven in different contexts:

\begin{theorem}~\cite{zhang2016tree,hayamizu2021structure,jetten2016nonbinary}\label{thm::TB::Wfence}
    Let $N\in\mathcal{N}$. Then, $N$ is tree-based if and only if $\mathcal{Z}(N)$ contains no W-fence.
\end{theorem}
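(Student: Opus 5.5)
The plan is to go through the classical matching characterization of tree-basedness and translate it into the language of the decomposition $\mathcal{Z}(N)$ from \cref{thm::TB}. Recall that $N$ is tree-based if and only if there is a set $S\subseteq E(N)$ of edges with two properties. First, every reticulation vertex keeps exactly one incoming edge in $S$. Second, every vertex other than $\rho(N)$ keeps at least one incoming edge in $S$, and every internal tree vertex keeps at least one outgoing edge in $S$. Such an $S$ induces a support tree. Indeed, the edges of $N$ not used by a support tree are exactly one incoming edge per reticulation, chosen so that no tree vertex loses both of its children.

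\textbf{Step 1 (local structure of a trail).} In a binary network, every zig-zag trail alternates between tails, which are tree vertices (or the root), and heads, which are reticulations (or leaves and tree vertices with in-degree one). Since every vertex has in-degree at most two and out-degree at most two, each trail in $\mathcal{Z}(N)$ is a path or a cycle in the bipartite "parent--child" incidence structure. Each reticulation $v$ lies in exactly one trail and appears there as an internal "$<$" vertex flanked by its two incoming edges. Each tree vertex with two children appears as an internal "$>$" vertex. The trail endpoints are the only places where a head has in-degree one, or a tail has out-degree one (i.e.\ the root).

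\textbf{Step 2 (reduction to individual trails).} Because the trails partition $E(N)$, choosing which incoming edge to delete at each reticulation is a choice made independently inside each trail. Likewise, the constraint "a tree vertex must not lose both outgoing edges" involves only edges of a single trail. So $N$ is tree-based if and only if, in every $Z\in\mathcal{Z}(N)$, we can delete edges so that each internal head loses exactly one of its two edges, each tail keeps at least one edge, and endpoints are unconstrained (a head endpoint of in-degree one loses nothing).

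\textbf{Step 3 (counting per type).} Within a trail with $h$ internal heads we must delete exactly $h$ edges, namely one per internal head. We need tails of degree two in $Z$ to keep an edge, which is a perfect-matching type condition along a path. Equivalently, we need to orient the deletion, i.e.\ match each internal head to one adjacent tail edge it discards, so that no degree-two tail has both of its edges discarded.

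- For an N-fence and an M-fence, there is an endpoint head of degree one or an end tail of degree one. Sweeping from that end, we delete the edge on the side that keeps every tail covered, and this succeeds.
- For a crown, alternately deleting every other edge around the cycle works.
- For a W-fence $v_0>v_1<\cdots>v_{m-1}<v_m$, the endpoints $v_0$ and $v_m$ are tails, which are non-root tree vertices or the root with out-degree one. Whether or not the root occurs, there are $m/2$ internal heads and $m/2+1$ tails, each of which must keep an edge. But only $m - m/2 = m/2$ edges remain, which is too few, so the W-fence obstructs tree-basedness.

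Combining Steps 2 and 3 gives the equivalence.

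The main obstacle is Step 1's bookkeeping: correctly identifying which endpoints are unconstrained, in particular the root and leaves, and thereby verifying that W-fences are exactly the trails whose tails outnumber the kept edges.
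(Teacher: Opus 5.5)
Your overall strategy is sound and is the standard argument in the sources the paper cites; the paper itself gives no proof. You split the support-tree constraints along $\mathcal{Z}(N)$, solve each trail independently, and count for W-fences. Your treatment of crowns, N-fences and M-fences is fine. However, the step on which the ``only if'' direction hinges rests on a false identification.

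In a binary network a non-root internal tree vertex has out-degree two. Both of its outgoing edges therefore lie in the same maximal trail, so it is never an endpoint tail. The endpoint tails of a fence are exactly the vertices of out-degree one: reticulations, or the root if $d_{\text{out}}(\rho(N))=1$. So in a W-fence $v_0>v_1<v_2>\cdots>v_{m-1}<v_m$ the ends $v_0,v_m$ are in general reticulations. This contradicts Step~1 (``a tail has out-degree one (i.e.\ the root)'') and Step~3 (``non-root tree vertices or the root'').

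This is not cosmetic. Your first-paragraph characterization only forces internal tree vertices to keep an outgoing edge. Under it, the reticulation ends of a W-fence would be unconstrained, leaving $m/2-1$ constrained tails against $m/2$ surviving edges, and no contradiction. For $m=2$ it would declare the W-fence harmless. That characterization also fails as a sufficient condition: deleting the edge from one reticulation into another reticulation leaves the first one childless, so $S$ need not induce a support tree.

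The repair is to use the leaf-set condition. A support tree has $V(T)=V(N)$ and the same leaf set as $N$. Hence every non-leaf vertex must keep at least one outgoing edge in $T$, and this includes reticulations and the root, not only tree vertices. If you replace ``every internal tree vertex'' by ``every non-leaf vertex'' in your first paragraph, Step~2's rule ``each tail keeps at least one edge'' is justified for every vertex occurring as a tail, fence endpoints included. Step~3's count then goes through: $m/2+1$ tails but only $m/2$ surviving edges.

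For that count you should also note that the tails $v_0,v_2,\dots,v_m$ of a W-fence are pairwise distinct. Otherwise one surviving edge could serve two of them. Distinctness holds because a repeated tail would need at least three outgoing edges. The exception is $v_0=v_m$, but then the two end edges share a tail and the trail closes into a crown, contradicting that $Z$ is a maximal fence.
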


\subsection{A review of known tree-basedness proximity measures on rooted phylogenetic networks}

The following proximity measures of tree-basedness are well-known~\cite{francis2018new,van2026proximity}: 
\begin{problem}\label{rootedProx}
    Given a rooted phylogenetic network $N$, calculate the minimum number of
    \begin{enumerate}
    \item leaves in a rooted spanning tree of $N$ which are not leaves in $N$;
    \item vertex disjoint paths that partition $V(N)$ minus the number of leaves in $N$;
    \item leaves added to $N$ to make $N$ tree-based;
    \item vertices in $N$ absent from a covering subtree of $N$;
    \item covering subtrees of $N$ such that every vertex in $N$ is present in at least one of the trees;
    \item edges (and subsequent edge subdivisions) deleted from $N$ to make $N$ tree-based.
\end{enumerate}
\end{problem}
By definition, a given network is tree-based if and only if an optimal solution for \cref{rootedProx}.1 to \cref{rootedProx}.4 and \cref{rootedProx}.5 has objective function value zero and one, respectively. Moreover, \cref{rootedProx}.1 to \cref{rootedProx}.3 are equivalent and can be solved in $\Theta(|E(N)|)$ time~\cite{francis2018new,hayamizu2021structure}. \cref{rootedProx}.4 can be solved in $\mathcal{O}(|V(N)|^2|E(N)|)$ time~\cite{davidov2020maximum}. Interestingly, the authors solve \cref{rootedProx}.4 by studying \cref{rootedProx}.2 under the additional constraints that vertex disjoint paths need to have their endpoint in a leaf. For $n$ leaves, this construction transforms the minimization problem \cref{rootedProx}.2 into a feasibility problem, asking for the existence of a partition of $V(N)$ into vertex disjoint paths with their endpoints in a leaf. Then, the authors model this question as a flow-feasibility problem which they solve by applying the max-flow-min-cut theorem. This approach can be extended to solve \cref{rootedProx}.5 in $\mathcal{O}(|V(N)|^2|E(N)|\log|V(N)|)$ time~\cite{Yoshida2023}. To the best of our knowledge the proof of the last result is part of a master thesis which is neither publicly accessible nor available in English. We did not find any alternative source for the same result. Therefore, for the sake of completion, we include independent proofs of the results in~\cite{Yoshida2023} in~\cref{sec:appendix}. We believe that our proofs are identical or very similar to the inaccessible authors' original. In addition, for rooted not-necessarily binary tree-based networks, there are algorithms to count the number of support trees in $O(|E(N)|^{3/2})$ time~\cite{pons2019tree}. For rooted almost-binary tree-based networks, one can count and enumerate all support trees of $N$ in $O(|E(N)|)$ time and in $O(|E(N)|)$ time delay, respectively \cite{hayamizu2021structure}. Lastly, \cref{rootedProx}.6 is an open problem even if constrained to selected edges whose subsequent removal transforms $N$ into a tree-based network and who, at the time of removal, are not reticulation edges~\cite{van2026proximity}. Moreover, the authors showed that the optimal objective function values of  \cref{rootedProx}.3 and \cref{rootedProx}.6 are incomparable.

We denote the objective function of \cref{rootedProx}.3, \cref{rootedProx}.4 and \cref{rootedProx}.5 by $\delta^*(N)$, $\eta^*(N)$ and $\kappa^*(N)$, respectively. In addition, for a covering subtree $T$ of $N$, we denote the number of vertices in $N$ absent from $T$ by $\eta(N,T)$. We call a covering subtree $T$ of $N$ \emph{maximum} if $\eta(N,T)=\eta^*(N)$. We call $\delta^*(N)$ the \emph{rooted leaf-addition tree-basedness proximity measure} (ATBP), $\eta^*(N)$ the \emph{rooted covering subtree tree-basedness proximity measure} (CTBP) and $\kappa^*(N)$ the \emph{rooted vertex covering tree-basedness proximity measure} (VTBP). For a rooted phylogenetic network $N$ and $W\subseteq V(N)$, we define $\Delta W=\{v\in V(N)\,:\,\exists\,(w,v)\in E(N)\text{ with }w\in W\}$. Then, $\delta^*(N)$ is a supermodular maximization problem:

\begin{restatable}{proposition}{propdeltaSupermodular}\cite{Yoshida2023}\label{prop::deltaSupermodular}
Let $N$ be a rooted phylogenetic network with leaf set $X$. Then,
\begin{align*}
     \delta^*(N)=\max_{W\subseteq V(N)\setminus X}\left(|W|-|\Delta W|\right).
\end{align*}
\end{restatable}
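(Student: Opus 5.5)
The plan is to combine the stated equivalence of \cref{rootedProx}.2 and \cref{rootedProx}.3 with the classical reduction of minimum path partitions of a DAG to bipartite matching. The final step is the deficiency version of K\H{o}nig--Hall (Ore's formula).

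First, by the equivalence of \cref{rootedProx}.1--\cref{rootedProx}.3 recalled above, $\delta^*(N)=p(N)-|X|$. Here $p(N)$ is the minimum number of vertex disjoint directed paths partitioning $V(N)$.

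Next, I would build the bipartite graph $B$ with vertex classes $V^+=\{v^+ : v\in V(N)\}$ and $V^-=\{v^- : v\in V(N)\}$, and an edge $\{u^+,v^-\}$ for every $(u,v)\in E(N)$. The standard argument applies, and it uses only acyclicity of $N$. A path partition with $k$ paths uses exactly $|V(N)|-k$ edges of $N$, and these form a matching in $B$, because each vertex has at most one successor and at most one predecessor on its path. Conversely, any matching $M$ in $B$ yields a subgraph of $N$ in which every vertex has in- and out-degree at most $1$. Since $N$ is acyclic, this subgraph is a disjoint union of $|V(N)|-|M|$ paths covering $V(N)$. Hence $p(N)=|V(N)|-\nu(B)$, where $\nu(B)$ is the maximum matching size.

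Then I would apply the defect form of Hall's theorem to $B$:
\begin{align*}
\nu(B)=|V^+|-\max_{S\subseteq V(N)}\bigl(|S|-|\Delta S|\bigr).
\end{align*}
This uses that the neighbourhood of $\{s^+ : s\in S\}$ in $B$ is exactly $\{v^- : v\in\Delta S\}$, and the empty set gives the maximum at least $0$. This identity can be derived from Hall's theorem by adding $d$ dummy vertices to $V^-$ that are adjacent to all of $V^+$.

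It remains to split off the leaves. Every leaf $x\in X$ has out-degree $0$, so $\Delta(S\cup\{x\})=\Delta S$, and adding $x$ to $S$ increases $|S|-|\Delta S|$ by one. Hence the maximum is attained by sets containing $X$. Writing $S=W\cup X$ with $W\subseteq V(N)\setminus X$, we have $\Delta S=\Delta W$, so
\begin{align*}
\max_{S}\bigl(|S|-|\Delta S|\bigr)=|X|+\max_{W\subseteq V(N)\setminus X}\bigl(|W|-|\Delta W|\bigr).
\end{align*}
Combining the three displays gives
\begin{align*}
\delta^*(N)=|V(N)|-\nu(B)-|X|=\max_{W\subseteq V(N)\setminus X}\bigl(|W|-|\Delta W|\bigr).
\end{align*}

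The main point needing care is the path/matching correspondence. Acyclicity is what rules out cycles in the subgraph induced by a matching. Everything else is routine bookkeeping with the deficiency formula.
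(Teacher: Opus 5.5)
Your proposal is correct and follows essentially the paper's argument: the same bipartite graph on two copies of $V(N)$, the deficiency form of Hall's theorem giving the maximum matching size as $|V(N)|-\max_{W}(|W|-|\Delta W|)$, and the observation $\Delta X=\emptyset$ to restrict the maximum to $W\subseteq V(N)\setminus X$. The only differences are that you derive the path-partition/matching correspondence yourself via acyclicity, where the paper cites Lemma~4.1 of \cite{francis2018new}, and that you spell out the leaf-splitting step ($\Delta(W\cup X)=\Delta W$, $|W\cup X|=|W|+|X|$) more explicitly than the paper does.
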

\cref{prop::deltaSupermodular} can be applied to relate the VTBP to the ATBP:

\begin{restatable}{proposition}{propkappaRatio}\cite{Yoshida2023}\label{prop::kappaRatio}
    Let $N$ be a rooted phylogenetic network with leaf set $X$. Then,
    \begin{align*}
        \kappa^*(N)=\max_{W\subseteq V\setminus X}\left\lceil\frac{|W\setminus\Delta W|}{|\Delta W\setminus W|}\right\rceil.
    \end{align*}
    In particular, there exists $W^*\subseteq V\setminus X$ such that
    \begin{align*}
        \kappa^*(N)-1=\left\lceil\frac{\delta^*(N)}{|\Delta W^*\setminus W^*|}\right\rceil,
    \end{align*}
    i.e., $\kappa^*(N)-1\leq\delta^*(N)$. Furthermore, $\kappa^*(N)$ can be calculated in $\mathcal{O}(|V(N)|^2|E(N)|\log|V(N)|)$ time.
\end{restatable}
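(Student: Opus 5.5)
We would model $\kappa^*(N)$ as a flow-feasibility question and read off the formula from the max-flow-min-cut theorem, in the spirit of the approach of~\cite{davidov2020maximum} for $\eta^*(N)$.

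\textbf{Step 1 (reformulation).} We first check that $\kappa^*(N)\le k$ holds if and only if $V(N)$ can be covered by $k$ covering subtrees. This in turn is equivalent to the following: every non-leaf vertex $v$ can be assigned at least one child such that each vertex $u\neq\rho(N)$ is the chosen child of at least one and at most $k$ parents. Such an assignment yields $k$ covering subtrees, one per ``colour'' of the at most $k$ incoming chosen edges, each extended to a covering tree. We encode this as a bipartite flow network. There is a source $s$, copies $v^+$ for $v\in V\setminus X$ with capacity lower bound $1$ on $(s,v^+)$, arcs $(v^+,u^-)$ for $(v,u)\in E(N)$, and arcs $(u^-,t)$ with lower bound $1$ and capacity $k$. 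By Hoffman's circulation theorem (max-flow-min-cut with lower bounds), a feasible flow exists if and only if every $W\subseteq V\setminus X$ satisfies $|W|\le k\,|\Delta W|$ together with the symmetric covering condition. With care about vertices in $W\cap\Delta W$, which can route to themselves in the tree counting, this gives the condition $|W\setminus\Delta W|\le (k-1)|\Delta W\setminus W|$ for all $W$. The least such $k$ is then $1+\max_W\lceil |W\setminus\Delta W|/|\Delta W\setminus W|\rceil$, up to an index shift, and we would verify the exact normalization against the tree-based case, where the maximum is $0$ and $\kappa^*=1$. This Hall/Hoffman step, and getting the precise $W\setminus\Delta W$ versus $\Delta W\setminus W$ form right, is the main obstacle.

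\textbf{Step 2 (relation to $\delta^*$).} Let $W^*$ attain the maximum. We have $|W^*\setminus\Delta W^*|-|\Delta W^*\setminus W^*|=|W^*|-|\Delta W^*|$. A standard uncrossing or supermodularity argument lets us choose $W^*$ to also maximize $|W|-|\Delta W|$, which equals $\delta^*(N)$ by \cref{prop::deltaSupermodular}. Rewriting the ceiling then gives $\kappa^*(N)-1=\lceil \delta^*(N)/|\Delta W^*\setminus W^*|\rceil$. Since the denominator is at least $1$ whenever the numerator is positive, we get $\kappa^*(N)-1\le\delta^*(N)$.

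\textbf{Step 3 (algorithm).} We check feasibility for a given $k$ by one max-flow computation in $\mathcal{O}(|V|^2|E|)$, as in~\cite{davidov2020maximum}. We binary search over $k\in[1,\delta^*(N)+1]\subseteq[1,|V|]$, which costs a $\log|V|$ factor. Together this gives $\mathcal{O}(|V(N)|^2|E(N)|\log|V(N)|)$.
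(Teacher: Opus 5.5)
\textbf{The gap is in Step~1.} It is not a normalization issue: the reformulation itself is false.

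Take a rooted binary network in which $x,y,w$ are reticulations whose parents are tree vertices (each tree vertex also having a leaf child). Let $z,z'$ be reticulations with edges $(x,z),(y,z),(z,z'),(w,z')$, and let $z'$ be the parent of a leaf. A covering subtree has the same leaf set as $N$, so one containing $x$ must contain $(x,z)$ and then $(z,z')$. The same holds for $y$, and one containing $w$ must contain $(w,z')$. Hence no covering subtree contains two of $x,y,w$, and $\kappa^*(N)\ge 3$. This is the set $W=\{x,y,z,w\}$, with $|W\setminus\Delta W|=3$ and $|\Delta W\setminus W|=1$.

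Yet choosing \emph{every} edge satisfies your condition for $k=2$, because in-degrees are at most two. So your criterion would give $\kappa^*(N)\le 2$ for every binary network, and your bipartite network is likewise feasible with one unit on each arc.

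The ``colour and extend'' step is where it breaks. A tree containing a non-leaf vertex must continue through a child \emph{in the same tree}. Tree membership therefore propagates down directed paths and accumulates: three trees must pass through $z'$, which has only two in-edges. Since your network has no arc from $u^-$ to $u^+$, it cannot see this. Its cut condition $|W|\le k|\Delta W|$ is equivalent to $|W\setminus\Delta W|\le k|\Delta W\setminus W|+(k-1)|W\cap\Delta W|$, and no bookkeeping on $W\cap\Delta W$ turns this into the required inequality.

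The index shift is not harmless either. The inequality $|W\setminus\Delta W|\le (k-1)|\Delta W\setminus W|$ already fails for $k=1$ and $W=\{\rho(N)\}$. Also, in the tree-based case the maximum in the statement is $1$, not $0$: for nonempty $W$, a vertex of $W$ with no ancestor in $W$ lies in $W\setminus\Delta W$, so the ratio is positive.

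\textbf{What the paper does instead.} It proves $\kappa^*(N)\ge\max_W\lceil\cdot\rceil$ directly, with no flow. For a covering subtree $T$, map each $v\in V(T)\cap(W\setminus\Delta W)$ to the first vertex outside $W$ on a $T$-path from $v$ towards a leaf. That vertex lies in $\Delta W\setminus W$. The map is injective because $T$ is a tree, and a collision would place the lower of the two vertices in $\Delta W$. Summing over $\kappa^*(N)$ trees gives $|W\setminus\Delta W|\le\kappa^*(N)\,|\Delta W\setminus W|$.

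For the reverse inequality, the paper splits each vertex into $v_{\mathrm{in}}\to v_{\mathrm{out}}$ with lower bound $1$ and capacity $k$. Flow then passes \emph{through} vertices along directed paths, and $W^*$ is read off a maximizing cut in Hoffman's condition. This vertex-splitting, which your model lacks, is what captures ``trees entering $W$ at $W\setminus\Delta W$ must exit through $\Delta W\setminus W$.''

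\textbf{Step~2.} You rely on a ratio-maximizer that also maximizes $|W|-|\Delta W|$. No uncrossing argument gives this, and the two maximizers differ in general. You do not need it for the inequality: for a ratio-maximizing $W$,
\[
\kappa^*(N)-1=\left\lceil\frac{|W|-|\Delta W|}{|\Delta W\setminus W|}\right\rceil\le\max\{0,\,|W|-|\Delta W|\}\le\delta^*(N),
\]
by Proposition~\ref{prop::deltaSupermodular}.

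Step~3 is fine once the feasibility test is correct.
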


For binary rooted phlyogenetic networks, there exists an alternative characterization of $\delta^*(N)$ compared to \cref{prop::deltaSupermodular}:

\begin{proposition}[Corollary 5.4 in \cite{hayamizu2021structure}]
    Let $N\in\mathcal{N}$. Then, $\delta^*(N)$ equals the number of maximal $W$-fences in $\mathcal{Z}(N)$.
\end{proposition}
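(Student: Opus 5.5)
The plan is to prove the two inequalities $\delta^*(N)\le w(N)$ and $\delta^*(N)\ge w(N)$ separately, where $w(N)$ is the number of W-fences in $\mathcal{Z}(N)$. Throughout I use the formula $\delta^*(N)=\max_{W\subseteq V(N)\setminus X}(|W|-|\Delta W|)$ from \cref{prop::deltaSupermodular}. I also use that \cref{rootedProx}.2 and \cref{rootedProx}.3 are equivalent, so $\delta^*(N)$ equals the minimum number of vertex-disjoint paths partitioning $V(N)$, minus $|X|$.

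\textbf{Lower bound.} For each W-fence $Z$ with expression $v_0>v_1<v_2>\cdots>v_{m-1}<v_m$, let $W_Z$ be its tails $\{v_0,v_2,\dots,v_m\}$. These are $m/2+1$ vertices. Their children inside $Z$ are the $m/2$ heads $v_1,v_3,\dots,v_{m-1}$. Since $Z$ is maximal and $N$ is binary, every out-edge of a tail of $Z$ lies in $Z$. Indeed, zig-zag trails continue through a shared tail, so all out-edges of a tail belong to the same maximal trail. Hence $\Delta W_Z$ is exactly the set of heads of $Z$. No tail is a leaf, because tails have out-degree at least one.

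Now put $W=\bigcup_Z W_Z$ over all W-fences. By \cref{thm::TB} each edge lies in exactly one maximal zig-zag trail, and each vertex is the tail of out-edges from only one trail. So the sets $W_Z$ are disjoint, and likewise the head sets, since all in-edges of a vertex lie in one trail. Therefore $|W|-|\Delta W|=\sum_Z 1=w(N)$. This gives $\delta^*(N)\ge w(N)$.

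\textbf{Upper bound.} I would construct a spanning path partition with exactly $|X|+w(N)$ paths. Each maximal zig-zag trail pairs tails with heads by selecting alternate edges, and I choose these edges as follows:
\begin{itemize}
\item in a crown, every second edge, which is a perfect matching of its tails and heads;
\item in an N-fence, the edges $v_0v_1,v_2v_3,\dots$, which match all tails and all heads;
\item in an M-fence, a matching that saturates all $m/2$ tails but leaves one head unmatched;
\item in a W-fence, a matching that leaves one tail unmatched.
\end{itemize}
The selected edges give every vertex in-degree at most one and out-degree at most one, and the graph is acyclic, so they form a path partition. The number of paths equals the number of path-end vertices, i.e. vertices with no selected out-edge. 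These are the leaves, plus exactly one unmatched tail per W-fence; every other non-leaf vertex is a tail in some trail and gets matched. Hence the partition has $|X|+w(N)$ paths, which gives $\delta^*(N)\le w(N)$.

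Alternatively, via \cref{prop::deltaSupermodular}, I would show that for any $W$ the difference $|W|-|\Delta W|$ decomposes trail-by-trail and is at most $1$ per W-fence and at most $0$ per other trail.

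The main obstacle is the bookkeeping for the claim that out-edges, and in-edges, of a single vertex lie in a common maximal trail. It must hold uniformly, including for the root, which is a tail only, and for vertices of in-degree two and out-degree one, which are heads in one trail and tails in another. This needs a careful use of binarity together with the uniqueness in \cref{thm::TB}.
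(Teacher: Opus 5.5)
Your proof is correct. The paper gives no argument of its own here; it only cites Hayamizu's Corollary~5.4. That source's reasoning is essentially yours: the maximal zig-zag trails are exactly the nontrivial connected components (paths and even cycles) of the parent/child bipartite graph $G_N$ from the proof of \cref{prop::deltaSupermodular}, and the matching deficiency is computed trail by trail, which is what your Hall witness $W=\bigcup_Z W_Z$ and your alternate-edge path partition do.

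The one detail you leave implicit is that the tail positions of a fence are pairwise distinct vertices, and likewise the head positions. An internal tail position already uses both out-edges of its vertex, and coinciding endpoints would close the trail into a crown. This is what gives $|W_Z|=m/2+1$ and makes your alternate-edge selections genuine matchings. A vertex occurring once as a head and once as a tail in the same fence does no harm to either count.
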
 

Let $N$ be a rooted almost-binary phylogenetic network and let $\mathcal{W}_N$ be the set of all W-fences in $\mathcal{Z}(N)$. Let $Z$ and $Z'$ be a M-fence and a W-fence in $\mathcal{Z}(N)$, respectively.  We call $(Z,Z')$ a \emph{M-W-pair} in $N$ if there exists $v\in V(Z)\cap V(Z')$ with $(u,v)\in E(Z)$ and $(v,w)\in E(Z')$. We call a set of M-W-pairs in $N$ with pairwise distinct M-fences and W-fences an \emph{M-W-matching} in $N$. We say an M-W-matching $S$ in $N$ is \emph{$\mathcal{W}_N$-saturated} if $|S| = |\mathcal{W}_N|$.

\begin{theorem}\label{thm:saturated}[Theorems 11 and 19 in~\cite{suzuki2024bridging}]
    Let $N$ be a rooted almost-binary phylogenetic network. Then, $\delta^\ast(N)\leq\eta^\ast(N)$. Equality holds if and only if there exists a $\mathcal{W}_N$-saturated M-W matching in $N$.
\end{theorem}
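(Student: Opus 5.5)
Since $\delta^*(N)$ is the number of W-fences in $\mathcal{Z}(N)$, we need two things: the inequality $|\mathcal{W}_N|\leq\eta^*(N)$, and the claim that equality holds exactly when every W-fence can be paired injectively with an M-fence that "feeds into" it.

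\textbf{Lower bound.} Fix a maximum covering subtree $T$ of $N$. A W-fence $Z=v_0>v_1<\dots>v_{m-1}<v_m$ has $m/2+1$ tails $v_0,v_2,\dots,v_m$ and $m/2$ heads $v_1,v_3,\dots,v_{m-1}$. Each head is a reticulation or leaf whose incoming edges all lie in $Z$. Each tail is a tree vertex (or the root) whose outgoing edges all lie in $Z$. In $T$, each head that is present keeps exactly one incoming edge, and that edge comes from a tail of $Z$. A tail of $Z$ can lie in $T$ without being a leaf of $T$ only if it keeps some outgoing edge of $Z$. By counting, some tail of $Z$ keeps no outgoing edge, so it is either a non-leaf-of-$N$ leaf of $T$ or absent from $T$. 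A covering subtree cannot have extra leaves: every leaf of $T$ is a leaf of $N$, since $T$ has the same leaf set. Hence each W-fence forces at least one vertex outside $T$. Distinct W-fences have disjoint sets of tails, because the fences partition $E(N)$ and a tail's outgoing edges lie in one fence. This gives $\eta(N,T)\geq|\mathcal{W}_N|$, so $\delta^*(N)\leq\eta^*(N)$.

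\textbf{Saturated matching implies equality.} Let $S$ be a $\mathcal{W}_N$-saturated M-W-matching, and for each pair $(Z,Z')\in S$ fix a shared vertex $v$ with $(u,v)\in E(Z)$ and $(v,w)\in E(Z')$. Build $T$ as follows.
\begin{itemize}
\item In each W-fence $Z'$, delete the tail $v$ together with its outgoing edges. The remaining path $Z'-v$ has equal numbers of tails and heads, and we pick a perfect pairing of each head to an adjacent remaining tail.
\item In the matched M-fence $Z$, $v$ is one of the heads. An M-fence has one more head than tails, so choosing edges so that every head except $v$ receives an edge is a matching problem along a path with a free endpoint at $v$. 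This works because $v$ is an endpoint-type head on the alternating path, or it splits the path into two N-fence-like pieces.
\item In N-fences and crowns, and in unmatched M-fences, choose a spanning in-edge selection in the standard way.
\end{itemize}
Every vertex other than the removed tails then has exactly one parent, and acyclicity gives a tree containing all leaves. Some dangling non-leaf branches may remain, namely vertices whose every chosen child was removed. Handling these is where care is needed: I would argue that the vertex $v$, having lost its M-edge, is exactly the removed vertex, so no extra vertex becomes a spurious leaf. This yields $\eta(N,T)=|\mathcal{W}_N|$.

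\textbf{Equality implies a saturated matching.} Take $T$ with $\eta(N,T)=|\mathcal{W}_N|$. By the lower-bound argument, each W-fence $Z'$ has exactly one vertex absent from $T$, and it is a tail $v$ of $Z'$; all other vertices of $N$ lie in $T$. Since $v\neq\rho(N)$, $v$ has one incoming edge $(u,v)$, and that edge lies in an M-fence or N-fence $Z$ in which $v$ is a head. Because $v\notin T$, the edge $(u,v)$ is unused. I would then count heads versus tails in $Z$ to show that every N-fence and crown must use all of its "slack". The missing edge must therefore come from an M-fence, since an N-fence has equal numbers of heads and tails and losing the edge would strand a tail of $Z$ as a non-leaf leaf or force another absence. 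Each M-fence has slack exactly one, so distinct W-fences get distinct M-fences, giving a saturated matching.

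The main obstacle is this final step: rigorously showing that dropping the in-edge of $v$ can only be absorbed by an M-fence, and at most once per M-fence. I would attack it by a local head/tail counting argument inside each maximal zig-zag trail, using the edge partition of \cref{thm::TB}.
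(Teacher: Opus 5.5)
The paper does not prove this statement; it quotes it from Theorems 11 and 19 of Suzuki et al. So there is no in-paper route to compare with, and your argument has to stand on its own. It does, once the step you flagged is filled in.

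\textbf{Lower bound and the \emph{if} direction.} The lower bound is complete, and the construction in the \emph{if} direction is correct. Your closing worry there about dangling branches needs no extra argument, because none can occur:
\begin{itemize}
\item A removed vertex $v$ is a head only of its matched M-fence, and there you select no edge into $v$. So no selected edge ever enters a removed vertex.
\item Your per-fence selections already give every surviving tail at least one selected out-edge. In the matched M-fence, the pairings on the one or two pieces left after skipping $v$ give the tails next to $v$ other partners.
\end{itemize}
Hence the leaves of $T$ are exactly those of $N$.

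\textbf{The \emph{only if} direction.} The step you leave open closes by exactly the counting you suggest, applied to every trail at once.
\begin{itemize}
\item Let $A=V(N)\setminus V(T)$. For $Y\in\mathcal{Z}(N)$, let $Y$ have $t(Y)$ tails and $h(Y)$ heads, of which $t_A(Y)$ and $h_A(Y)$ lie in $A$.
\item Every head of $Y$ outside $A$ is a non-root vertex of $T$. So it has exactly one in-edge in $T$, and that edge lies in $Y$. Hence $|E(T)\cap E(Y)|=h(Y)-h_A(Y)$.
\item Every tail of $Y$ outside $A$ is a non-leaf of $N$ lying in $T$. So it has an out-edge in $T$, which lies in $Y$ and is distinct for distinct tails.
\item Together these give $t(Y)-t_A(Y)\le h(Y)-h_A(Y)$.
\item By your lower bound and $|A|=|\mathcal{W}_N|$, $A$ consists of exactly one tail of each W-fence and nothing else. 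So $t_A(Y)=1$ for W-fences and $t_A(Y)=0$ otherwise.
\item The inequality then forces $h_A(Y)=0$ for N-fences, crowns and W-fences, and $h_A(Y)\le 1$ for M-fences.
\item The absent tail $v$ of a W-fence $Z'$ is not the root, so it is a head of exactly one trail $Z$, which must be an M-fence.
\item An in-edge of $v$ in $Z$ together with an out-edge of $v$ in $Z'$ makes $(Z,Z')$ an M-W-pair. Since $h_A(Z)\le 1$, the map $Z'\mapsto Z$ is injective, which gives a $\mathcal{W}_N$-saturated M-W matching.
\end{itemize}

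This also covers two points your sketch got wrong or skipped:
\begin{itemize}
\item You only considered M-fences and N-fences as the trail receiving the in-edges of $v$. A priori it could also be a crown or another W-fence, and the inequality rules these out.
\item $v$ need not have a single in-edge. An endpoint tail of a W-fence can be a reticulation, with both of its in-edges in $Z$.
\end{itemize}
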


In total, we conclude from \cref{prop::kappaRatio} and \cref{thm:saturated} that the known tree-basedness proximity measures on rooted almost-binary phylogenetic networks (see \cref{rootedProx}) are related as follows:
\begin{corollary}\label{cor::proximitySummary}
    Let $N$ be a rooted almost-binary phylogenetic network. Then, $\kappa^*(N)-1\leq\delta^*(N)\leq\eta^*(N)$.
\end{corollary}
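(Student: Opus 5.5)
The corollary follows by chaining the two displayed inequalities already established, so the plan is to verify that each earlier result applies in the rooted almost-binary setting and then concatenate them.

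\emph{First inequality, $\kappa^*(N)-1\leq\delta^*(N)$.} I invoke \cref{prop::kappaRatio}, which is stated for an arbitrary rooted phylogenetic network $N$ with leaf set $X$. In particular it covers rooted almost-binary networks. The ``in particular'' clause gives a set $W^*\subseteq V\setminus X$ with
$\kappa^*(N)-1=\lceil\delta^*(N)/|\Delta W^*\setminus W^*|\rceil$.

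\begin{itemize}
\item If $|\Delta W^*\setminus W^*|\geq 1$, then the right-hand side is at most $\lceil\delta^*(N)\rceil=\delta^*(N)$, because $\delta^*(N)$ is a nonnegative integer and dividing by a positive integer cannot increase it.
\item The only delicate point is the degenerate case $\Delta W^*\setminus W^*=\emptyset$, where the ratio is undefined.
\end{itemize}

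To rule out that degenerate case, I argue that for any nonempty $W\subseteq V\setminus X$ we have $\Delta W\not\subseteq W$. Every non-leaf vertex has an out-neighbour, so the set $W\cup\Delta W$ is closed under taking children. By acyclicity it must therefore contain a sink, that is, a leaf. Since $W$ contains no leaves, that leaf lies in $\Delta W\setminus W$.

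If instead $W=\emptyset$, the maximum in \cref{prop::kappaRatio} is not attained there anyway, and the trivial bound $\kappa^*(N)\geq 1$ handles it. So the ratio is well defined for the relevant $W^*$, and the first inequality follows as stated in \cref{prop::kappaRatio}.

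\emph{Second inequality, $\delta^*(N)\leq\eta^*(N)$.} I apply \cref{thm:saturated} directly. It is stated exactly for rooted almost-binary phylogenetic networks, and its first assertion is $\delta^*(N)\leq\eta^*(N)$.

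Combining the two inequalities gives $\kappa^*(N)-1\leq\delta^*(N)\leq\eta^*(N)$. I expect no real obstacle: the only step needing care is the well-definedness of the ratio in \cref{prop::kappaRatio}, handled by the leaf argument above.
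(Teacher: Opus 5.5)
Your proof is correct and is the paper's own argument: the corollary is obtained by chaining $\kappa^*(N)-1\leq\delta^*(N)$ from \cref{prop::kappaRatio} with $\delta^*(N)\leq\eta^*(N)$ from \cref{thm:saturated}. Your extra well-definedness check is unnecessary, since the proposition already states the first inequality; also, as written, the claim that $W\cup\Delta W$ is closed under taking children holds only under the contradiction hypothesis $\Delta W\subseteq W$, and a cleaner argument is to follow a directed path from some $w\in W$ to a leaf and observe that its first vertex outside $W$ lies in $\Delta W\setminus W$.
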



\begin{figure}[!b]
\centering
\includegraphics[scale=0.4]{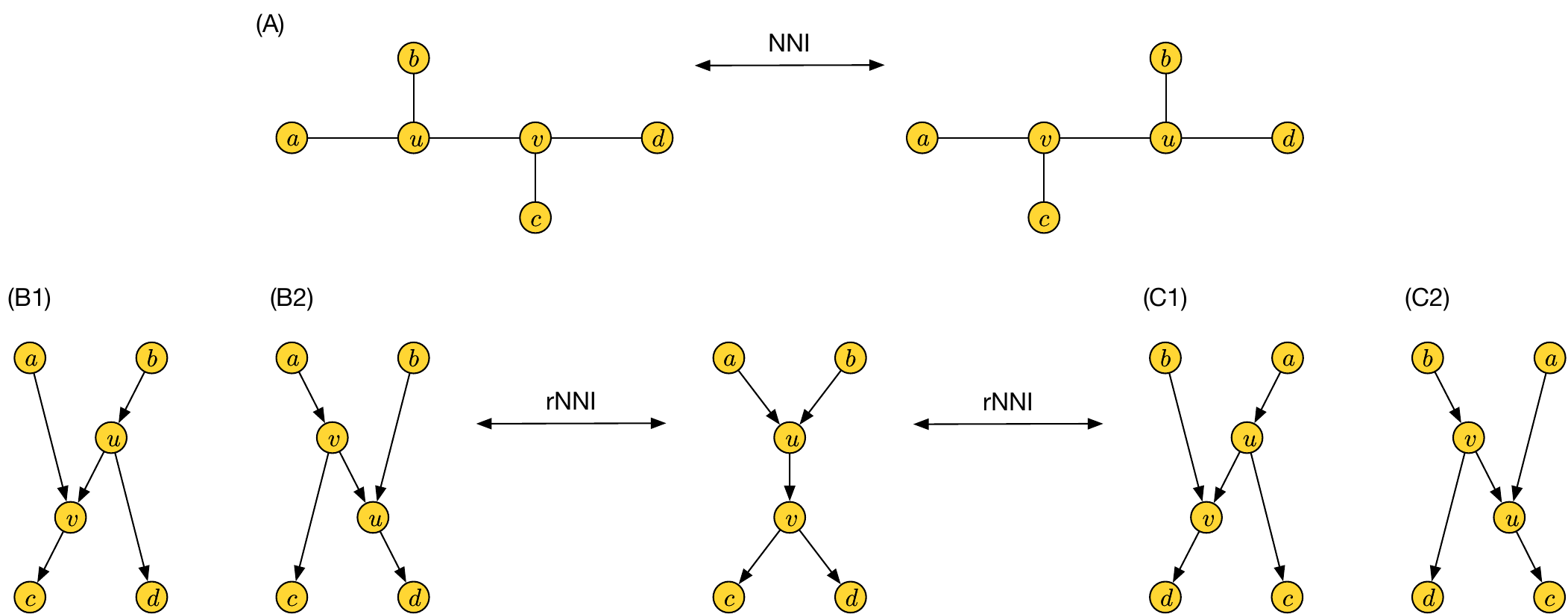}
\caption{In (A) the graph on the right is obtained by a NNI on edge $\{u,v\}$ from the graph on the left and vice versa. In (B1), (B2), (C1) and (C2) the graphs on the left and right are obtained by a rNNI on $(u,v)$ for the graph in the middle and vice versa by maintaining the direction of $(u,v)$ and reversing it, respectively.}\label{fig:NNI}
\end{figure}

\subsection{Topological rearrangement operation of graphs}
The topological rearrangement move NNI shown in Figure~\ref{fig:NNI}(A) has been introduced to the study of both unrooted binary phylogenetic trees~\cite{robinson1971comparison} and networks~\cite{huber2016transforming,huber2016spaces}. 
In the case of rooted binary phylogenetic networks $N$, a rooted version of NNI on an edge $(u,v)\in E(N)$ changes the incidence relations between vertices and edges of $N$ analogously to a NNI with respect to the following properties~\cite{gambette2017rearrangement}:
\begin{enumerate}
    \item there exist pairwise distinct vertices $b$ and $c$ incident to vertices $u$ and $v$, respectively, such that the existence and direction of edges on $\{b,u\},\{c,v\}$ is preserved;
    \item the in- and outdegrees of $u$ and $v$ remain at most two;
    \item the acyclicity of $N$ is preserved.
\end{enumerate}
Furthermore, whenever there is a degree of freedom for the direction of the edge on $\{u,v\}$ after applying a rNNI on $(u,v)$, there exist two distinct rNNIs: one which maintains edge $(u,v)$ and one which reverses its direction. For example, the two rNNIs on $(u,v)$  shown in Figures~\ref{fig:NNI}(B1) and~(B2) only differ in the choice of maintaining edge $(u,v)$ or reversing its direction. This distinction in rNNIs on $(u,v)$ generates symmetries because internal vertices of a phylogenetic network are not labeled. Furthermore, in Figures~\ref{fig:NNI}(B1) and~(B2) property~(1) of the rNNI on $(u,v)$ is satisfied by edges $(b,u)$ and $(v,c)$. In contrast, Figure~\ref{fig:NNI}(C1) and~(C2) shows two rNNIs on $(u,v)$ which satisfy property~(1) for edges $(a,u)$ and $(v,d)$. Hence, while we keep reusing vertices named $a,b,c,d$, the role of vertices $b$ and $c$ in property~(1) is particular and not linked to our naming conventions in later proofs. Notice that the graphs in (B1) and (B2) are equivalent to the graphs in (C2) and~(C1), respectively, up to the relabeling of vertices $u$ and $v$. This means, also the choice of edges in property~(1) generates symmetries. Their direction plays a role, too. Specifically, in the literature a rNNI on $(u,v)$ which does not change the in- and outdegree of all vertices incident to $u$ or $v$ is also called a distance-1 head (tail) move if at least one of the edges on $\{b,u\}$ and $\{c,v\}$ has its head (tail) in $\{u,v\}$~\cite{janssen2021rearranging}. This means rNNIs are a proper generalization of distance-1 head and tail moves in which we allow a change of the in- and outdegree for vertices incident to $u$ and $v$ who do not satisfy property (1) of the rNNI on $(u,v)$.

We write either $\sigma=e_1\to_He_2$ or $\sigma=e_1\to_Te_2$ if $u$ or $v$ is the head or tail of $e_1$ in $N$, respectively, and $e_1$ satisfies property $(1)$ of the rNNI on $(u,v)$ while $e_2$ does not. Similarly, we denote by either $\sigma=e_1\circlearrowleft_H e_2$ or $\sigma=e_1\circlearrowleft_T e_2$ the rNNI operation which functions like $e_1\to_H e_2$ or $e_1\to_T e_2$, respectively, except that the direction on $\{u,v\}$ in $\sigma(N)$ is flipped. For example, for $N$ in the middle and bottom of \cref{fig:NNI}, we obtain the phylogenetic network $\sigma(N)$ in
\begin{description}
    \item[(B1)] for $\sigma=(b,u)\to_H(v,d)=(v,c)\to_T(a,u)$;
    \item[(B2)] for $\sigma=(b,u)\circlearrowleft_H(v,d)=(v,c)\circlearrowleft_T(a,u)$;
    \item[(C1)] for $\sigma=(a,u)\to_H(v,c)=(v,d)\to_T(b,u)$;
    \item[(C2)] for $\sigma=(a,u)\circlearrowleft_H(v,c)=(v,d)\circlearrowleft_T(b,u)$.
\end{description}
Since reversing the direction on $\{u,v\}$ produces the aforementioned symmetries, we can obtain $\sigma(N)=\sigma'(N)$ for pairwise distinct rNNI operations $\sigma$, $\sigma'$. For example, for $\sigma=(b,u)\to_H(v,d)$ and $\sigma'=(a,u)\circlearrowleft_H(v,c)$. A rNNI operation $\sigma$ is well-defined as long as the acylicity of $\sigma(N)$ is ensured, as detailed in Lemma~1 in~\cite{gambette2017rearrangement}. We will not make use of this lemma because we will consider acyclicity as an ancillary condition throughout the article.

We denote the number of rNNI operations to transform $N\in\mathcal{N}$ into a tree-based network $N'\in\mathcal{N}_{\text{TB}}$ by $d_{\text{rNNI}}(N,N')$. Then, $d_{\text{rNNI}}(N,N')$ is well-defined~\cite{francis2018bounds}.
Moreover, we define the central problem of this article, a tree-basedness proximity measure based on rNNI operations:
\begin{problem}\label{problNNI}
    Let $N\in\mathcal{N}$. Then we call
            \begin{align*}
                \delta_{\text{rNNI}}(N)=\min\left\{d_{\text{rNNI}}(N,N')\,:\,N'\in\mathcal{N}_{\text{TB}}\right\}
            \end{align*}
            the \emph{rNNI tree-basedness proximity measure} (NTBP) of $N$.
\end{problem}

\section{Combinatorial relationships between the NTBP and other proximity measures of tree-basedness}\label{sec:rNTBPcomb}
In this section we study the new NTBP by comparing it to the ATBP and CTBP to prove the following variation of~\cref{cor::proximitySummary}:
\begin{theorem}\label{prop::rNNISummary}
    Let $N\in\mathcal{N}$. Then, $\kappa^*(N)-1\leq\delta^*(N)\leq\delta_{\text{rNNI}}(N)\leq\eta^*(N)$.
\end{theorem}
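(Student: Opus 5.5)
The first inequality $\kappa^*(N)-1\leq\delta^*(N)$ is already contained in \cref{cor::proximitySummary}, so two inequalities remain: $\delta^*(N)\leq\delta_{\text{rNNI}}(N)$ and $\delta_{\text{rNNI}}(N)\leq\eta^*(N)$. For both I will use the zig-zag decomposition of \cref{thm::TB} together with the facts that $\delta^*(N)$ equals the number of W-fences in $\mathcal{Z}(N)$ (Corollary~5.4 of \cite{hayamizu2021structure}) and that $N$ is tree-based if and only if $\mathcal{Z}(N)$ has no W-fence (\cref{thm::TB::Wfence}).

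\emph{Lower bound $\delta^*(N)\leq\delta_{\text{rNNI}}(N)$.} The key lemma is local: for every $N\in\mathcal{N}$ and every rNNI $\sigma$ on an edge $(u,v)$, the number of W-fences changes by at most one, so $|\mathcal{W}_{\sigma(N)}|\geq|\mathcal{W}_N|-1$. Iterating this along an optimal sequence of $\delta_{\text{rNNI}}(N)$ moves that ends in a network with zero W-fences gives the bound.

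To prove the lemma, note that an rNNI changes only the edges incident to $u$ and $v$. These are at most five edges, namely $(u,v)$ and the two further edges at each endpoint. Every maximal zig-zag trail avoiding these edges survives unchanged. So only the trails through $u$ or $v$ are affected. Since vertices are binary, each vertex lies on at most two maximal zig-zag trails: one through its in-edges and one through its out-edges. Hence at most four trails are involved, and they are cut at $u,v$ and reglued. I would then run through the moves $e_1\to_H e_2$, $e_1\to_T e_2$ and their flipped versions, up to the symmetries noted after \cref{fig:NNI}, and the in/out-type cases of $u$ and $v$, and check that each move destroys at most one W-fence net.

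A cleaner route I would try first is parity counting. A W-fence is exactly a fence whose two end edges both have their endpoint vertex as head, that is, it ends at reticulation-type ends. So I would count fence endpoints of each kind before and after the move, which only requires looking at the affected vertices. This case analysis is the main obstacle, and acyclicity plays no role in it.

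\emph{Upper bound $\delta_{\text{rNNI}}(N)\leq\eta^*(N)$.} Take a maximum covering subtree $T$ with $\eta(N,T)=\eta^*(N)$, and let $U=V(N)\setminus V(T)$. The plan is to remove each vertex of $U$ from obstructing tree-basedness with one rNNI. Each $w\in U$ is an omitted vertex, which must be a tree vertex or a reticulation whose incident edges all lie outside $T$, sitting between vertices of $T$. For such a $w$ I choose an edge $(u,v)$ incident to $w$ and an rNNI that reattaches $w$ so that it subdivides an edge of $T$, or becomes coverable by extending $T$. This increases $|V(T)|$ by at least one while $T$ stays a covering subtree of the new network, so $\eta$ drops by at least one per move. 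After $\eta^*(N)$ moves the network is tree-based.

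Alternatively, one could use \cref{thm:saturated} in the form $\eta^*-\delta^*\geq0$, together with M-W-pairs, to kill one W-fence per move whenever a matched M-fence is adjacent. But the direct covering-subtree argument seems more robust, since it needs to handle only one omitted vertex at a time. The points to verify are that the chosen rNNI keeps the graph acyclic and that it does not remove vertices already covered by $T$.
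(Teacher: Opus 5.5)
Your decomposition is the paper's: \cref{cor::proximitySummary} gives the first inequality, the last claim of \cref{prop:rNNI:1} (no rNNI lowers $\delta^*$ by two) gives $\delta^*\le\delta_{\text{rNNI}}$, and \cref{prop:rNNI:2} iterated gives $\delta_{\text{rNNI}}\le\eta^*$. However, both lemmas are only announced, and each sketch has a concrete hole.

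\textbf{Lower bound.} Endpoint counting alone cannot fix $|\mathcal{W}_N|$. Every vertex of out-degree one is the end of exactly one fence at a \emph{tail}, and that is where W-fences end. In the paper's notation $v_0>v_1$ denotes the edge $(v_0,v_1)$, so your ``head'' is reversed. Every vertex of in-degree one is the end of exactly one fence at a head. Counting therefore only yields
\[
2(|\mathcal{W}_N|-|\mathcal{M}_N|)=|\{x:d_{\text{out}}(x)=1\}|-|\{x:d_{\text{in}}(x)=1\}|.
\]
This difference is rNNI-invariant, because moved edges keep their orientation at the far endpoint and the total in-degree of $\{u,v\}$ is preserved. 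Trails avoiding the edges at $u,v$ also survive unchanged. So if $w,m$ (resp.\ $w',m'$) count the W- and M-fences among the touched trails before (resp.\ after) the move, you get $w-w'=m-m'\le m$. To finish you need $w+m\le 3$, i.e.\ that at most \emph{three} trails are touched. This holds because the out-trail of $u$ and the in-trail of $v$ both contain $(u,v)$. With your bound of four, the case $w=m=2$, $w'=m'=0$ is not excluded. Once corrected, the parity route is a complete proof, and it is shorter than the case analysis behind \cref{prop:rNNI:1}.

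\textbf{Upper bound.} The step ``for each $w\in U$ choose an edge incident to $w$'' fails for omitted vertices whose neighbours are all omitted. Moreover, existence of the move, acyclicity and preservation of $T$ are exactly what you leave open. The missing idea is how to choose the move.
\begin{itemize}
\item Take $(u,w)\in E(N)$ with $u\in V(T)$ and $w\notin V(T)$. It exists since $U\neq\emptyset$ and every vertex is reachable from $\rho(N)\in V(T)$.
\item Then $u$ is a tree vertex whose other child $a$ satisfies $(u,a)\in E(T)$; otherwise $u$ would be a leaf of $T$ that is not a leaf of $N$.
\item Maximality of $T$ forbids $(w,a)\in E(N)$, so no parallel edge is created.
\item The rNNI on $(u,w)$ exchanging $(u,a)$ with an out-edge $(w,d)$ makes $w$ subdivide $u\to a$. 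It changes no other edge of $T$, since $(w,d)\notin E(T)$. Hence $\eta^*$ drops by at least one. This is the move of \cref{fig:etaUB}(A), and it also handles a reticulation $w$.
\item A cycle can only arise if $w$ is a reticulation whose other parent is reachable from $a$. Taking $u$ last in a topological order among tails of edges leaving $V(T)$ excludes this: that parent, or the path to it, would supply such an edge with a later tail.
\end{itemize}
Iterating yields a tree-based network after at most $\eta^*(N)$ moves.
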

In addition, we show that the NTBP is both pairwise distinct and independent from the ATBP and CTBP. Moreover, we apply rNNI operations to characterize the ATBP, describe the CTBP and solve the NTBP exactly.

\subsection{A characterization of optimal solutions to the ATBP}\label{sec:rNTBPcomb1}
First, we characterize the ATBP of a rooted binary phylogenetic network by necessary and sufficient conditions on rNNI operations. To this end, we broaden the notion of M-W-pairs in a rooted binary phylogenetic network~$N$. Let $Z$ and $Z'$ be a M-fence and a W-fence in $\mathcal{Z}(N)$, respectively. Recall that $(Z,Z')$ is a M-W-pair in $N$ if there exist vertex $v\in V(Z)\cap V(Z')$ and edges $(u,v)\in E(Z)$, $(v,w)\in E(Z')$. We call $v$ the \emph{connecting vertex} of $(Z,Z')$. Then, for some N-fence or crown $Y\in\mathcal{Z}(N)$, we call $(Z,Z')$ a
\begin{itemize}
    \item \emph{distance-1 M-W-pair} in $N$ if there exists an edge $(v_1,v_2)\in E(Y)$ such that $v_1\in V(Z)$, $v_2\in V(Z')$ and there exists no directed path from $v_1$ to $v_2$ of length larger than one in $N$;
    \item \emph{distance-1 W-M-pair} in $N$ if there exists an edge $(v_1,v_2)\in E(Y)$ such that $v_1\in V(Z')$ and $v_2\in V(Z)$;
    \item \emph{$\vee$-sibling M-W-pair} if there exist edges $(v_1,v_3),(v_2,v_3)\in E(Y)$ such that $v_1\in V(Z)$, $v_2\in V(Z')$ and there exists no directed path from $v_1$ to $v_2$;
    \item \emph{$\wedge$-sibling M-W-pair} if there exist edges $(v_3,v_1),(v_3,v_2)\in E(Y)$ such that $v_1\in V(Z)$, $v_2\in V(Z')$ and there exists no directed path from $v_1$ to $v_2$.
\end{itemize}
Observe that the non-existence of a directed path from $v_1$ to $v_2$ in the definition of a distance-1 W-M-pair is ensured by the degree constraints in $N$ and therefore omitted from the definition. We call the edge defining a distance-1 M-W-pair or W-M-pair $(Z,Z')$ the \emph{connecting edge} of $(Z,Z')$.

\begin{theorem}\label{prop:rNNI:1}
    Let $N\in\mathcal{N}$. Then, there exists a rNNI operation $\sigma$ such that $\delta^\ast(\sigma(N))=\delta^\ast(N)-1$ if and only if at least one of the following conditions holds: in $N$ there exists a
        \begin{enumerate}
            \item M-W-pair;
            \item distance-1 M-W-pair $(Y,Z)$ such that its connecting edge is incident to the endpoint of $Z$;
            \item distance-1 W-M-pair;
            \item $\vee$-sibling M-W-pair;
            \item $\wedge$-sibling M-W-pair.
        \end{enumerate}
    In addition, there exists no rNNI operation $\sigma$ in~$N$ such that $\delta^\ast(\sigma(N))\leq\delta^\ast(N)-2$.
\end{theorem}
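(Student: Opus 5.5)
The plan is to work entirely through \cref{thm::TB}, \cref{thm::TB::Wfence} and the fact (Corollary~5.4 in~\cite{hayamizu2021structure}) that $\delta^\ast(N)$ equals the number of W-fences in $\mathcal{Z}(N)$. A rNNI $\sigma$ on $(u,v)$ changes incidences only locally. In a binary network every vertex lies on at most two maximal zig-zag trails: a tree vertex lies on one trail through its out-edges and one through its in-edge, and a reticulation the other way round. Hence $\sigma$ only modifies the trails containing edges incident to $u$ or $v$. Call this set $\mathcal{Z}_\sigma\subseteq\mathcal{Z}(N)$; the trails outside it appear unchanged in $\mathcal{Z}(\sigma(N))$. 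The first step is therefore a bookkeeping lemma. Removing the edge on $\{u,v\}$ together with the two edges that are reattached cuts each trail in $\mathcal{Z}_\sigma$ at these edges into pieces. The trails of $\mathcal{Z}(\sigma(N))$ through $u,v$ arise by gluing these pieces along the new edges. The type of a fence (N, M or W) is determined by the directions of its two end edges, so I only need to track which end types get glued together.

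For the final claim ($\delta^\ast$ cannot drop by $2$), I would argue by counting endpoints. A W-fence has two ``in-ends'', i.e.\ end edges pointing into the endpoint, and an M-fence has two ``out-ends''. Since $\sigma$ touches only $u$ and $v$, at most three edges change their incidence. Consequently the multiset of trail-ends changes by at most a bounded amount, and W-ends can only be destroyed by being glued to another end at $u$ or $v$. A decrease of $2$ would require four W-ends to disappear. I expect a case analysis over the few local configurations of Figure~\ref{fig:NNI} (heads/tails at $u,v$, with or without flipping $\{u,v\}$) to show that at most two W-ends can disappear, and that any two that disappear belong to the same W-fence or are compensated by a new W-fence. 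A cleaner invariant would be the parity of $\#\text{W}-\#\text{M}$ per connected piece: the sum over trails of $(\text{out-ends}-\text{in-ends})/2$ changes by at most $1$ under a local move.

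For sufficiency I construct $\sigma$ explicitly in each of the five cases. In case~1, the connecting vertex $v$ with $(u,v)\in E(Z)$ and $(v,w)\in E(Z')$ allows an rNNI on $(u,v)$ or $(v,w)$ that splices the M-fence onto the W-fence and yields an N-fence, in the spirit of \cref{thm:saturated}. In cases~2--5 the N-fence or crown $Y$ mediates: the rNNI on the connecting edge, or on an edge at the shared vertex $v_3$, reattaches an end of $Z$ and an end of $Z'$ so that an M-end meets a W-end. The result is that $Z,Z',Y$ recombine into trails with one fewer W-fence; for instance $Y$ is broken open and its halves are glued to $Z$ and $Z'$, producing N-fences. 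The no-directed-path hypotheses are exactly what guarantees acyclicity of $\sigma(N)$. In case~2 the requirement that the connecting edge be incident to the endpoint of $Z$ is needed so that the reattached end is genuinely a W-end.

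Necessity is the main obstacle. Suppose $\delta^\ast(\sigma(N))=\delta^\ast(N)-1$. By the bookkeeping lemma, some W-fence $Z'\in\mathcal{Z}_\sigma$ must be cut at an edge incident to $u$ or $v$, and its W-end must be glued to an M-type end coming from a fence $Z$ that is also in $\mathcal{Z}_\sigma$. I would then enumerate how $Z$ and $Z'$ can both meet $\{u,v\}$. They may share the vertex $u$ or $v$ directly, which gives an M-W-pair. Alternatively, one meets $u$ and the other meets $v$ with $(u,v)$ on a third trail $Y$, which gives a distance-1 pair according to the orientation. Or both are adjacent to a common endpoint of edges of $Y$ at $u$ or $v$, which gives the sibling pairs. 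In each case the acyclicity of $\sigma(N)$ forces the no-path conditions. The delicate subcase is distance-1 M-W where the edge is not at the end of $Z$. Here I would show that gluing in the middle of $Z$ produces a new W-fence from the remaining piece, so the count does not drop. This is exactly why condition~2 carries its extra hypothesis.
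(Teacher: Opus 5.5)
Your plan has the same shape as the paper's proof. Sufficiency is by explicit moves for conditions 1--5, and necessity by a local enumeration around the edge on $\{u,v\}$; the paper does the latter by contraposition, splitting on whether $u$ and $v$ lie in M-fences or W-fences. Your observation that only trails through edges at $u$ or $v$ change is correct. The gap is in the two steps that are meant to do the real work.

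\textbf{The $-2$ claim.} You only say you expect a case analysis, and the invariant you offer cannot replace it. The quantity $\sum(\text{out-ends}-\text{in-ends})/2=\pm(\#W-\#M)$ does not merely change slowly; it is constant. Every non-leaf vertex is the tail vertex of exactly one trail (the one through its out-edges), and every non-root vertex is the head vertex of exactly one trail. A W-fence has one more tail vertex than head vertices, an M-fence one fewer, and N-fences and crowns have equally many. Hence $\#M-\#W=|X|-1$ on all of $\mathcal{N}$, which says nothing about how $\#W$ alone changes.

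\textbf{Necessity.} The step ``the W-end must be glued to an M-type end'' describes an impossible move. A new adjacency at $u$ or $v$ joins two edges with a common head or a common tail. W-fences end in free tails ($v_0>v_1$, $v_{m-1}<v_m$) and M-fences end in free heads; you have these swapped. So nothing in the proposal shows that an M-fence must be among the touched trails.

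\textbf{How to close both gaps.} Use the constancy you overlooked: every rNNI changes $\delta^\ast$ and $\#M$ by exactly the same amount. Each of $u,v$ carries exactly one pair of edges sharing it as head or as tail. So the at most five edges incident to $u$ or $v$ lie in at most three trails of $\mathcal{Z}(N)$, and all other trails reappear unchanged in $\mathcal{Z}(\sigma(N))$. A decrease by $d$ therefore needs at least $d$ W-fences and $d$ M-fences among at most three trails, so $2d\le 3$ and $d\le 1$. When $d=1$, the same count gives for free that an M-fence and a W-fence both contain edges at $u$ or $v$. That is the correct entry point to your enumeration into conditions 1--5.

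This would also improve on the paper. Its contrapositive case analysis only covers networks satisfying none of the five conditions and never separately argues the $-2$ claim. The explicit moves for sufficiency, and the enumeration itself, still have to be carried out case by case, as the paper does via Figure~\ref{fig:rNNI_1}. That includes the role of the no-path conditions and the distance-1 case where the connecting edge is not at the endpoint of the W-fence. In your proposal these are only described in words.
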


\begin{figure}[!b]
\centering
\includegraphics[width=0.8\linewidth]{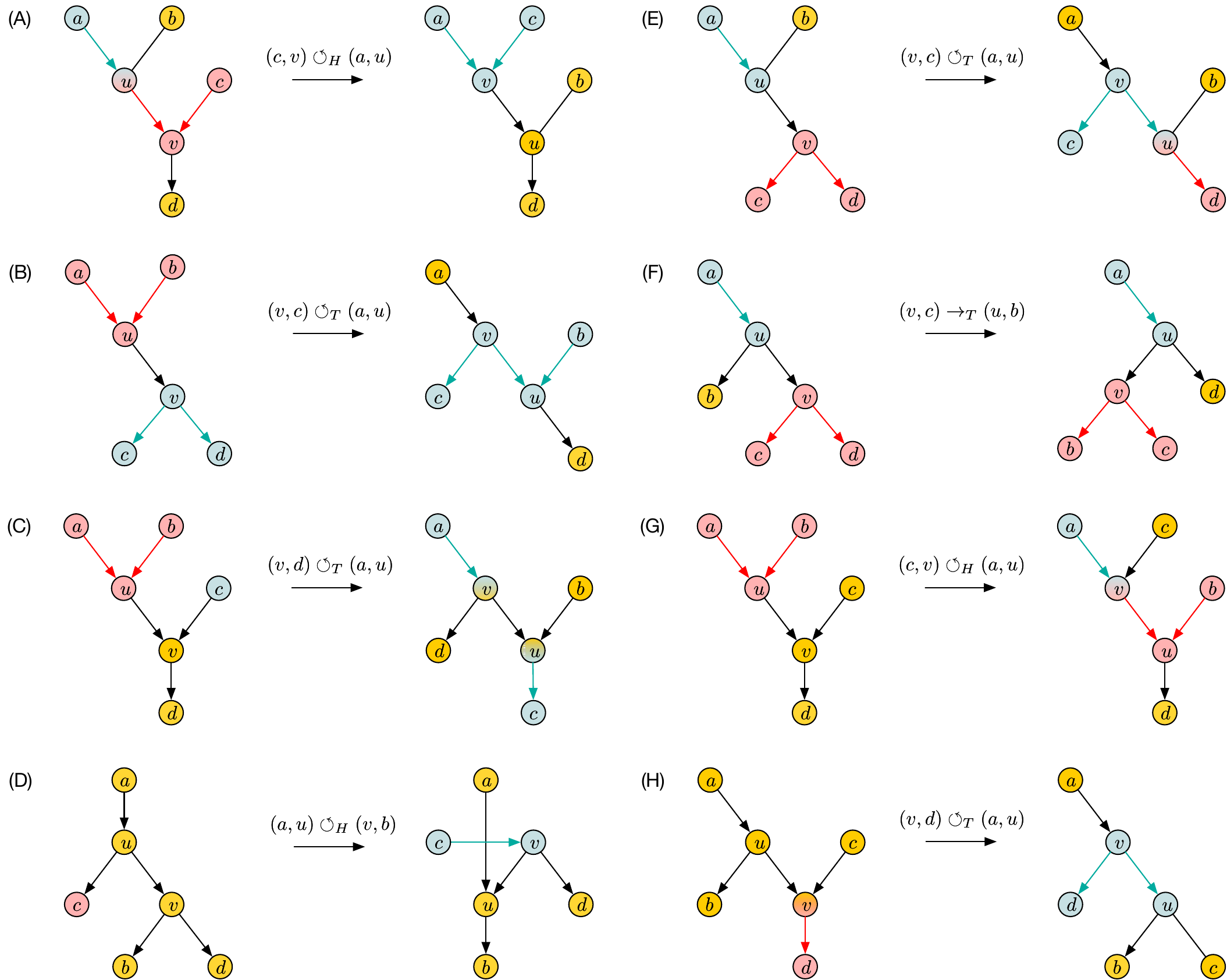}
\caption{Seven rNNI operations for the respective graphs shown under the assumption that all undirected edges are assigned a direction consistent with the definition of the rNNI operations. Colors refer to maximal zig-zag trails in the proof of Proposition~\ref{prop:rNNI:1}.}\label{fig:rNNI_1}
\end{figure}

\begin{proof}
    First, we consider the if-direction:\\
    Condition~1: let $(Y,Z)$ be a M-W-pair in $N$ with connecting vertex $u$. Then, the graph on the left in Figure~\ref{fig:rNNI_1}(A) shows the intersection of a M-fence $Y$ (blue) and a W-fence $Z$ (red) in $u$. Observe that $b$ is either in $V(Y)$ or $V(Z)$ depending on the direction of edge $\{b,u\}$. 
    \begin{description}
        \item[Case 1:] $(b,u)\in E(N)$. Consider the rNNI operation $\sigma=(c,v)\circlearrowleft_H(a,u)$ as shown in Figure~\ref{fig:rNNI_1}(A). Notice that applying $\sigma$ to $N$ creates a cycle only if $(u,b)\in E(N)$ and there exists a directed path from $b$ to $a$ or $c$ in $N$. Hence, $\sigma$ is well-defined. Then, for $(b,b')\in E(N)$, $b'\neq u$, we have $b,b'\in V(Y)$ because of the maximality of $Y$. Hence, $b$ is in a N-fence in $\mathcal{Z}(\sigma(N))$.
        Since $(a,u)$ and $(c,v)$ are in a M-fence and W-fence in $\mathcal{Z}(N)$, respectively, edges $(a,v),(c,v)$ are in a crown or N-fence in $\mathcal{Z}(\sigma(N))$. In addition, $(v,u)$ forms a maximal N-fence in $\mathcal{Z}(\sigma(N))$, too. Therefore, in total $|\mathcal{W}_{\sigma(N)}|=|\mathcal{W}_{N}|-1$. In other words, $\delta^\ast(\sigma(N))=\delta^\ast(N)-1$.
        \item[Case 2:] $(u,b)\in E(N)$. Then, $(u,b)\in V(Z)$ because of the maximality of $Z$. This means, $b$ is in a W-fence in $\mathcal{Z}(\sigma(N))$ if and only if $d$ is in a W-fence in $\mathcal{Z}(N)$. Therefore, if $\sigma$ from Case~1 preserves the acyclicity of $N$, then we can draw the same conclusion as in Case~1. Otherwise, there exists a directed path from $b$ to $c$ in $N$. Since $(u,b)\in E(Z)$ and $Z$ is a W-fence, there exists an edge $(b',b)\in E(Z)$, $b'\neq u$. Thus, we can substitute $v$ by $b$ to apply Case~1 again.
    \end{description}    
    Condition~2: let $(Y,Z)$ be a distance-1 M-W-pair in $N$ with connecting edge $(u,v)$ such that $u$ and $v$ are the endpoints of $Y$ and $Z$, respectively. In this case the graph on the left in Figure~\ref{fig:rNNI_1}(A) shows all edges incident to $(u,v)$ when considering $(v,d)\in E(Z)$ instead of $(u,v),(c,v)\in E(Z)$. Thus, our claim follows from analogous arguments as under condition~1.
    \\~\\
    Condition~3: let $(Y,Z)$ be a distance-1 W-M-pair in $N$ with connecting edge $(u,v)$. Observe that $u$ and $v$ are not at the end of $Y$ and $Z$, respectively, because $N$ is binary. Hence, all edges incident to $(u,v)$ are shown on the left in Figure~\ref{fig:rNNI_1}(B). The edges in $Y$ and $Z$ are depicted in red and blue, respectively. Then, the rNNI operation $\sigma$ in (B) puts edges $(v,c),(v,u),(b,u)$ in a crown or N-fence. Notice that $\sigma$ can not create a cycle by construction. In addition, edges $(a,v)$ and $(u,d)$ can not be in W-fences in $\mathcal{Z}(\sigma(N))$ by definition of $Y$ and $Z$. Thus, $\delta^\ast(\sigma(N))=\delta^\ast(N)-1$.
    \\~\\
    Condition~4: let $(Y,Z)$ be a $\vee$-sibling M-W-pair in $N$. Then, there exist edges $(u,v),(c,v)\in E(Y')$ for a N-fence $Y'$ such that $c\in V(Y)$ and $u\in V(Z)$. Then, all edges incident to $(u,v)$ are shown on the left in Figure~\ref{fig:rNNI_1}(C) (vertices and edges in $Y$ and $Z$ in blue and red, respectively). We have $d_{\text{out}}(c)=2$ because $Y'$ contains $(c,v)$ and is a N-fence in $\mathcal{Z}(N)$. This means, $c$ is an endpoint of $Y$. Hence, $\sigma=(v,d)\circlearrowleft_T(a,u)$ is well-defined because there exists no directed path from $c$ to $b$ in $N$ by our choice of $(Y,Z)$. $\sigma$ ensures that edges $(a,v),(u,c)\in E(\sigma(N))$ are not in a W-fence in $\mathcal{Z}(\sigma(N))$. Furthermore, $(v,d)$ is in a W-fence in~$\mathcal{Z}(N)$ if and only if edges $(v,d),(v,u),(b,u)$ are in a W-fence in~$\mathcal{Z}(\sigma(N))$. Thus, $\delta^\ast(\sigma(N))=\delta^\ast(N)-1$.
    \\~\\
    Condition~5: let $(Y,Z)$ be a $\wedge$-sibling M-W-pair in $N$. Then, there exist edges $(u,v),(u,c)\in E(Y')$ for a N-fence $Y'$ such that $v\in V(Y)$ and $c\in V(Z)$. Then, all edges incident to $(u,v)$ are shown on the left in Figure~\ref{fig:rNNI_1}(D). We have $d_{\text{in}}(c)=2$ because $Y'$ contains $(u,c)$ and is a N-fence in $\mathcal{Z}(N)$. Therefore, $c$ is an endpoint of $Z$. Then, $\sigma=(a,u)\circlearrowleft_H(v,b)$ is well-defined because there exists no directed path from $b$ to $c$ in $N$ by our choice of $(Y,Z)$. $\sigma$ ensures that edge $(c,v)$ is in a N-fence in $\mathcal{Z}(\sigma(N))$. Since $(v,d),(v,b)\in E(Y)$, edges $(a,u),(v,u),(v,d)$ are not in a W-fence in~$\mathcal{Z}(\sigma(N))$ and $(u,b)$ is in a N-fence in~$\mathcal{Z}(\sigma(N))$. Thus, $\delta^\ast(\sigma(N))=\delta^\ast(N)-1$.
    \\~\\
    Next, we prove the only-if-direction by contraposition: assume conditions~1 to~5 do not hold. Moreover, we do not need to distinguish the existence of directed paths from conditions~1 to~5 because their only purpose is to ensure the acyclicity of rNNI operations. Let $(u,v)\in E(N)$. Then, we consider all possible rNNI operations on $(u,v)$ which are well-defined after excluding conditions~1 to~5.
    \begin{description}
        \item[Case 1:] There exists a distance-1 M-W-pair $(Y,Z)$ with connecting edge $(u,v)$ such that $v$ is not an endpoint of $Z$. Then, for $v$ not at the end of $Z$, the graph on the left in Figure~\ref{fig:rNNI_1}(E) shows all edges incident to $(u,v)$, edges in $Y$ in blue and edges in $Z$ in red. Here the edge $e$ on $\{b,u\}$ determines whether $e$ is in $E(Y)$ or not. In either case the rNNI operation $\sigma=(v,c)\circlearrowleft_T(a,u)$ yields a N-fence or crown in $\mathcal{Z}(\sigma(N))$ shown in blue on the right because $(a,u)$ is in a M-fence in $\mathcal{Z}(N)$. 
        \begin{description}
            \item[Case 1.1:] $(b,u)\in E(N)$. Then, $(u,d)$ is in a W-fence in $\mathcal{Z}(\sigma(N))$. Hence, $\delta^*(\sigma(N))=\delta^*(N)$. Notice that $\sigma$ is the only rNNI operation on $(u,v)$ up to symmetries.
            \item[Case 1.2:] $(u,b)\in E(N)$. Since we assume that $N$ contains no M-W-pair, $(u,b)\in E(N)$ is in a N-fence or crown in $\mathcal{Z}(N)$. Therefore, $(u,b),(u,d)$ are in a W-fence in $\mathcal{Z}(\sigma(N))$, i.e., $\delta^*(\sigma(N))=\delta^*(N)$. Then, the only remaining rNNI operation $\sigma$ on $(u,v)$ up to symmetries is $\sigma'=(v,c)\to_T(u,b)$ (shown in Figure~\ref{fig:rNNI_1}(F)). In this case, since $(u,b)$ is in a N-fence or crown in~$\mathcal{Z}(N)$ and $(v,c)\in E(Z)$, $(v,b),(v,c)$ are in a W-fence in~$\mathcal{Z}(\sigma'(N))$. Also, since $(v,d)\in E(Z)$, $(u,d)$ is in a N-fence in~$\mathcal{Z}(\sigma'(N))$. Therefore, in total $\delta^*(\sigma'(N))=\delta^*(N)$.
        \end{description}
        \item[Case 2:] $u\in V(Y)$ for a M-fence $Y\in\mathcal{Z}(N)$ and $v\notin V(Z)$ for all W-fences $Z\in\mathcal{Z}(N)$. Hence, we have an empty intersection between any edge incident to $(u,v)$ and any W-fence in~$\mathcal{Z}(N)$. In other words, $\delta^*(\sigma(N))=\delta^*(N)$ for all rNNI operations $\sigma$ on $(u,v)$.
        \item[Case 3:] $u\in V(Y)$ for a W-fence $Y\in\mathcal{Z}(N)$ and $v\notin V(Z)$ for all M-fences $Z\in\mathcal{Z}(N)$.
        \begin{description}
            \item[Case 3.1:] $d_{\text{out}}(v)=2$. Then, the only applicable rNNI operation $\sigma$ on $(u,v)$ up to symmetries is shown in Figure~\ref{fig:rNNI_1}(B). Observe that vertex $c$ or $d$ or both have in-degree two because $v$ is not in any M-fence in~$N$. Hence, $(v,c)$ or $(u,d)$ or both are in a W-fence in $\mathcal{Z}(\sigma(N))$ using the fact that $(b,u)\in E(Y)$. Thus, $\delta^*(\sigma(N))\geq\delta^*(N)$.
            
            \item[Case 3.2:] $d_{\text{in}}(u)=d_{\text{in}}(v)=2$. Figure~\ref{fig:rNNI_1}(C) and (G) depict the neighborhood of $(u,v)$. First, assume vertices and edges in $Y$ are shown in red.
            
            If $d_{\text{in}}(c)=1$, then the rNNI operation $\sigma$ shown in Figure~\ref{fig:rNNI_1}(C) is well-defined. Observe that $(u,c)$ is in a W-fence in $\mathcal{Z}(\sigma(N))$. Otherwise, condition~4 would hold. Hence, with analogous arguments to our proof of the if-direction under condition~4 it follows that $\delta^*(N)\geq\delta^*(\sigma(N))$. The only other rNNI operation up to symmetries is shown in Figure~\ref{fig:rNNI_1}(G). Since $u$ is in at least one W-fence in $\mathcal{Z}(N)$ and in no M-fence in $\mathcal{Z}(N)$, its follows that $\delta^*(\sigma(N))=\delta^*(N)$.
            
            If $d_{\text{in}}(c)=2$, then $\sigma=(v,d)\circlearrowleft_T(a,u)$ is not well-defined and our arguments for the rNNI operation in Figure~\ref{fig:rNNI_1}(G) are analogous. Thus, $\delta^*(\sigma(N))=\delta^*(N)$.

            Finally, if we violate the assumption that vertices and edges in $Y$ are shown in red, then $(u,v)\in E(Y)$. Hence, the arguments for both cases $d_{\text{in}}(c)=1$ and $d_{\text{in}}(c)=2$ still apply.

            \item[Case 3.3:] $d_{\text{in}}(u)=1$ and $d_{\text{in}}(v)=2$. Figure~\ref{fig:rNNI_1}(H) depicts the neighborhood of $(u,v)$. Then, $(u,v)$ is in a W-fence in $\mathcal{Z}(N)$ because $u\in V(Y)$. Therefore, no vertex adjacent to $u$ or $v$ is in a M-fence in $\mathcal{Z}(N)$ because $N$ contains no M-W-pair and $v$ is not in any M-fence in $\mathcal{Z}(N)$. 

            First, consider rNNI operation $\sigma=(c,v)\circlearrowleft_H(a,u)$ or $\sigma=(c,v)\to_H(v,b)$. In both cases $(c,v)$ is in a W-fence in $\mathcal{Z}(\sigma(N))$ because $(c,v)$ is in a W-fence in $\mathcal{Z}(N)$ and no vertex adjacent to $u$ is in a M-fence in $\mathcal{Z}(N)$. The only other rNNI operation on $(u,v)$ up to symmetries is $\sigma =(v,d)\circlearrowleft_T(a,u)$ (shown in Figure~\ref{fig:rNNI_1}(H)). 
            
            Then, for $(c,u)\in E(\sigma(N))$, the maximal zig-zag trail containing edge $(v,d)$ does not change its type from $N$ to $\sigma(N)$ because no vertex adjacent to $v$ is in a M-fence in $\mathcal{Z}(N)$. In addition, $(u,b)$ is in W-fence in $\mathcal{Z}(\sigma(N))$. 
            
            For $(u,c)\in E(\sigma(N))$, edges $(v,d),(v,u)$ are in a N-fence or M-fence in $\mathcal{Z}(\sigma(N))$ because no vertex adjacent to $v$ is in a M-fence in $\mathcal{Z}(N)$. Since $(c,v)$ is in a W-fence in $\mathcal{Z}(N)$ and $d_{\text{in}}(c)=1$ for $\sigma$ to be well-defined, we deduce that vertex $c$ is in two W-fences in $\sigma(N)$. 
            
            In total we conclude that $\delta^*(\sigma(N))\geq\delta^*(N)$. 
        \end{description}
        \item[Case 4:] both $u$ and $v$ are neither in a M-fence nor W-fence in~$\mathcal{Z}(N)$. Then, the edges incident to $(u,v)$ have an empty intersection with all edges in W-fences in~$\mathcal{Z}(N)$. However, it might be possible to apply the rNNI operation $\sigma=(a,u)\circlearrowleft_H(v,b)$ shown in Figure~\ref{fig:rNNI_1}(D) to transform a W-fence in $\mathcal{Z}(N)$ with endpoint $c$ into a N-fence with endpoint $v$ in $\mathcal{Z}(\sigma(N))$. Observe that this case is the only one in which a vertex $c$ of a W-fence is adjacent to $u$ or $v$ and $c$ does not appear in a W-fence in $\mathcal{Z}(\sigma(N))$. We argue that $\delta^*(\sigma(N))=\delta^*(N)$: since edges $(a,u)$ and $(v,d)$ are in N-fences or crowns in~$\mathcal{Z}(N)$, it is not in a W-fence in $\mathcal{Z}(\sigma(N))$ only if $d_{\text{in}}(d)=1$. However, $d_{\text{in}}(d)=1$ if and only if $(u,b)$ is in a W-fence in $\mathcal{Z}(\sigma(N))$ because $(v,b)$ is in a N-fence or crown in $\mathcal{Z}(N)$. Thus, no rNNI operation on $(u,v)$ can decrease $\delta^*(N)$.
        \item[Case 5:] $u$ is neither in a M-fence nor a W-fence in~$\mathcal{Z}(N)$, $v\in V(Z)$ for a W-fence $Z\in\mathcal{Z}(N)$ and $d_{\text{in}}(v)=1$. Then, the orientation of edges incident to $(u,v)$ can be derived from the graph on the left in Figure~\ref{fig:rNNI_1}(E). In particular, $(v,c),(v,d)\in E(Z)$. We consider the rNNI operations in Figures~\ref{fig:rNNI_1}(E) and (F) which constitute all rNNI operations on $(u,v)$ up to symmetries. Then, analogous arguments to Case~1 apply to conclude that $\delta^*(N)$ cannot decrease.
        \item[Case 6:] $u$ is neither in a M-fence nor a W-fence in~$\mathcal{Z}(N)$, $v\in V(Z)$ for a W-fence $Z\in\mathcal{Z}(N)$ and $d_{\text{in}}(v)=2$. Then, the orientation of edges incident to $(u,v)$ can be derived from the graph on the left in Figure~\ref{fig:rNNI_1}(A) such that only edge $(v,d)$ is in a W-fence in~$\mathcal{Z}(N)$.
        \begin{description}
            \item[Case 6.1:] $(b,u)\in E(N)$. Then, all rNNI operations $\sigma$ up to symmetries are shown in Figures~\ref{fig:rNNI_1}(C) and~\ref{fig:rNNI_1}(G). In the latter case, since both $(b,u)$ and $(c,v)$ are in a N-fence or crown, $\delta^*(\sigma(N))=\delta^*(N)$. In the former case assume that $c$ is not in a M-fence in~$\mathcal{Z}(N)$. Otherwise, condition~4 holds. This means, $(u,c)$ is in a W-fence in $\mathcal{Z}(\sigma(N))$. Thus, $\delta^*(\sigma(N))=\delta^*(N)$.
            \item[Case 6.2:] $(u,b)\in E(N)$. Then, the graph on the left in Figure~\ref{fig:rNNI_1}(H) depicts the neighborhood of $(u,v)$ with $Z$ shown in red. Hence, for rNNI operation $\sigma=(c,v)\circlearrowleft_H(a,u)$ (see Figure~\ref{fig:rNNI_1}(A)), either $(c,v)$ or $(u,b)$ is in a W-fence in~$\mathcal{Z}(\sigma(N))$. Therefore, $\delta^*(\sigma(N))\geq\delta^*(N)$. Otherwise, the only remaining rNNI operation $\sigma$ up to symmetries is shown in Figure~\ref{fig:rNNI_1}(H).
            
            If $(u,c)\in E(\sigma(N))$, then $d_{\text{in}}(c)=1$ in $N$ for $\sigma$ to be well-defined. This means, $d_{\text{in}}(b)=2$ because $(c,v)$ is not in a M-fence in $\mathcal{Z}(N)$. Hence, either edges $(u,b),(u,c)$ are in W-fence in $\mathcal{Z}(N)$ or $c$ is the endpoint of a W-fence in $\mathcal{Z}(N)$. Thus, $\delta^*(\sigma(N))\geq\delta^*(N)$.

            If $(c,u)\in E(\sigma(N))$, then either $(v,b)$ is in a W-fence in $\mathcal{Z}(\sigma(N))$ or $(v,d)\in E(Z)$ implies that $(c,u)$ is in a W-fence in $\mathcal{Z}(\sigma(N))$. In both cases $\delta^*(\sigma(N))\geq\delta^*(N)$.
        \end{description}
    \end{description}
    We conclude that for all $(u,v)\in E(N)$ there exists no rNNI operation $\sigma$ on $(u,v)$ such that $\delta^*(\sigma(N))\leq\delta^*(N)-1$ as long as conditions~1 to~5 are not satisfied.
\end{proof}

As a consequence of \cref{prop:rNNI:1} we can observe that if $N\in\mathcal{N}$ has a $\mathcal{W}_N$-saturated M-W-matching, then $\delta^*(N)=\delta_{\text{rNNI}}(N)$ due to \cref{prop:rNNI:1}.1. We generalize such matchings as follows: we call a set of disjoint collections $\{Z,Z'\}$ for M-fences $Z$ and W-fences $Z'$ in $\mathcal{Z}(N)$, each of which fits into one of \cref{prop:rNNI:1}.1 to~\cref{prop:rNNI:1}.5, a \emph{M-W-rNNI-matching} in $N$. We say a M-W-rNNI-matching $S$ in $N$ is $\mathcal{W}_N$-saturated if $|S|=|\mathcal{W}_N|$. Then, we draw the following conclusion from \cref{prop:rNNI:1}:

\begin{corollary}\label{cor:rNNI:1}
    Let $N\in\mathcal{N}$. Then, $\delta^*(N)\leq\delta_{\text{rNNI}}(N)$. Equality holds if and only if, for $k=\delta_{\text{rNNI}}(N)$,
    \begin{enumerate}
        \item there exists a sequence $\sigma_1, \dots, \sigma_{k}$ such that $\sigma_i$ is a rNNI operation for $N_i=\sigma_{i-1}(\sigma_{i-2}(\dots (N)\dots))$ for all $i\in\{1,\dots,k\}$;
        \item there exists a $\mathcal{W}_{N_i}$-saturated M-W-rNNI-matching in $N_i$ for all $i\in\{1,\dots,k\}$.
    \end{enumerate}
\end{corollary}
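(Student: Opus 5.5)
The proof splits into the inequality and then the characterization of equality.

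\emph{The inequality.} Take any optimal sequence of $k=\delta_{\text{rNNI}}(N)$ rNNI operations transforming $N$ into some $N'\in\mathcal{N}_{\text{TB}}$. Write $N_1=N$ and $N_{i+1}=\sigma_i(N_i)$. By \cref{thm::TB::Wfence}, $\mathcal{Z}(N_{k+1})$ contains no W-fence. Combined with the preceding proposition identifying $\delta^*$ with the number of W-fences, this gives $\delta^*(N_{k+1})=0$. The final clause of \cref{prop:rNNI:1} says that no single rNNI lowers $\delta^*$ by two or more, so $\delta^*(N_{i+1})\geq\delta^*(N_i)-1$ for each $i$. Telescoping over the $k$ steps yields $0\geq\delta^*(N)-k$, that is, $\delta^*(N)\leq\delta_{\text{rNNI}}(N)$.

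\emph{Equality.} By the same telescoping, equality holds exactly when some length-$k$ sequence lowers $\delta^*$ by exactly one at every step. The plan is to show that such a sequence exists if and only if conditions~1 and~2 hold.

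For the forward direction, suppose equality holds and take such a sequence; this directly gives condition~1. For condition~2, consider any $N_i$. Every later step removes exactly one W-fence, and by \cref{prop:rNNI:1} each such step is realized by a pair $\{Z,Z'\}$ of one of the five types. I then need to argue that these pairs can be chosen pairwise disjoint in $N_i$, so that together they form a $\mathcal{W}_{N_i}$-saturated M-W-rNNI-matching. The natural approach is to trace each W-fence of $N_i$ forward to the step at which it disappears, and to pull the witnessing M-fence back to $N_i$. This relies on the local nature of rNNIs (each touches only the edges around $\{u,v\}$), so fences not involved in a step keep their type.

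For the reverse direction, suppose conditions~1 and~2 hold. Applying \cref{prop:rNNI:1} to one pair of the saturated matching in $N_i$ gives an operation $\sigma_i$ that removes exactly one W-fence. Condition~2 for $N_{i+1}$ guarantees the process can continue. Hence $\delta^*$ drops by one per step and reaches zero after $\delta^*(N)$ steps. This produces a tree-based network within $\delta^*(N)$ moves, so $\delta_{\text{rNNI}}(N)\leq\delta^*(N)$, and with the inequality already proved we get equality.

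\emph{Main obstacle.} The hard part is the disjointness and persistence argument in the forward direction. One must show that applying one rNNI for a pair does not destroy the other pairs of the matching, and conversely that the witnessing pairs can be realized simultaneously as disjoint pairs in $N_i$ rather than only existing sequentially. My first attempt is a local case check using the neighbourhoods in Figure~\ref{fig:rNNI_1}: an rNNI changes only the zig-zag trails through edges incident to $u$ or $v$, and the M-fences and W-fences of the other pairs are unaffected apart from possibly merging into N-fences. If that fails, I would weaken condition~2 to require only the existence of one qualifying pair at each step, which suffices for the telescoping argument.
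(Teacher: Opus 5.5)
Your proof of the inequality is correct, and it is the same argument as the paper's. The paper gives no separate proof of the corollary; it reads it off Theorem~\ref{prop:rNNI:1}. The last clause of that theorem limits each move to lowering $\delta^*$ by at most one, a tree-based network has $\delta^*=0$, and telescoping along an optimal sequence gives $\delta^*(N)\le\delta_{\text{rNNI}}(N)$. Equality holds exactly when some sequence of $\delta^*(N)$ moves lowers $\delta^*$ by exactly one at every step. Up to this point your proposal and the paper agree.

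\textbf{The gap: the forward implication to condition~2.} You leave this step open, and the repair you sketch does not work.

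\emph{Later witnesses do not pull back.} Apply the only-if part of Theorem~\ref{prop:rNNI:1} to each $N_j$ of an equality sequence. It gives a single qualifying pair in $N_j$ and says nothing about an earlier $N_i$. Fences are not persistent objects. Counting head and tail endpoints of trails gives $|\mathcal{M}_N|-|\mathcal{W}_N|=|X|-1$, so every move that removes a W-fence also removes an M-fence, and that M-fence's edges are redistributed into N-fences. Conditions~2--5 require the connecting edge(s) to lie in an N-fence or crown, so this redistribution can create pairs that did not qualify before.

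\emph{A concrete instance.} Let $M_1=v_0<v_1>v_2<v_3>v_4<v_5>v_6$ be an M-fence whose reticulations $v_2$ and $v_4$ are the tail endpoints of W-fences $W_1$ and $W_2$. Let the first two edges of $W_1$ be $(v_2,w)$ and $(c,w)$. In $N_1$ both W-fences form M-W-pairs only via $M_1$, so they compete for it. Apply the condition-1 move $(c,w)\circlearrowleft_H(v_1,v_2)$. Afterwards the trail $w>v_2<v_3>v_4<v_5>v_6$ is an N-fence. If $v_5$ is the head endpoint of another M-fence $M_5$, then $(M_5,W_2)$ can be a pair of type~2 in $N_2$, with connecting edge $(v_5,v_4)$. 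In $N_1$, however, that edge lies in $M_1$, so $(M_5,W_2)$ need not qualify for anything. Nothing in your argument therefore produces a second partner for $W_1$ or $W_2$ in $N_1$.

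\emph{The local case check does not help.} It addresses the opposite direction, namely pairs of $N_i$ surviving into $N_{i+1}$. Even that fails: Example~\ref{exmp:rNNI:1} shows a move destroying another qualifying pair by creating a directed path.

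\emph{The fallback proves a weaker statement.} Requiring one qualifying pair per step is what telescoping plus Theorem~\ref{prop:rNNI:1} actually yields. It is also all that the paper's one-line derivation establishes. But it is weaker than the stated condition~2, so the forward direction as stated remains unproved.

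\textbf{A smaller issue in the reverse direction.} You choose each $\sigma_i$ from a pair of the matching and then invoke condition~2 for ``$N_{i+1}$''. Condition~2, however, refers to the networks produced by the sequence in condition~1, not by your choices. The argument goes through only if condition~1 is read as requiring each $\sigma_i$ to be the move attached to a qualifying pair in $N_i$. Under that reading saturation is never used, since a single qualifying pair per step already suffices.
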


\begin{figure}[!t]
\centering
\includegraphics[width=0.5\linewidth]{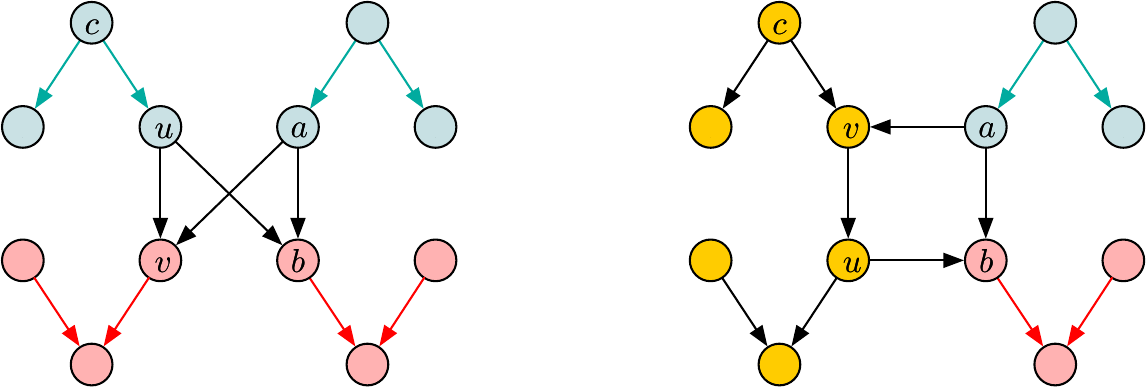}
\caption{On the left a subgraph of a network $N\in\mathcal{N}$. On the right a subgraph of $\sigma(N)$ for $\sigma=(a,v)\circlearrowleft_H(c,u)$.}\label{fig:crown}
\end{figure}

\cref{fig:crown} illustrates why the first condition in \cref{cor:rNNI:1} is necessary:
\begin{example}\label{exmp:rNNI:1}
    Consider $N\in\mathcal{N}$ having the graph on the left in \cref{fig:crown} as its subgraph. Then, edges $(u,v)$ and $(a,b)$ are the connecting edges of pairwise distinct distance-1 M-W-pairs, respectively. Since $v$ and $b$ are the endpoint of a W-fence in $\mathcal{Z}(N)$, \cref{prop:rNNI:1}.2 holds for both distance-1 M-W-pairs. However, after applying rNNI operation $\sigma=(a,v)\circlearrowleft_H(c,u)$ to $N$ neither tuple of a M-fence and W-fence in $\mathcal{Z}(N)$ is a distance-1 M-W-pair in $\sigma(N)$. One because $(v,u)$ is not a connecting edge and the other because there exists a directed path from $a$ to $b$ which does not contain connecting edge $(a,b)$. This means, while there might exist a $W_{N}$-saturated M-W-rNNI matching in $N$, there cannot exist a $W_{\sigma(N)}$-saturated M-W-rNNI matching in $\sigma(N)$.
\end{example}

While \cref{cor:rNNI:1} characterizes all rooted binary phylogenetic networks on which the optimal objective function values of the ATBP and NTBP coincide, we can also identify a family of networks on which the two proximity measures are arbitrarily different:

\begin{figure}[!t]
    \centering
    \includegraphics[width=0.65\linewidth]{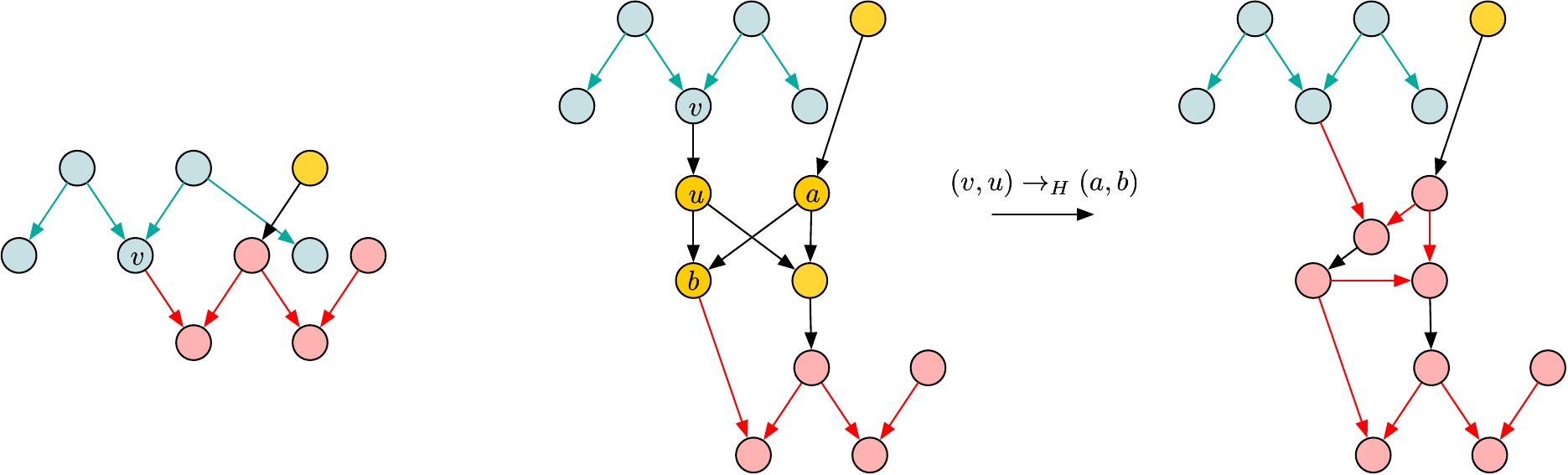}
    \caption{Three graphs used in the proof of \cref{prop::delta+k}.}
    \label{fig:delta+k}
\end{figure}

\begin{restatable}{proposition}{propdeltak}\label{prop::delta+k}
    For any integers $k>0$ and any $N\in\mathcal{N}$ for which there exists a $\mathcal{W}_N$-saturated M-W-rNNI-matching, there exists $N'\in\mathcal{N}$ such that $\mathcal{W}_{N'}=\mathcal{W}_N$, $\mathcal{M}_{N'}=\mathcal{M}_N$ and $\delta^*(N')+k=\delta_{\text{rNNI}}(N')$.
\end{restatable}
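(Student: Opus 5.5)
We construct $N'$ from $N$ by a local gadget insertion that leaves the W-fences and M-fences of $N$ untouched but makes every W-fence expensive to destroy. The plan is to take a $\mathcal{W}_N$-saturated M-W-rNNI-matching $S$ in $N$. By \cref{cor:rNNI:1}, $N$ itself may already satisfy $\delta^*(N)=\delta_{\text{rNNI}}(N)$, and we use this as the base. We then modify $N$ only inside N-fences or crowns, for instance by subdividing an edge of an N-fence on the path between a paired M-fence $Z$ and W-fence $Z'$. At each subdivision we attach a small gadget built only from N-fences and crowns; these are the graphs of \cref{fig:delta+k}. Because the gadget contributes no new M- or W-fence, \cref{thm::TB} gives $\mathcal{W}_{N'}=\mathcal{W}_N$ and $\mathcal{M}_{N'}=\mathcal{M}_N$, identifying fences as edge sets. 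By Corollary~5.4 in~\cite{hayamizu2021structure}, $\delta^*(N')=\delta^*(N)$.

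Next we make the distance grow by exactly $k$. The simplest choice is to act on one W-fence $Z'_0$, or spread the effect over several pairs, by separating it from every M-fence. We insert into $N'$ a chain of $k$ gadgets, each of which stretches the connecting structure by one extra edge of an N-fence or crown. After this insertion, none of conditions~1--5 of \cref{prop:rNNI:1} holds for any pair involving $Z'_0$. We also need $Z'_0$ to have no other partner, so we place the chain between $Z'_0$ and the rest of the network.

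For the upper bound, each rNNI moves the N-fence/crown gadget one step, in the spirit of a distance-1 head or tail move. After $k$ such rNNIs that do not change $\delta^*$, the configuration of $N$ is restored, meaning a pair satisfying one of conditions~1--5 exists again for $Z'_0$. We then finish along the original optimal sequence for $N$, which gives $\delta_{\text{rNNI}}(N')\le \delta^*(N)+k$.

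The main obstacle is the lower bound $\delta_{\text{rNNI}}(N')\ge\delta^*(N')+k$. By \cref{prop:rNNI:1}, each rNNI decreases $\delta^*$ by at most one. We introduce a potential $\Phi(N')$ equal to $\delta^*$ plus the minimal gadget distance separating $Z'_0$ (or the unmatched W-fences) from any partner configuration. We then show, by the same local case analysis as in the only-if direction of \cref{prop:rNNI:1} (Cases~1--6 on an edge $(u,v)$), that a single rNNI decreases $\Phi$ by at most one. The delicate part is that rNNIs far from the gadget might create new M-fences close to $Z'_0$. To rule this out, we would design the gadget so that all vertices adjacent to the chain have degrees forcing N-fence or crown membership after any single move. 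We would then check the cases in \cref{fig:rNNI_1} restricted to edges touching the gadget. Since $\Phi(N')=\delta^*(N)+k$ and $\Phi=0$ on tree-based networks, the lower bound follows.
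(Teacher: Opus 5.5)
Your construction is the paper's. Both insert crowns and N-fences between a matched M-fence $Z$ and W-fence $Z'$, so that $\mathcal{W}$ and $\mathcal{M}$ are unchanged but $(Z,Z')$ fits none of conditions~1--5 of \cref{prop:rNNI:1}, and in both one rNNI per inserted gadget brings the pair back. The paper does condition~1 in the main text and conditions~2--5 via \cref{fig:delta+k2}. The gap is in certifying $\delta_{\text{rNNI}}(N')=\delta^*(N')+k$, and chiefly in the lower bound, which you rightly call the main obstacle but only announce.

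Your $\Phi$ is not defined on an arbitrary network reachable from $N'$: after a few moves elsewhere there need be no recognizable gadget chain to measure. The claim that one rNNI lowers $\Phi$ by at most one is exactly the content of the lower bound, and you defer it to ``we would design \dots\ we would then check''. The obstruction you name is real. Every vertex of a W-fence has an edge outside it: the in-edge of each tree vertex and the out-edge of each reticulation. So a single chain cannot sit ``between $Z'_0$ and the rest of the network''. Nothing in the sketch excludes a move that resolves some other W-fence while turning the trail through another attachment point of $Z'_0$ into an M-fence, which would drop $\Phi$ by more than one.

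For $k=1$ you need no potential. The paper reads the lower bound off the only-if direction of \cref{cor:rNNI:1}, so it suffices to show that $N'$ admits no $\mathcal{W}_{N'}$-saturated M-W-rNNI-matching. You would still have to check that no other M-fence can be paired with $Z'$ and that the matching cannot be rerouted. For $k\ge 2$, an inductive version of that statement is what $\Phi$ would have to encode rigorously.

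The upper bound also has two loose ends.
\begin{itemize}
\item An rNNI preserves $|V|$ and $|E|$. After your $k$ undoing moves you are therefore not back at $N$, but at a network that still carries the gadget vertices. Replaying ``the original optimal sequence for $N$'' there needs an argument about acyclicity and fence types in the presence of those extra vertices.
\item That sequence has length $\delta^*(N)$ only if $\delta_{\text{rNNI}}(N)=\delta^*(N)$. The hypothesis gives a $\mathcal{W}_N$-saturated M-W-rNNI-matching only in $N$ itself. By \cref{cor:rNNI:1}, equality requires such matchings along an entire optimal rNNI sequence, and \cref{exmp:rNNI:1} shows that a single move can destroy a saturated matching.
\end{itemize}
The paper's proof uses the same reading of \cref{cor:rNNI:1} when it passes from a saturated matching in $\sigma(N')$ to $\delta_{\text{rNNI}}(N')=\delta^*(N')+1$. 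To close this step you need either the full condition of \cref{cor:rNNI:1} for $N$, or a separate argument that the replayed sequence keeps a saturated matching at every step.
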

\begin{proof}
    First, assume there exists a $\mathcal{W}_N$-saturated M-W-rNNI-matching in $N$ for which \cref{prop:rNNI:1}.1 is satisfied at least once. This means, there exists a M-W-pair $(Z,Z')$ in $N$. Let $v\in V(Z)\cap V(Z')$. The graph on the left in \cref{fig:delta+k} shows the M-fence $Z$ in blue and W-fence $Z'$ in red in the neighborhood of vertex $v$. Then, we can insert a crown and two N-fences as shown in the graph in the middle of \cref{fig:delta+k} to construct $N'\in\mathcal{N}$ with $\mathcal{W}_{N'}=\mathcal{W}_N$ and $\mathcal{M}_{N'}=\mathcal{M}_N$ but $(Z,Z')$ is not a M-W-pair in $N'$. Indeed, the insertion of the crown and N-fences increases the shortest topological distance between $v$ and a vertex in $Z'$ by two. Then, the rNNI operation $\sigma=(v,u)\to_H(a,b)$ yields the network $\sigma(N')$ depicted on the right in \cref{fig:delta+k}. Clearly, there exists a $\mathcal{W}_{\sigma(N')}$-saturated M-W-rNNI-matching in $\sigma(N')$ because there exists a $\mathcal{W}_{N}$-saturated M-W-rNNI-matching in $N$ and both matchings differ only in the transformation of the W-fence $Z'$. Therefore, \cref{cor:rNNI:1} implies $\delta^*(N')+1=\delta_{\text{rNNI}}(N')$. We can repeat the arguments by adding more crowns and N-fences on the shortest path between vertices in $Z$ and $Z'$ to construct $N'\in\mathcal{N}$ such that $\mathcal{W}_{N'}=\mathcal{W}_N$, $\mathcal{M}_{N'}=\mathcal{M}_N$ and $\delta^*(N')+k=\delta_{\text{rNNI}}(N')$. The cases in which any of \cref{prop:rNNI:1}.2 to~\cref{prop:rNNI:1}.5 is satisfied at least once are analogous (see~\cref{sec:appendix}).
\end{proof}

\begin{figure}[!b]
    \centering
    \includegraphics[width=0.5\linewidth]{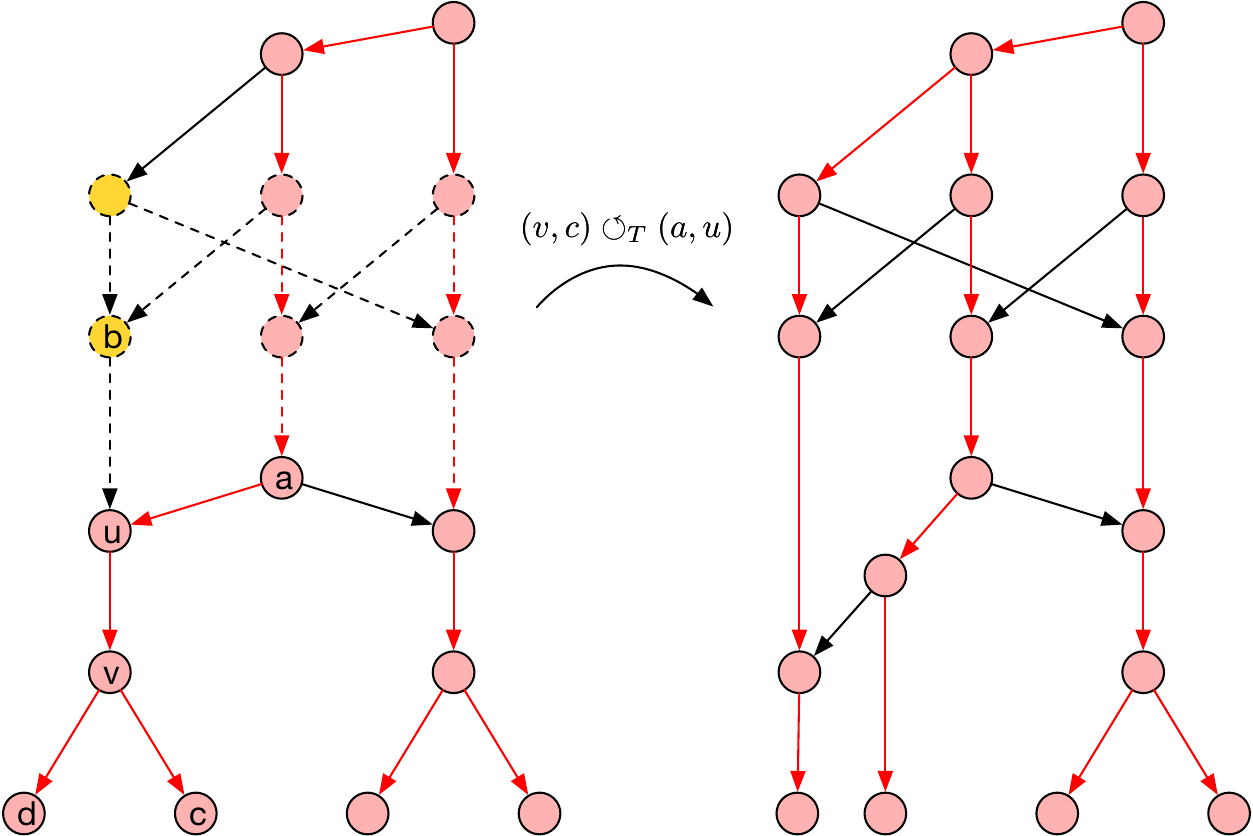}
    \caption{Two rooted binary phylogenetic networks which differ by the rNNI operations shown. A maximum covering subtree for each network is shown in red.
    }
    \label{fig:covering-failure}
\end{figure}

Since not every $\mathcal{W}_N$-saturated M-W-rNNI-matching is a $\mathcal{W}_N$-saturated M-W-matching, we know from \cref{thm:saturated} and \cref{cor:rNNI:1} that $\delta^*(N)=\delta_{\text{rNNI}}(N)<\eta^*(N)$ can occur. We will study this relation in the next subsection and also investigate the conclusion drawn from \cref{thm:saturated} and \cref{prop::delta+k} that there exist rooted binary phylogenetic networks $N'$ with $\delta^*(N')+k=\delta_{\text{rNNI}}(N')\leq\eta^*(N')$.

\subsection{On covering subtrees and the CTBP}\label{sec:rNTBPcomb2}
For $N\in\mathcal{N}$, we begin our study of covering subtrees by observing that the gap between $\delta_{\text{rNNI}}(N)$ and $\eta^*(N)$ can be arbitrarily large:
\begin{example}\label{exmp::eta}
Consider the networks $N_1$ and $N_2$ on the left and right in \cref{fig:covering-failure}, respectively. Observe that $N_1$ is not tree-based but $N_2$ is tree-based because one can verify that $\eta^*(N_1)=2$ and $\eta^*(N_2)=0$. Furthermore, $N_1$ has a $\mathcal{W}_{N_1}$-saturated M-W-rNNI-matching because $|\mathcal{W}_{N_1}|=1$ and edge $(u,v)$ is the connecting edge of a distance-1 W-M-pair (see \cref{prop:rNNI:1}). Thus, we conclude from \cref{cor:rNNI:1} that $1=\delta^*(N_1)=\delta_{\text{rNNI}}(N_1)<\eta^*(N_1)=2$. The crown and N-fences indicated by dashed lines in \cref{fig:covering-failure} can be duplicated multiple times to increase the length of the directed paths from the root to the leaves to any arbitrary but fixed length. Hence, the number of yellow vertices $\eta^\ast(N_1)$ can be made arbitrarily large while $\delta_{\text{rNNI}}(N_1)=1$ holds.
\end{example}

The phenomenon in \cref{exmp::eta} can be quantified for general rooted binary phylogenetic networks as follows:

\begin{lemma}\label{lem::switching}
    Let $N\in\mathcal{N}$. For a maximum covering subtree $T$ of $N$, let $(u,v)\in E(T)$ for a reticulation vertex~$u$ and an internal tree vertex $v$ in $N$. Let $\mathcal{P}$ be the set of directed paths with startpoint in $V(T)$, endpoint $u$ and $E(P)\subset E(N)\setminus E(T)$.
    \begin{enumerate}
        \item Assume $E(T)$ contains at least two edges incident to $u$ and all edges incident to $v$. Then, for all rNNI operations~$\sigma$ on $(u,v)$ we have
        \begin{align*}
            \eta^*(\sigma(N))=\eta^*(N)-\max\{|E(P)|\,:\,P\in\mathcal{P}\}+1.
        \end{align*}
        \item Assume $E(T)$ contains exactly two edges adjacent to $(u,v)$. Then, for all rNNI  operations $\sigma=e_1\square e_2$ on $(u,v)$ we have
        \begin{align*}
            \eta^*(\sigma(N))\leq\eta^*(N)+1
        \end{align*}
        with equality if and only if $e_1,e_2\in E(T)$ or $e_1,e_2\in E(N)\setminus E(T)$, and there exists a maximum covering subtree $T_{\sigma}$ of $\sigma(N)$ such that $(u,v),(v,u)\notin  E(T_{\sigma})$ and either $u\in V(T_{\sigma})$ or $v\in V(T_{\sigma})$.
    \end{enumerate}
\end{lemma}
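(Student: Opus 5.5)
The plan is to compare maximum covering subtrees of $N$ and $\sigma(N)$ directly. We build a covering subtree of one network from a maximum covering subtree of the other and count the vertices gained or lost. Throughout, $u$ is a reticulation (in-degree two, out-degree one) and $v$ is an internal tree vertex with $(u,v)\in E(T)$. Any rNNI $\sigma$ on $(u,v)$ keeps the edge on $\{u,v\}$ and the two property-(1) edges $\{b,u\},\{c,v\}$ in place, and it reattaches the remaining two edges $e_1,e_2$ at the other endpoint. Since $T$ is a tree, $u$ has exactly one in-edge in $T$. So "$E(T)$ contains at least two edges incident to $u$" means $(u,v)$ together with one in-edge $(p,u)$, while the other in-edge $(q,u)$ of $u$ lies outside $T$. "All edges incident to $v$" means $(u,v),(v,c),(v,d)\in E(T)$.

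For part~1, I would first produce a lower bound on $\eta^*(\sigma(N))$ via a construction. After $\sigma$, one of the edges at $v$ is moved so that $u$ or $v$ acquires an in-edge that was previously unused. I would then reroute $T$ along a path $P\in\mathcal{P}$ of maximum length ending in $u$. All interior vertices of $P$ lie outside $T$ and $E(P)\cap E(T)=\emptyset$. Adding $P$ and deleting the single tree edge that becomes redundant at the reattached endpoint gives a covering subtree $T'$ of $\sigma(N)$. This $T'$ contains the $|E(P)|-1$ new interior vertices of $P$ and loses no vertex, which yields $\eta^*(\sigma(N))\le\eta^*(N)-\max|E(P)|+1$. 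Acyclicity and a common root and leaf set follow because $P$ starts in $V(T)$ and $\sigma$ preserves directions outside $\{u,v\}$. For the reverse inequality I would run the same argument backwards from a maximum covering subtree $T_\sigma$ of $\sigma(N)$. Any path absorbed by $T_\sigma$ beyond those obtainable in $N$ must pass through the modified neighbourhood of $(u,v)$. I expect this to force it into $\mathcal{P}$, and maximality of $T$ in $N$ then bounds the gain.

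For part~2, only two edges adjacent to $(u,v)$ lie in $T$. Applying $\sigma=e_1\square e_2$ detaches at most one tree edge from its endpoint. I would show that deleting this edge and reconnecting through the one remaining incident tree edge costs at most one vertex, namely $u$ or $v$, which gives $\eta^*(\sigma(N))\le\eta^*(N)+1$. For the equality characterisation I would split on whether $e_1,e_2$ are both in $T$, both outside $T$, or mixed. In the mixed case the moved tree edge can be reattached directly, so no vertex is lost. In the other two cases, one of $u,v$ is necessarily dropped from every optimal tree unless some tree avoids the edge on $\{u,v\}$ while still containing one of its endpoints. That is exactly the stated condition on $T_\sigma$.

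I expect the main obstacle to be the reverse inequality in part~1, i.e., showing that no maximum covering subtree of $\sigma(N)$ does better than the rerouting. I would attack it by an exchange argument: take $T_\sigma$ and restrict it to $E(N)\cap E(\sigma(N))$, so that at most one edge near $(u,v)$ must be repaired. Then argue that this repair is always possible in $N$ at a cost equal to the longest path in $\mathcal{P}$ minus one, contradicting maximality of $T$ if $T_\sigma$ were larger.
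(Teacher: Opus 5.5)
You have a genuine gap, and the paper's proof has the same one. Your rerouting construction in part~1 is sound and coincides with the paper's. After any rNNI on $(u,v)$, discard the edge between $u$ and $v$, keep the four edges that now join $a,b$ to $\{u,v\}$ and $\{u,v\}$ to $c,d$, and graft on a longest $P\in\mathcal{P}$ minus its last edge. This proves $\eta^*(\sigma(N))\le\eta^*(N)-\max\{|E(P)|\,:\,P\in\mathcal{P}\}+1$.

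The gap is the reverse inequality, which you rightly single out, and it cannot be closed; the paper's appeal to ``the maximality of $T$'' does not close it either. Pulling a maximum covering subtree of $\sigma(N)$ back to $N$ gives \emph{some} covering subtree of $N$, not $T$, and $\max|E(P)|$ depends on which maximum $T$ you start from. Here is a concrete example, completing $N$ above by tree vertices:
\begin{itemize}
\item $a$ is a reticulation whose parents are tree vertices each carrying a leaf;
\item $b$ is a tree vertex whose parent $x$ has a leaf as its other child, and whose other child $b'$ is a reticulation with a leaf child;
\item the second parent $y$ of $b'$ is a reticulation whose parents are tree vertices each carrying a leaf;
\item $c,d$ are leaves.
\end{itemize}
Then $a\to u\leftarrow b\to b'\leftarrow y$ is a W-fence, so $\eta^*(N)\ge 1$. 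Both $T_1$ (missing only $y$) and $T_2$ (missing only $b$) are covering subtrees, so $\eta^*(N)=1$, and both satisfy the hypotheses of part~1. For $T_1$ we have $\mathcal{P}=\{(b,u)\}$, so the formula predicts $\eta^*(\sigma(N))=1$; for $T_2$ the path $x\to b\to u$ makes it predict $0$. Take the rNNI that exchanges $(a,u)$ and $(v,c)$ and keeps $(u,v)$, so $u$ gets parent $b$ and children $c,v$, and $v$ gets parents $a,u$ and child $d$. The edges $(y,b'),(b,u),(u,c),(a,v),(v,d)$ extend to a support tree of $\sigma(N)$. Hence $\eta^*(\sigma(N))=0$, and the claimed identity fails for $T_1$.

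In part~2, your step ``$\sigma$ detaches at most one tree edge, and in the mixed case it can be reattached so no vertex is lost'' is wrong. In $e_1\square e_2$ the edge $e_1$ stays and $e_2$ moves, at opposite ends of $(u,v)$. Under the hypothesis of part~2, the mixed case is therefore exactly the case where \emph{both} tree edges adjacent to $(u,v)$ move, or neither does. When both move, the new reticulation may send its unique out-edge to a vertex whose other parent also has out-degree one. The resulting exclusions then propagate along chains of reticulations, so no local repair bounds the loss by one.

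Here is a counterexample, again completed above by tree vertices:
\begin{itemize}
\item $a$ is a reticulation, $b$ is a tree vertex with a leaf child, and $c$ is a leaf;
\item $d$ is a reticulation with a leaf child and second parent $d'$;
\item $d'$ is a reticulation one of whose parents, $d''$, is again a reticulation;
\item all other parents are tree vertices carrying a leaf.
\end{itemize}
This $N$ is tree-based. Every support tree contains $(a,u),(v,c),(d',d)$ and avoids $(b,u),(v,d)$, so the hypothesis of part~2 holds. Apply the same rNNI as above; it is mixed in each of its representations. It creates the W-fence $v\to d\leftarrow d'$, and every covering subtree of $\sigma(N)$ misses $\{d',d''\}$ or $\{v,a\}$. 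Hence $\eta^*(\sigma(N))=2>\eta^*(N)+1$. If instead both parents of $d'$ are tree vertices, you get $\eta^*(\sigma(N))=\eta^*(N)+1$ in a mixed case, which contradicts the equality characterization.

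So your plan, like the paper's proof, can only deliver the one-sided bound of part~1. The remaining claims do not hold without additional hypotheses.
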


\begin{figure}[!t]
    \centering
    \includegraphics[width=0.5\linewidth]{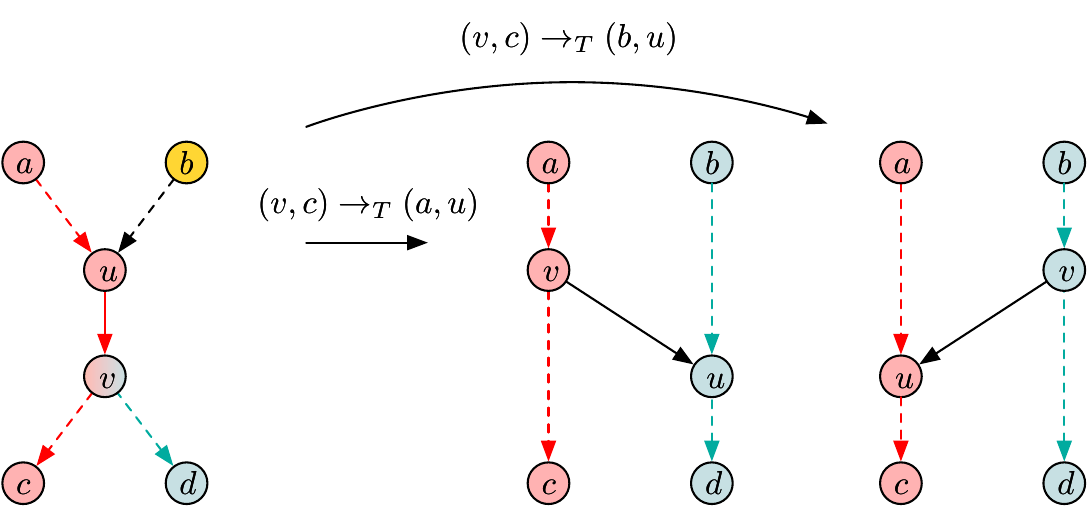}
    \caption{The graphs on the right can be obtained from the graphs on the left applying the respective rNNI operations. Red and blue vertices and edges indicate covering subtrees in the proof of \cref{lem::switching}. Dashed edges might be subdivided a finite number of times.
    }
    \label{fig:switching}
\end{figure}

\begin{proof}
    Notice that $\mathcal{P}$ is not empty because $T$ is a tree. We label vertices and edges adjacent and incident to $(u,v)$, respectively, as shown on the left in \cref{fig:switching} based on our assumptions for $u$ and $v$. Consider a maximum covering subtree $T$ of $N$ containing edge $(u,v)$. Then, $(v,c)\in E(T)$ or $(v,d)\in E(T)$ because $v$ is not a leaf of $N$ and $(a,u)\in E(T)$ or $(b,u)\in E(T)$ because $u$ is not the root of $N$. Let $\sigma=(v,c)\to_T(a,u)$ and $\sigma'=(v,d)\to_T(b,u)$. Then, the edges incident to $u$ and $v$ in $\sigma(N)$ and $\sigma'(N)$ are shown in the middle and on the right in \cref{fig:switching}, respectively. Observe that for all other rNNI operations $\sigma''$ on $(u,v)$ involving edges incident to $u$ or $v$ in $N$ there exists a graph isomorphism between $\sigma''(N)$ and either $\sigma(N)$ or $\sigma'(N)$ up to a relabeling of vertices $a,b,c,d$. Specifically, by exchanging the role of $a$ and $b$ and/or the roles of $c$ and $d$.

    First, assume $(a,u),(v,c),(v,d)\in E(T)$ and $(b,u)\notin E(T)$, i.e., $E(T)$ contains exactly two edges indicent to $u$ and all edges incident to $v$ ($T$ is depicted by red and blue vertices and edges on the left in \cref{fig:switching}). Since $T$ contains the root, we can generalize the restriction $(b,u)\notin E(T)$ by relabeling a vertex in $T$ as $b$ for which there exists a directed path $P(b,u)\in\mathcal{P}$. Then, red and blue vertices and edges of the graph in the middle and on the right of \cref{fig:switching} are in a maximum covering subtree $T_{\sigma}$ of $\sigma(N)$ and $T_{\sigma'}$ of $\sigma'(N)$, respectively, because of the maximality of $T$. Hence,
    \begin{align*}
        \eta^*(\sigma(N))=\eta^*(\sigma'(N))=\eta^*(N)-
        \left(|P(b,u)|-1\right).
    \end{align*}
    The same conclusions follows if we exchange the roles of $a$ and $b$ and/or the roles of $c$ and $d$. Thus, our first claim holds.
    
    Next, assume $(a,u),(v,c)\in E(T)$ and $(b,u),(v,d)\notin E(T)$, i.e., $E(T)$ contains exactly two edges adjacent to $(u,v)$. Then, again after relabeling there exists a directed path $P(b,u)\in\mathcal{P}$. Analogously, we can relabel $d\neq v$ such that there exists a directed path $P(v,d)$ in $N$ with only its start- and endpoint in $V(T)$. To ensure that the endpoint of $P(v,d)$ is in $V(T)$ we use the fact that $T$ and $N$ have the same leaf set. This means, $\eta^*(N)$ can increase after applying $\sigma$ on $(u,v)$ only if the subtree $T_{\sigma}$ of $\sigma(N)$ we obtain from $T$ is a maximum covering subtree with $(u,v),(v,u)\notin E(T_{\sigma})$. Then, $\eta^*(\sigma(N))=\eta^*(N)+1$. The same argument holds for $\sigma'$ instead of $\sigma$. However, if we exchange either the roles of $a$ and $b$ or the roles of $c$ and $d$, then $\eta^*(\sigma'(N))=\eta^*(N)$ or $\eta^*(\sigma(N))=\eta^*(N)$, respectively. Thus, our claim follows.
    
\end{proof}

\cref{lem::switching} highlights a central issue when studying the CTBP for $N\in\mathcal{N}$: covering subtrees are a genuine global substructure of the network while rNNI operations are local. \cref{lem::switching}.1 was derived from the maximality of the covering subtree $T$. \cref{exmp::eta} is an application of this characterization. However, the upper bound on the CTBP of $\sigma(N)$ in \cref{lem::switching}.2 can be arbitrarily large because we cannot derive a sufficient condition for the maximality of covering subtrees in $\sigma(N)$ from rNNI operation $\sigma$ alone. This means, we need structural constraints on the neighborhood of edges affected by a rNNI operation to quantify the relationship between the CTBP and NTBP further. 
The following result is derived from such a constrained approach:

\begin{figure}[!t]
    \centering
    \includegraphics[width=0.75\linewidth]{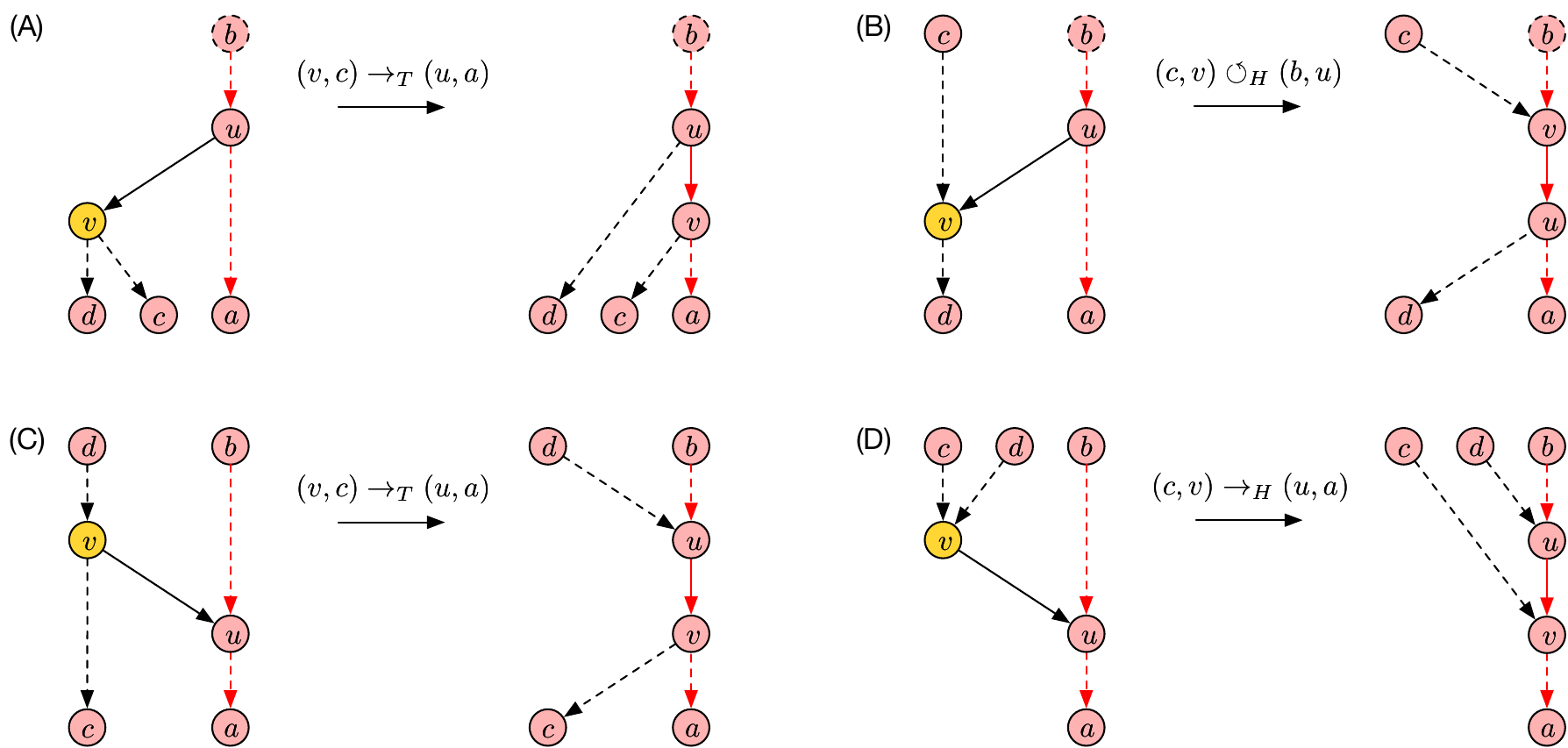}
    \caption{Four rNNI operations for the respective graphs shown. Red vertices and edges indicate a maximum covering subtree in the proof of \cref{prop:rNNI:2}.}
    \label{fig:etaUB}
\end{figure}

\begin{proposition}\label{prop:rNNI:2}
    Let $N\in\mathcal{N}$. If $N$ is not tree-based, then there exists a rNNI operation $\sigma$ such that $\eta^*(\sigma(N))=\eta^*(N)-1$.
\end{proposition}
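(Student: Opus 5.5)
We fix a maximum covering subtree $T$ of $N$. Since $N$ is not tree-based, $\eta^*(N)=\eta(N,T)\geq 1$, so some vertex of $N$ is missing from $T$. The plan is to find a missing vertex adjacent to $T$ in a controlled way, and then apply one rNNI that pulls exactly one vertex into a covering subtree of $\sigma(N)$ without losing any.

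First we locate a good edge. Because $N$ is rooted and every vertex is reachable from $\rho(N)$, there is an edge $(x,w)\in E(N)$ with $x\in V(T)$ and $w\notin V(T)$. The root and the leaves of $N$ lie in $V(T)$, so $w$ is internal. Since $T$ is maximum, $x$ cannot have out-degree one in $T$: otherwise $T$ could be extended. By binarity this means the following.
\begin{itemize}
\item If $x$ is an internal tree vertex, then $x$ is a leaf of $T$ that is not a leaf of $N$. This is impossible, since $T$ is a covering subtree with the same leaf set. Hence $x$ uses both out-edges in $T$, and $(x,w)\in E(T)$, a contradiction.
\end{itemize}
So $x$ must be a reticulation vertex, and $x$ is then a leaf of $T$, which is again impossible. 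We conclude that the missing vertex $w$ is entered only from missing vertices. Iterating this backwards contradicts reachability from the root.

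The conclusion is that this naive argument fails, and the correct picture is different. The vertices missing from $T$ form paths or substructures that hang between vertices of $T$, as in \cref{lem::switching} and \cref{fig:switching}. Concretely, a missing vertex $w$ has a parent $x\in V(T)$ such that $(x,w)\notin E(T)$. Here $x$ is a tree vertex of $N$ whose two out-edges split: one edge $(x,y)$ lies in $T$ and the other is $(x,w)$.

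We choose $w$ together with its neighbourhood so that one of the four local configurations in \cref{fig:etaUB} occurs. The cases are distinguished by whether $w$ is a reticulation or a tree vertex, and by whether its other neighbour lies in $T$. In each configuration we perform the rNNI on $(x,w)$ or on $(x,y)$ that swaps the roles of the two child edges. We then exhibit a covering subtree of $\sigma(N)$ explicitly: it is $T$, modified locally, together with $w$. This gives $\eta^*(\sigma(N))\leq\eta^*(N)-1$, because every vertex of $T$ is still covered.

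For the reverse inequality, suppose $\eta^*(\sigma(N))\leq\eta^*(N)-2$. We argue in the style of \cref{lem::switching}. Reversing $\sigma$ is itself an rNNI, and a local rearrangement can change the covering of at most one vertex among $x$ and $w$ in a maximum subtree. Therefore reversing $\sigma$ would yield a covering subtree of $N$ with fewer missing vertices than $T$, contradicting maximality.

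The main obstacle is the existence step. We must guarantee that some missing vertex $w$ and its attachment point $x$ fall into a configuration where the rNNI is well-defined. This means the result must be acyclic and still rooted binary, with no vertex of in-degree or out-degree zero created, and without losing leaves. We must also guarantee that the rerouting still reaches all leaves. To handle this, we first try choosing $w$ to be a missing vertex that is topologically first among all missing vertices. Its parents are then all in $T$, and we use binarity to enumerate the at most four neighbourhood types.

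If acyclicity fails for one orientation, we would switch to the symmetric rNNI that keeps or flips the direction on $\{x,w\}$, as was done in Case 2 of the proof of \cref{prop:rNNI:1}.
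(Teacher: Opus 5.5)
\paragraph{Overall.} Your skeleton is the paper's: fix a maximum covering subtree $T$, take a missing vertex $w$ whose parents all lie in $T$, and apply a local rNNI that threads $w$ into $T$. Your observation that each such parent $x$ is a tree vertex with its other out-edge $(x,y)\in E(T)$ is also correct. However, the proof has a genuine gap in the reverse inequality, and the move itself is never specified.

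\paragraph{The move.} The move you do describe is vacuous: an rNNI that ``swaps the roles of the two child edges'' of $x$ just exchanges $(x,y)$ and $(x,w)$ at $x$, which changes nothing. What is needed, as in \cref{fig:etaUB}, is the following.
\begin{itemize}
\item If $w$ is a tree vertex with children $c,d$: the rNNI on $(x,w)$ exchanging $(x,y)$ with $(w,d)$. Then $T-(x,y)+(x,w)+(w,y)$ is a covering subtree of $\sigma(N)$. Here $y\notin\{c,d\}$ by maximality of $T$. Acyclicity holds because any new cycle would pass through $x$ and return to its parent from a descendant of $x$.
\item If $w$ is a reticulation with second parent $x'$ and child $d$, and $b$ is the parent of $x$: the direction-reversing rNNI producing $b\to w$, $x'\to w$, $w\to x$, $x\to y$, $x\to d$. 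Then $T-(b,x)+(b,w)+(w,x)$ works, with some care (e.g.\ exchanging the roles of $x$ and $x'$) to stay acyclic.
\end{itemize}
This is the easy half, so your ``main obstacle'' is misplaced. Also delete the opening ``naive argument'': its first step, that $x$ cannot have out-degree one in $T$, is false.

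\paragraph{The gap: the reverse inequality.} The missing piece is $\eta^*(\sigma(N))\geq\eta^*(N)-1$. You argue that reversing $\sigma$ is an rNNI and that a local rearrangement changes the covering status of at most one vertex. That assumes what must be shown, and as a general principle it is false. \cref{lem::switching}.1 gives rNNIs with $\eta^*(N)-\eta^*(\sigma(N))=\max\{|E(P)|\,:\,P\in\mathcal{P}\}-1$, which can be arbitrarily large, as the paper notes right after the lemma. The reason is that a maximum subtree of $\sigma(N)$ may use the new edges so that rerouting it in $N$ forces a whole path to be discarded, not just $x$ or $w$.

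You therefore have to work with your specific $\sigma$. Take an arbitrary covering subtree $T'$ of $\sigma(N)$ and, according to which newly created edges it uses, reroute it into a covering subtree of $N$ with at most one vertex fewer.
\begin{itemize}
\item In the tree-vertex case this closes. The new edges are $(x,d)$ and $(w,y)$. In each of the four cases you replace them by $(x,y)$ and/or $(w,d)$, losing at most $w$.
\item In the reticulation case it does not close by itself. Suppose $T'$ reaches $x$ through $x'\to w\to x$ while $b\notin V(T')$, or $w$ is the only child of $x'$ in $T'$. Then the naive rerouting can cascade, and an additional argument is required. This is exactly where the paper invokes \cref{lem::switching}.2, together with the fact that the path $P(c,d)$ in \cref{fig:etaUB}(B) is not a subpath of $T^*$.
\end{itemize}
As written, your proposal establishes only $\eta^*(\sigma(N))\leq\eta^*(N)-1$, not equality.
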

\begin{proof}
    Assume $N$ is not tree-based. 
    and let $T^*$ be shown in red in \cref{fig:etaUB}. First, assume $(u,v)\in E(N)$ and $d_{\text{out}}(v)=2$. Then, the edges incident to $(u,v)$ are shown in the graph on the left in \cref{fig:etaUB}(A). Let $\sigma=(v,c)\to_T(u,a)$. Then, $\sigma(N)$ is depicted in the graph on the right in \cref{fig:etaUB}(A). Observe that $\sigma$ replaces edge $(u,a)$ by edges $(u,v)$ and $(v,a)$ to transform $T^*$ into a covering subtree $T_{\sigma}$ of $\sigma(N)$ with $\eta(\sigma(N),T_{\sigma})=\eta^*(N)-1$. Since the paths $P(v,c)$ and $P(v,d)$ in $N$ are not in $T^*$, the maximality of $T^*$ implies that $T_{\sigma}$ is a maximum covering subtree of $\sigma(N)$. Thus, $\eta^*(\sigma(N))=\eta^*(N)-1$.
    
    Next, assume $(u,v)\in E(N)$, $d_{\text{out}}(v)=1$. Then, the edges incident to $(u,v)$ are shown in the graph on the left in \cref{fig:etaUB}(B). Let $\sigma=(c,v)\circlearrowleft_H(b,u)$. Since $P(c,d)$ is not a subpath of $T^*$, we can apply \cref{lem::switching}.2 for reversed roles of $N$ and $\sigma(N)$ to obtain $\eta^*(N)=\eta^*(\sigma(N))+1$. The remaining cases shown in \cref{fig:etaUB}(C) and~(D) are analogous to \cref{fig:etaUB}(B) and~(A), respectively.
\end{proof}

Observe that the relationship between the NTBP and CTBP described by \cref{prop:rNNI:2} is significantly simpler than the necessary and sufficient conditions in \cref{prop:rNNI:1} to characterize the relationship between the NTBP and ATBP. Moreover, while there exists no rNNI operation $\sigma$ such that $\delta^*(\sigma(N))\leq\delta^*(N)-2$, \cref{lem::switching} tells us that $\eta^*(\sigma(N))\leq\eta^*(N)-k$ is possible for any integer $k\geq 2$. This qualitative difference between the ATBP and CTBP relative to the NTBP stems in part from the fact that a maximum covering subtree of a rooted binary phylogenetic network is not unique while the set of maximal W-fences is unique (see \cref{thm::TB}). 

Now, in light of \cref{thm:saturated}, an immediate consequence of \cref{prop:rNNI:1} and \cref{prop:rNNI:2} is the following:
\begin{corollary}\label{cor:bounds}
    Let $N\in\mathcal{N}$. Then, $\delta^*(N)\leq\delta_{\text{rNNI}}(N)\leq\eta^*(N)$. Both inequalities hold with equality if and only if there exists a $\mathcal{W}_N$-saturated M-W-matching in $N$.
\end{corollary}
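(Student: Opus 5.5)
The two inequalities come from iterating the one-step statements already proved. For the lower bound I would use the last claim of \cref{prop:rNNI:1}: every rNNI operation $\sigma$ satisfies $\delta^*(\sigma(N))\geq\delta^*(N)-1$. Let $N=N_0,N_1,\dots,N_k=N'$ be any rNNI sequence ending in a tree-based network $N'\in\mathcal{N}_{\text{TB}}$. Then $\delta^*(N')=0$ by \cref{thm::TB::Wfence} together with the W-fence count characterization of $\delta^*$. Since $\delta^*$ drops by at most one per step, $k\geq\delta^*(N)$. Taking the minimum over all such sequences gives $\delta^*(N)\leq\delta_{\text{rNNI}}(N)$, which is the first part of \cref{cor:rNNI:1}.

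For the upper bound I would induct on $\eta^*(N)$ using \cref{prop:rNNI:2}. If $\eta^*(N)=0$, then $N$ is tree-based and $\delta_{\text{rNNI}}(N)=0$. Otherwise $N$ is not tree-based, so \cref{prop:rNNI:2} supplies an rNNI $\sigma$ with $\eta^*(\sigma(N))=\eta^*(N)-1$. Note that $\sigma(N)\in\mathcal{N}$, since an rNNI preserves binarity, rootedness and acyclicity. Applying this $\eta^*(N)$ times produces a network $N'$ with $\eta^*(N')=0$, that is, a tree-based network. Hence $\delta_{\text{rNNI}}(N)\leq\eta^*(N)$.

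For the equality statement, note that both inequalities are tight exactly when $\delta^*(N)=\eta^*(N)$: then the middle term is sandwiched between equal values, and conversely equality in both forces the outer terms to agree. By \cref{thm:saturated}, $\delta^*(N)=\eta^*(N)$ holds if and only if $N$ has a $\mathcal{W}_N$-saturated M-W-matching. This settles the equivalence directly and needs no appeal to the M-W-rNNI-matching characterization.

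The main obstacle is the upper bound, specifically checking that \cref{prop:rNNI:2} can be applied to every intermediate network. Each intermediate $N_i$ must lie in $\mathcal{N}$ and have $\eta^*(N_i)$ equal to the claimed value. The second point holds by construction. The first relies on the rNNI chosen in the proof of \cref{prop:rNNI:2} being well-defined, in particular acyclic, and I would take that as given by the proposition. I would also make sure the induction stops exactly at a tree-based network. This follows because $\eta^*(N')=0$ means there is a covering subtree missing no vertex, which is a support tree.
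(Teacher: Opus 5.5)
Your proof is correct and follows the paper's own route. The paper calls this corollary an immediate consequence of \cref{prop:rNNI:1}, \cref{prop:rNNI:2} and \cref{thm:saturated}, and you fill in those details the same way. For the lower bound, you iterate the final claim of \cref{prop:rNNI:1}, that no rNNI lowers $\delta^*$ by two or more. For the upper bound, you iterate \cref{prop:rNNI:2}. For the equality case, you sandwich $\delta_{\text{rNNI}}(N)$ between $\delta^*(N)$ and $\eta^*(N)$ and then invoke \cref{thm:saturated}.
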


Combining \cref{cor::proximitySummary} and \cref{cor:bounds}, we now know that \cref{prop::rNNISummary} in the beginning of this section holds, too. Moreover, recall from the previous subsection that \cref{thm:saturated} and \cref{cor:rNNI:1} imply that there exist $N\in\mathcal{N}$ with $\delta^*(N)=\delta_{\text{rNNI}}(N)<\eta^*(N)$. Indeed, we quantify this special case of \cref{cor:bounds} as follows:

\begin{figure}[!b]
    \centering
    \includegraphics[width=0.5\linewidth]{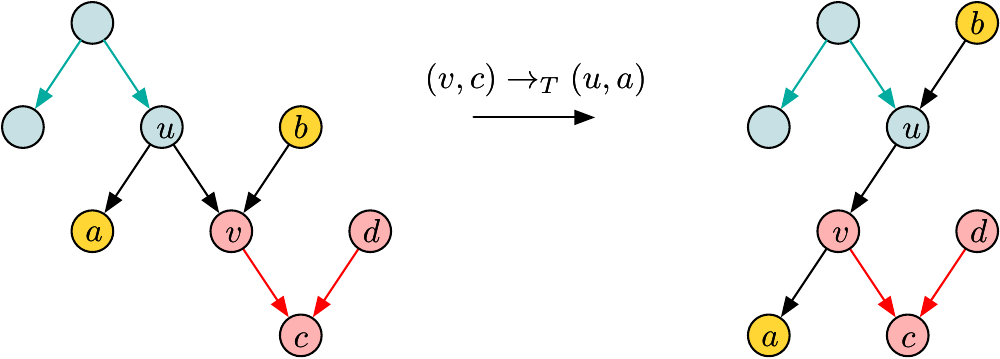}
    \caption{The graph on the right can be obtained from the graph on the left by the shown rNNI operation.
    }
    \label{fig:Nfenceproof}
\end{figure}

\begin{figure}[!t]
    \centering
    \includegraphics[width=0.8\linewidth]{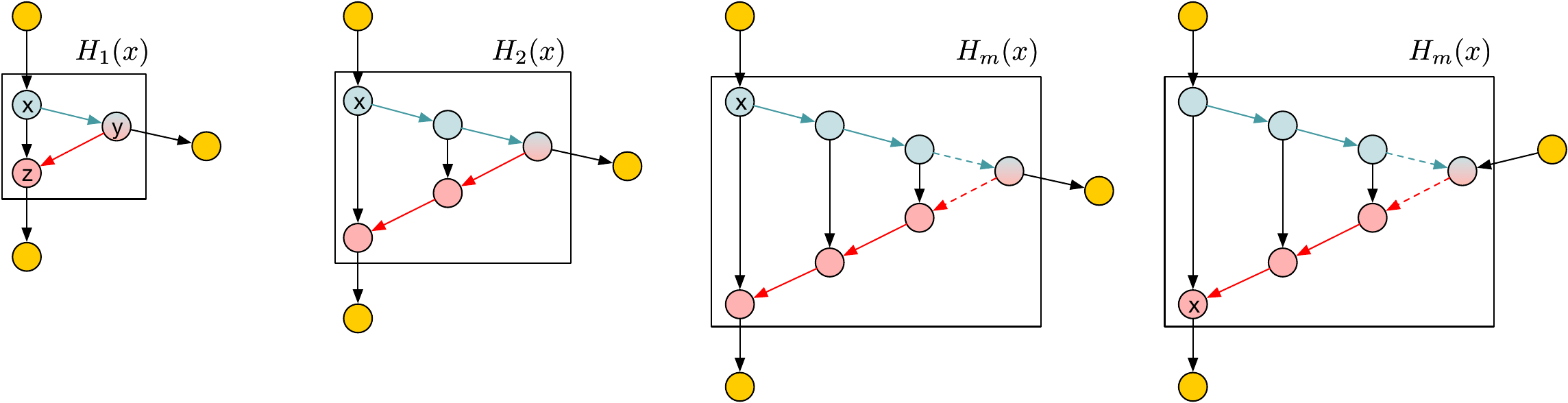}
    \caption{From left to right: $H_1(x)$ depicts the directed subgraph obtained from an internal tree vertex $x$ of a rooted binary phylogenetic network by inserting a new edge $(y,z)$ such that $y$ and $z$ are the children of $x$. $H_2(x)$ is constructed from $H_1(x)$ by inserting another edge for internal tree vertex $y$. For any positive integer $m$, inductive repetition of this edge insertion yields $H_m(x)$. If $x$ is a reticulation vertex then $H_m(x)$ is constructed analogously and shown on the right.
    }
    \label{fig:Nfenceproof2}
\end{figure}

\begin{restatable}{proposition}{propetak}\label{prop::eta-k}
    For any integer $k>0$ there exists $N\in\mathcal{N}$ such that $\delta^*(N)=\delta_{\text{rNNI}}(N)=\eta^*(N)-k.$
\end{restatable}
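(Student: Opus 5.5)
We construct, for each $k>0$, an explicit network $N_k\in\mathcal{N}$ with exactly one W-fence. We also make sure that this W-fence can be eliminated by a single rNNI coming from one of the configurations in \cref{prop:rNNI:1}, while every maximum covering subtree misses $k+1$ vertices. Then \cref{cor:rNNI:1} gives $\delta^*(N_k)=\delta_{\text{rNNI}}(N_k)=1$, and it remains to show $\eta^*(N_k)=k+1$.

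The base network is $N_1$ from \cref{exmp::eta}. There $|\mathcal{W}_{N_1}|=1$, and the edge $(u,v)$ is the connecting edge of a distance-1 W-M-pair, so the singleton $\{(Z,Z')\}$ is a $\mathcal{W}_{N_1}$-saturated M-W-rNNI-matching. After one application of condition~3 of \cref{prop:rNNI:1} there are no W-fences left. By \cref{thm::TB::Wfence} the resulting network is tree-based, so the sequence in \cref{cor:rNNI:1} has length $k=1$ and both conditions there hold trivially. Hence $\delta^*=\delta_{\text{rNNI}}=1$.

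To enlarge $\eta^*$ without changing $\mathcal{W}_N$, we use the gadgets $H_m(x)$ of \cref{fig:Nfenceproof2}. Each inserted edge $(y,z)$ joins two children of an internal tree vertex, or two parents of a reticulation. Every new edge therefore lies in an N-fence, or joins an existing N-fence or crown. No new W-fence and no new M-fence is created, and the existing M- and W-fences keep their types. We must check that the distance-1 W-M-pair through $(u,v)$ survives, so we place the gadgets away from $u$ and $v$, on the path region that a covering subtree is forced to bypass (the yellow vertices in \cref{fig:covering-failure}). Replacing a vertex $x$ on that region by $H_m(x)$ should add exactly $m$ vertices that every covering subtree must skip. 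We then choose the total number of inserted edges so that $\eta^*(N_k)=k+1$. The single rNNI that makes $N_k$ tree-based is still available, because acyclicity is unaffected by gadgets placed away from $(u,v)$.

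The main obstacle is the exact value of $\eta^*(N_k)$. The upper bound is easy: we exhibit a covering subtree that avoids only the designated vertices, extending the red tree of \cref{fig:covering-failure} through each $H_m(x)$ along one side. For the lower bound we argue that any covering subtree has exactly one incoming edge at each reticulation it contains. The W-fence forces some lower-endpoint reticulation path to be abandoned. Inside each $H_m(x)$ on that abandoned path, the chained reticulations $z_1,\dots,z_m$ can only be reached through the abandoned branch, since their other parents are themselves gadget vertices on the same branch. So all $m$ of them are excluded. If this local argument becomes messy, the fallback is to count via \cref{thm:saturated}: we show that no $\mathcal{W}_{N_k}$-saturated M-W-matching exists, which gives $\eta^*>\delta^*$, and we combine this with \cref{lem::switching}.1 applied to the final rNNI. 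That lemma gives $\eta^*(N_k)-\eta^*(\sigma(N_k))=\max|E(P)|-1$, where the maximum path length is the length of the subdivided path, which we control to be $k+2$. Since $\eta^*(\sigma(N_k))=0$, this yields $\eta^*(N_k)=k+1$.
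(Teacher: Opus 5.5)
Your reduction is fine: if $N_k$ has a single W-fence and one rNNI makes it tree-based, then $\delta^*(N_k)=\delta_{\text{rNNI}}(N_k)=1$ follows directly from $\delta^*=|\mathcal{W}_{N_k}|\leq\delta_{\text{rNNI}}\leq 1$. Everything then rests on $\eta^*(N_k)=k+1$. This is a more direct framing than the paper's, which starts from an arbitrary $N$ with a condition-2 pair and shows that $\sigma=(v,c)\to_T(u,a)$ lowers $\eta^*$ by $k$.

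\textbf{The lower bound on $\eta^*$ is not established.} A W-fence only forces \emph{some} top vertex to become a dead end, and which branch a maximum covering subtree gives up is the outcome of the optimisation. Putting $H_m(x)$ on the branch that the red tree of \cref{fig:covering-failure} gives up raises the cost of that one choice only. A maximum covering subtree of the modified network will then in general reroute, for instance by giving up a different branch or by letting the reticulation below the $z$-chain take its parent from the chain. So $\eta^*$ stays at the cost of the cheapest alternative. Your claim that $z_1,\dots,z_m$ ``can only be reached through the abandoned branch'' presupposes that the branch is still abandoned after the insertion, which is exactly what must be proved. Note also that $H_m(x)$ adds $2m$ vertices, and whether $0$, $m$ or $2m$ of them are missed depends on how the optimal tree behaves at $x$.

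Your fallback is circular. \cref{lem::switching}.1 measures $\max|E(P)|$ relative to a \emph{maximum} covering subtree of $N_k$, so you cannot ``control it to be $k+2$'' without already knowing the maximum covering subtrees. The detour through \cref{thm:saturated} only gives $\eta^*>\delta^*$. The paper confronts precisely this issue. It case-splits on which of $(u,a),(u,v),(b,v)$ a maximum covering subtree uses. When inserting $H_m(b)$ makes another tree $T'$ maximum, it inserts a second gadget $H_{m'}$ at a parent $w\in V(T^*)\setminus V(T')$ of $c'$ to knock $T'$ out.

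\textbf{The claim that the gadgets create no new W- or M-fence is false as stated.} Insert $H_1(x)$ at a tree vertex $x$, with the new tree vertex $y$ on $(x,c_1)$ and the new reticulation $z$ on $(x,c_2)$. This cuts the maximal zig-zag trail through the out-edges of $x$ into a trail ending at $y$ as a head and a trail starting at $z$ as a tail; the same happens at a reticulation. Suppose that trail is an N-fence and $y$ sits on the side leading to its bottom endpoint. Then you get an M-fence and a new W-fence, so $\delta^*$ becomes $2$. Likewise, a gadget at a top vertex of $Z'$ splits $Z'$ and can leave $u$ in the N-fence piece. That destroys the distance-1 W-M-pair even though the gadget is ``away from $u$ and $v$''. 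You therefore need explicit placement and orientation rules, and they must be compatible with the placement your $\eta^*$ count requires. This is what the final two paragraphs of the paper's proof are for: the bold edges in \cref{fig:Nfenceproof3}(A) and the check for the edge $(x,y)$.
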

\begin{proof} 
    First, let $k$ be a positive integer and let $N\in\mathcal{N}$ such that there exists a $\mathcal{W}_N$-saturated M-W-rNNI-matching in $N$ for which \cref{prop:rNNI:1}.2 is satisfied at least once. This means, there exists a distance-1 M-W-pair $(Y,Z)$ such that its connecting edge $(u,v)$ is incident to the endpoint of $Z$. Choose $N$ such that $u$ is the endpoint of $Y$ and $(u,v)$ is not in a W-fence in $\mathcal{Z}(N)$. Then, there exist $a,b\in V(N)$, $a\neq v$, $b\neq u$, such that $(u,a),(u,v),(b,v)\in E(N)$. The graph on the left in \cref{fig:Nfenceproof} depicts $Y$ and $Z$ in blue and red, respectively. Then, we consider the rNNI operation $\sigma=(v,c)\to_T(u,a)$ in the same figure. Now, we show how $N$ can be modified to achieve $\eta^*(\sigma(N))=\eta^*(N)-k$. Afterwards we will modify $N$ further to maintain the existence of a $\mathcal{W}_N$-saturated M-W-rNNI-matching in $N$.
    
    \begin{description}
        \item[Case 1:] There exists a maximum covering subtree $T^*$ of $N$ such that $(u,a)\in E(T^*)$ and $(u,v),(b,v)\notin E(T^*)$. Let $v'\in V(T^*)$, $v'\neq u$, such that there exists a directed path $P(v',v)$ with $E(P(v',v))\subset E(N)\setminus E(T^*)$. Necessarily $b\in P(v',v)$ because $u\in V(T^*)$. Then, there exists a covering subtree $T_{\sigma}$ of $\sigma(N)$ we obtain from $T^*$ by 
        \begin{enumerate}
            \item removing the maximal length subpath $P$ of $T^*$ with endpoint $u$ consisting only of edge subdivisions of $T^*$;
            \item adding $P(v',v)$ by subdividing its last edge by $u$ (see \cref{fig:Nfenceproof} for last edge $(b,v)$). 
        \end{enumerate}
        Let $p=|V(T^*)\setminus V(T_{\sigma})|$ and $q=|V(T_{\sigma})\setminus V(T^*)|=|V(P(v',v))|-2$. Then, we have
        \begin{align*}
            \eta(\sigma(N),T_{\sigma})=\eta(N,T^*)+p-q.
        \end{align*}
        If $k=q-p$, then $\eta^*(\sigma(N))=\eta(\sigma(N),T_{\sigma})=\eta^*(N)-k$ follows from the maximality of $T^*$. We show how~$N$ can be modified to draw the same conclusion for $k\neq q-p$. Without loss of generality $k>q-p$. Otherwise we reverse the role of paths $P(v',v)$ and $P$ in the following. Then, for a positive integer~$m$, we replace an internal vertex of $P(v',v)$, say vertex~$b$, in $N$ by the auxiliary graph $H_m(b)$ shown in~\cref{fig:Nfenceproof2}. Subsequently, $T^*$ might no longer be maximal. In this case the red and/or blue paths in~\cref{fig:Nfenceproof2} are subpaths of a maximum covering subtree $T'$ of $N$ with maximal overlap $|V(T')\cap V(T^*)|$. 
        
        If the blue path is in $T'$ but not the red path, then we replace $T^*$ by $T'$ and all previous arguments for $T^*$ still apply. Indeed, again $(b,v),(v,c)\notin E(T^*)$, $p$ cannot increase and $q$ is a function $q(m)$ increasing in $m$. This means, we can prove our claim by choosing $q(m)=k+p$ for a sufficiently large $m$.

        If the red path is in $T'$, then $T'$ also contains the blue path by the maximality of $T'$. Now, $(b,v),(v,c)\in E(T')$ because $T'$ and $N$ have the same leafset. Then, there exists a directed subpath $P(v,c')$ of $T'$ of minimal length such that $c'\in V(T')\cap V(T^*)$. For a positive integer $m'$, we replace parent $w\in V(T^*)\setminus V(T')$ of $c'$ in $N$ by auxiliary graph $H_{m'}$. By choosing $m'$ sufficiently large we can ensure that $T'$ is no longer a maximum covering subtree of $N$. Hence, no maximum covering subtree of $N$ contains the red path of $H_m$ and we can return back to our previous case to consider $T^*$ as a maximal covering subtree of $N$ which contains the blue path but not the red path.

        \item[Case 2:] There exists a maximum covering subtree $T^*$ of $N$ such that $(u,v)\in E(T^*)$ and $(b,v)\notin E(T^*)$. Analogous to Case~1 because $(b,v)\notin E(T^*)$ and $(v,c)\in E(T^*)$.
    
        \item[Case 3:] There exists a maximum covering subtree $T^*$ of $N$ such that $(b,v)\in E(T^*)$. Analogous to Case~1 or~2 if $(u,a)\notin E(T^*)$ by reversing the direction on all arguments. Hence, assume $(u,a)\in E(T^*)$. Then, $(d,c)\in E(N)\setminus E(T^*)$ for a vertex $d\neq v$. Subdivide edges $(b,v)$ and $(d,c)$ by vertices $x$ and $y$, and add edge $(x,y)$ to $N$. Then, path $P(b,c)$ in $T^*$ can be replaced by the path constituted by edges $(b,x)$, $(x,y)$, $(y,c)$ to obtain a maximum covering subtree $T'$ of $N$. Since $(x,v)$ takes the role of $(b,v)$ with respect to rNNI operation $\sigma$ and $(x,v),(v,c)\notin E(T')$, Case~1 applies.
        %
        
        \item[Case 4:] There exists a maximum covering subtree $T^*$ of $N$ such that $(u,a),(u,v),(b,v)\notin E(T^*)$. Then, for a positive integer~$m$, replace $b$ in $N$ by the auxiliary graph $H_m(b)$. For sufficiently large $m$, Case~3 applies.
    \end{description}

\begin{figure}[!t]
    \centering
    \includegraphics[width=0.8\linewidth]{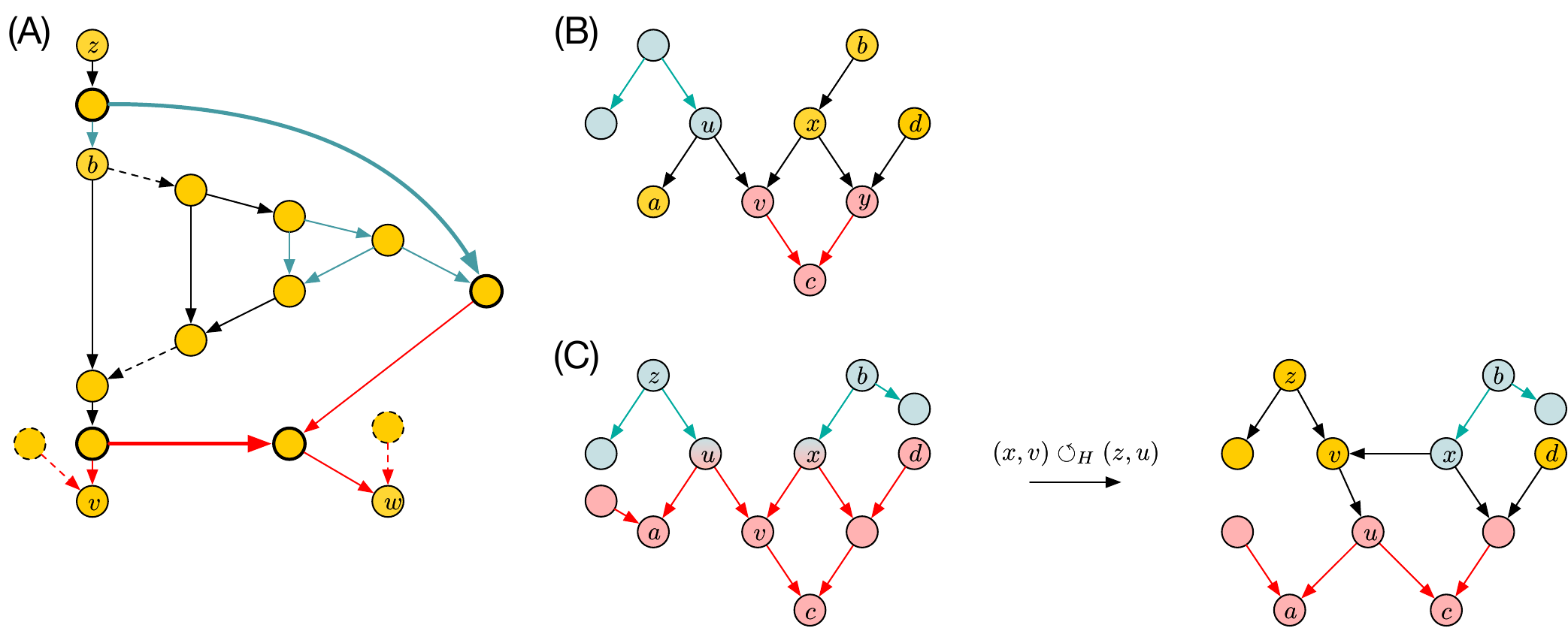}
    \caption{(A) A modification of auxiliary graph $H_m(b)$; (B) a modification of the graph on the left in \cref{fig:Nfenceproof}; (C) a rNNI operation on a modification of the graph on the left in \cref{fig:Nfenceproof}. Blue and red vertices and edges refer to M-fences and W-fences, respectively, in the proof of \cref{prop::eta-k}.}
    \label{fig:Nfenceproof3}
\end{figure}
    
    Finally, observe that there does not exist a $\mathcal{W}_{\sigma(N)}$-saturated M-W-rNNI-matching in $\sigma(N)$ only if the insertion of auxiliary graph $H_m(b)$ or edge $(x,y)$ in Case~3 does change any of conditions in \cref{prop:rNNI:1}.1 to~\cref{prop:rNNI:1}.5 satisfied by the original network. Indeed, without the insertion of graph $H_m(b)$ or edge $(x,y)$ notice that our original M-fence $Y$ and W-fence $Z$ either are both contained in a M-fence and W-fence in $\mathcal{Z}(\sigma(N))$, respectively, or are both a N-fence in $\mathcal{Z}(\sigma(N))$. 

    First, consider the insertion of $H_m(b)$. If $b$ is a reticulation vertex in $N$, then the W-fence in $\mathcal{Z}(N)$ containing $b$ as an internal vertex is decomposed into two N-fences through the insertion of $H_m(b)$ (see \cref{fig:Nfenceproof2}). Otherwise $b$ is a tree vertex. Then, we insert the edges and vertices shown in bold in \cref{fig:Nfenceproof3}(A) into $H_m(b)$. This means, $b$ is the endpoint of a M-fence $A$ (shown in blue) and either $v$ or $w$ is an internal vertex of a W-fence $B$ (shown in red). Indeed, if $v$ and $w$ are in a W-fence in $\mathcal{Z}(N)$, then $(u,v)$ is in a W-fence, contradicting our initial choice of $N$. This means, $(A,B)$ is either a M-W-pair or a distance-1 M-W-pair such that its connecting edge is incident to the endpoint of $B$, ensuring the existence of a $\mathcal{W}_N$-saturated M-W-rNNI-matching.

    Next, consider the insertion of edge $(x,y)$ in Case~3. Since initially edges $(u,a)$, $(u,v)$ and $(b,v)$ are in the same N-fence or crown, the graph on the left in either  \cref{fig:Nfenceproof3}(B) or (C) correctly depicts M-fences and W-fences in blue and red, respectively. In case (B) edges adjacent to $(u,v)$ have the same relationships to M-fences and W-fences as for the initial choice of $N$ (see \cref{fig:Nfenceproof}). In case (C) the existence of a $\mathcal{W}_N$-saturated M-W-rNNI-matching is guaranteed, too. To see this consider the effect of the rNNI operation shown in \cref{fig:Nfenceproof3}(C) on W-fences.
\end{proof}

To conclude this section we consider an example to illustrate the relation $\delta^*(N')<\delta_{\text{rNNI}}(N')<\eta^*(N')$ derived from \cref{thm:saturated} and \cref{prop::delta+k}.

\begin{figure}[!b]
    \centering
    \includegraphics[width=0.6\linewidth]{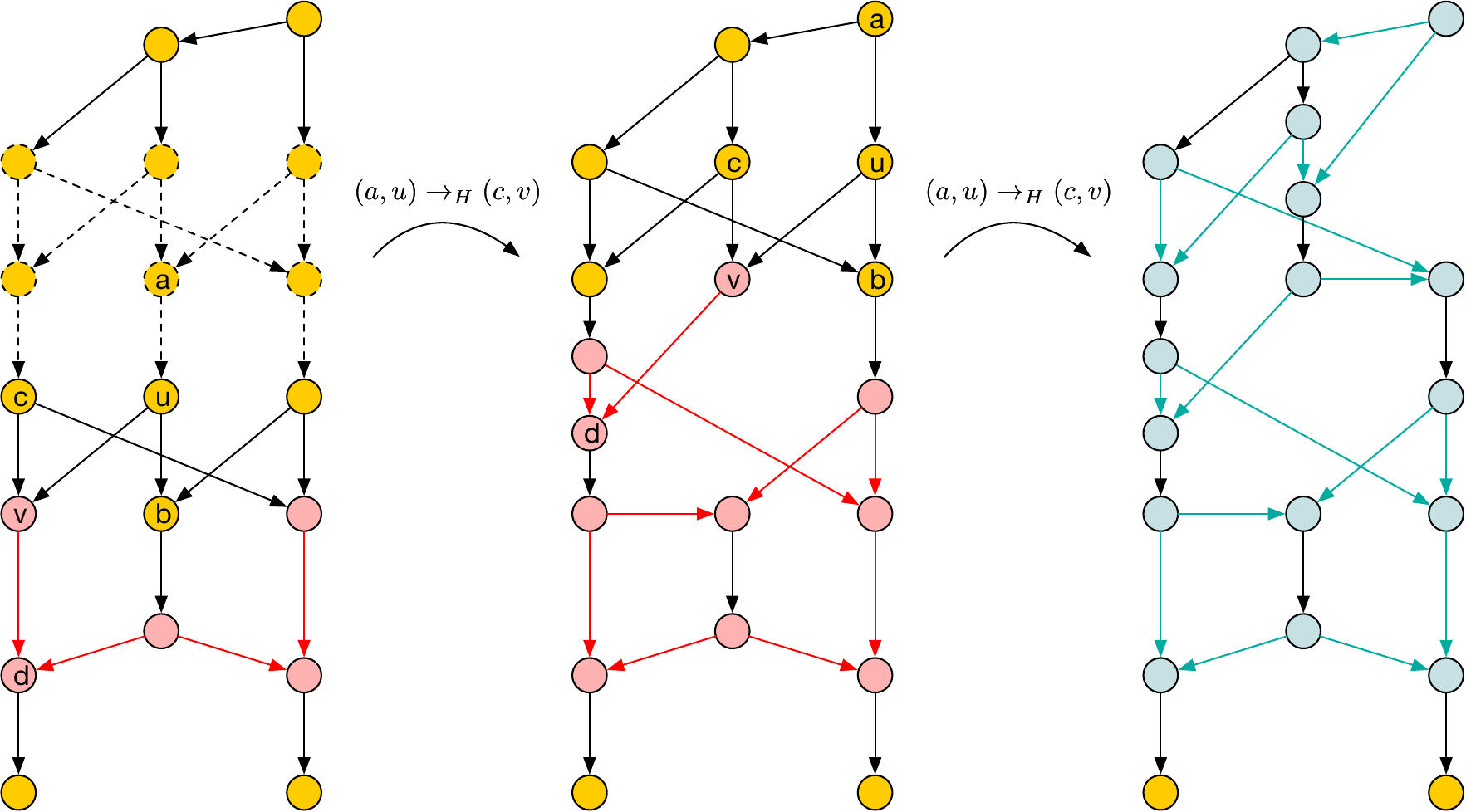}
    \caption{Three rooted binary phylogenetic networks which differ by the rNNI operations shown. All W-fences in the decomposition of each network into maximal zig-zag trails are shown in red. A N-fence is shown in blue.}
    \label{fig:Wfence-failure}
\end{figure}

\begin{example}\label{exmp::delta}
Consider the networks $N_1,N_2,N_3$ from left to right in Figure~\ref{fig:Wfence-failure}. Note that $N_3$ is tree-based but $N_1$ and $N_2$ are not. Like in \cref{fig:covering-failure} the crown and N-fences indicated by dashed lines in $N_1$ can be duplicated multiple times to increase the path from the root to the leaves to any arbitrary but fixed length. Observe that $N_2$ contains a $\mathcal{W}_N$-saturated M-W-rNNI-matching because edge $(u,v)$ fits into \cref{prop:rNNI:1}.2. Moreover, $N_1$ contains no $\mathcal{W}_N$-saturated M-W-rNNI-matching because no pair of M-fences and W-fences fits into Proposition~\ref{prop:rNNI:1}. Hence,
\begin{align*}
    \delta^*(N_1)=1<\delta_{\text{rNNI}}(N_1)=2<\eta^*(N_1)=4
\end{align*}
where the last equation follows from the same observation as in \cref{exmp::eta} due to the similarity of the network constructions in \cref{fig:covering-failure} and~\cref{fig:Wfence-failure}. Thus, by extending $N_1$ with crowns and N-fences $\delta_{\text{rNNI}}(N_1)$ can be made arbitrarily large while $\delta^\ast(N_1)=1$ and $\eta^*(N_1)=2\cdot\delta_{\text{rNNI}}(N_1)$ holds.
\end{example}

\subsection{A neighborhood-based exact solution algorithm for the NTBP}\label{sec:algo}
\begin{algorithm}[!t]
\caption{An exact solution algorithm for the NTBP}\label{alg:bruteforce-rntbp}
\Input{$N\in\mathcal{N}$}
\Output{$\delta_{\text{rNNI}}(N)$}
Compute $\eta^*(N)$\;
$\mathcal{B}_0(N)\leftarrow\{N\}$\;
\For{$k=1,\dots,\eta^*(N)$}{
    $\mathcal{B}_k(N)\leftarrow\bigcup\left\{\mathcal{B}_1(N')\,:\,N'\in\mathcal{B}_{k-1}(N)\right\}$\;
    \If{$\mathcal{B}_k(N)\cap\mathcal{N}_{TB}\neq\emptyset$}{
        \Return{$k$}
    }
}
\end{algorithm}
Lastly, we provide an exact solution method for the NTBP. To this end, for $N\in\mathcal{N}$ and an integer $k$, we denote $\mathcal{B}_k(N)$ as the set of $k$-neighborhoods of $N$ with respect to rNNI operations, i.e., the set of rooted phylogenetic networks that can be obtained by applying at most $k$ rNNI operations to $N$. For $N\in\mathcal{N}$, let
\begin{align*}
    m_{ij}&=\left|\left\{(u,v)\in  E(N)\,:\,d_{\text{in}}(u)=i,\,d_{\text{in}}(v)=j\right\}\right|,
\end{align*}
let $m=|E(N)|$, $n=|V(N)|$ and let $r$ denote the number of reticulation vertices in $N$. Then, the 1-neighborhood of $N$ is bounded as follows:

\begin{proposition}[Proposition~2 in S1 text in~\cite{gambette2017rearrangement}]\label{prop::B}
    Let $N\in\mathcal{N}$. Then, $$|\mathcal{B}_1(N)|\leq 2(m_{11}+m_{22})+3m_{12}+4m_{21}.$$
\end{proposition}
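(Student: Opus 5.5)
This bound is a cited result (Proposition~2 in the S1 text of~\cite{gambette2017rearrangement}). The plan is to count, for each edge $(u,v)\in E(N)$, how many pairwise non-isomorphic networks can arise from a single rNNI operation on $(u,v)$, and then sum over all edges.

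We classify each edge $(u,v)$ by the in-degrees of $u$ and $v$ in the binary network $N$. This gives the four classes counted by $m_{11},m_{12},m_{21},m_{22}$. Edges incident to the root or to a leaf either admit no rNNI or are dominated by the same counts. For each class we fix the four neighbours $a,b,c,d$ of $u$ and $v$, as in \cref{fig:NNI}. Every rNNI on $(u,v)$ keeps one edge on $\{b,u\}$ or $\{a,u\}$ and one edge on $\{c,v\}$ or $\{v,d\}$, swaps the other two attachments, and then either keeps or reverses the direction on $\{u,v\}$. This gives at most $2\cdot 2\cdot 2=8$ labelled operations. The symmetries recorded after \cref{fig:NNI} collapse these into few classes. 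Swapping the roles of $u$ and $v$ together with reversing $\{u,v\}$ identifies pairs, e.g. (B1)$\equiv$(C2) and (B2)$\equiv$(C1). The operation that merely restores $N$ must also be discarded.

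Next we enumerate the possibilities in each class, imposing the degree constraint (in- and out-degree of $u,v$ at most two) and the requirement that the result be a rooted binary network. When $u$ and $v$ have the same in-degree (classes $m_{11}$ and $m_{22}$), the local picture is symmetric. Only the two genuinely different swaps survive, the ``cross'' pairing and its reversal counterpart, which gives at most $2$. For $m_{12}$ (tree vertex above reticulation), one extra orientation of $\{u,v\}$ is admissible, giving $3$. For $m_{21}$ (reticulation above tree vertex, the configuration of \cref{fig:switching} and \cref{fig:rNNI_1}(B)), the head-moves and tail-moves are distinct, and both orientations of $\{u,v\}$ remain degree-feasible, giving $4$. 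Acyclicity can only remove candidates, so it is consistent with an upper bound.

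Summing over edges gives $|\mathcal{B}_1(N)|\le 2(m_{11}+m_{22})+3m_{12}+4m_{21}$, where the unchanged network $N$ itself is absorbed in the count. The main obstacle is the careful case enumeration for $m_{12}$ and $m_{21}$. Here one must check, using the notation $e_1\to_H e_2$, $e_1\circlearrowleft_T e_2$ and so on, that the in/out-degree bookkeeping after the swap yields exactly $3$ and $4$ admissible non-isomorphic outcomes, and not more. I would tabulate the eight labelled operations per class and strike out the infeasible or duplicate ones.
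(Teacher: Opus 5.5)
The paper gives no proof of its own here, only the citation to Gambette et al. Your plan (count the rNNIs on each edge $(u,v)$ by the in-degrees of $u$ and $v$, then sum) is the natural route to that bound. As written, though, the per-class counts are deferred, and they are the entire content. Your sketch for $m_{11}$ and $m_{22}$ also misidentifies what survives. There the reversed orientation of the middle edge is never degree-feasible, and the two neighbours are the two \emph{different} re-pairings, each with a forced orientation.

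The missing rule is short. For $u\neq\rho(N)$ and $v$ not a leaf, record whether each of the four edges at $u,v$ other than $(u,v)$ points into or out of $\{u,v\}$. Since $u,v$ are unlabelled, an rNNI is determined by one of the two non-trivial splits of these four edges into two pairs, plus an orientation of the middle edge. This is how your $8$ labelled operations collapse to at most $4$. A side receiving two incoming (outgoing) edges forces the middle edge to leave (enter) it, while a side receiving one of each allows both orientations. The patterns are:
\begin{itemize}
\item $(\mathrm{in},\mathrm{out}\mid\mathrm{out},\mathrm{out})$ for $m_{11}$;
\item $(\mathrm{in},\mathrm{in}\mid\mathrm{in},\mathrm{out})$ for $m_{22}$;
\item $(\mathrm{in},\mathrm{out}\mid\mathrm{in},\mathrm{out})$ for $m_{12}$;
\item $(\mathrm{in},\mathrm{in}\mid\mathrm{out},\mathrm{out})$ for $m_{21}$.
\end{itemize}
So for $m_{11}$ and $m_{22}$ both re-pairings are forced, giving $1+1$. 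For $m_{12}$, one re-pairing is $(\mathrm{in},\mathrm{in}\mid\mathrm{out},\mathrm{out})$, which is forced, and the other is $(\mathrm{in},\mathrm{out}\mid\mathrm{in},\mathrm{out})$, which is free, giving $1+2$. For $m_{21}$ both are free, giving $2+2$. Merely reversing a type-$12$ edge is degree-feasible but is not an rNNI, since no edge is moved. Acyclicity, simplicity and isomorphic coincidences only lower these numbers.

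The second gap is the claim that $N$ is ``absorbed in the count''. It needs an argument and is false without a hypothesis. Since $\mathcal{B}_1(N)$ collects the networks reachable by \emph{at most} one operation, it contains $N$, so summing per-edge counts only bounds $|\mathcal{B}_1(N)\setminus\{N\}|$. The missing unit of slack comes from an edge $(w,\ell)$ into a leaf $\ell$ with $w\neq\rho(N)$. Such an edge is counted in $m_{11}$ or $m_{21}$ with weight at least $2$, yet admits no rNNI because $\ell$ has no second neighbour. It exists unless every leaf is a child of the root; in that degenerate case every $m_{ij}$ is $0$ and the inequality reads $1\leq 0$.

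For the same reason, ``dominated by the same counts'' does not dispose of root edges. They are counted by no $m_{ij}$, so you must show they admit no rNNI. That is immediate if the root has out-degree one. If out-degree two is allowed, then for a tree vertex $v$ one can move a child $d$ of $v$ onto $\rho(N)$ and reverse $(\rho(N),v)$, so that $v$ becomes the new root. This satisfies properties (1)--(3) of the rNNI definition, and such neighbours would also have to be absorbed by slack.
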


\begin{proposition}\label{prop::algo}
    \cref{alg:bruteforce-rntbp} solves the NTBP exactly in $\mathcal{O}\left(n^2m+(3m+r)^{\eta^*(N)}\right)$ time.
\end{proposition}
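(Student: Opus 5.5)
Correctness and running time are handled separately, and both rest on \cref{cor:bounds}. \emph{Correctness:} by \cref{cor:bounds}, $\delta_{\text{rNNI}}(N)\leq\eta^*(N)$. Hence some tree-based network lies in $\mathcal{B}_{\eta^*(N)}(N)$, and restricting the loop to $k\leq\eta^*(N)$ loses nothing. Since $\mathcal{B}_0(N)\subseteq\mathcal{B}_1(N)\subseteq\cdots$ (taking the union over $\mathcal{B}_1(N')\ni N'$ keeps earlier networks), the first $k$ with $\mathcal{B}_k(N)\cap\mathcal{N}_{\text{TB}}\neq\emptyset$ equals $\min\{d_{\text{rNNI}}(N,N'):N'\in\mathcal{N}_{\text{TB}}\}=\delta_{\text{rNNI}}(N)$. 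The degenerate case where $N$ is already tree-based ($\eta^*(N)=0$, so the loop is empty) is handled by returning $0$. Membership in $\mathcal{N}_{\text{TB}}$ is decided via \cref{thm::TB::Wfence}: compute $\mathcal{Z}(N')$ in linear time and check for a W-fence.

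\emph{Running time:} computing $\eta^*(N)$ costs $\mathcal{O}(n^2m)$ by the cited max-flow result~\cite{davidov2020maximum}. This gives the first summand. For the search, note that an rNNI preserves $m$, $n$, $r$ and the degree sequence up to relabeling. So \cref{prop::B} gives a uniform bound for every network reached:
\begin{align*}
|\mathcal{B}_1(N')|\leq 2(m_{11}+m_{22})+3m_{12}+4m_{21}\leq 3m+m_{21}.
\end{align*}
Each edge counted in $m_{21}$ has a reticulation as tail. A reticulation has out-degree one, so $m_{21}\leq r$, and hence $|\mathcal{B}_1(N')|\leq 3m+r$. By induction, $|\mathcal{B}_k(N)|\leq(3m+r)^k$. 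Generating one neighbour and testing tree-basedness costs $\mathcal{O}(m)$, so the loop costs $\sum_{k\leq\eta^*(N)}(3m+r)^k\cdot\mathcal{O}(m)$. This geometric sum is $\mathcal{O}((3m+r)^{\eta^*(N)})$ up to the per-network linear factor, which is absorbed in the $\mathcal{O}$-notation as in the statement. Alternatively, one can argue that the final level need only be enumerated until the first tree-based network is found.

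\emph{Main obstacle:} the invariance needed to apply \cref{prop::B} uniformly at every level. The quantities $m_{ij}$ of a network after rNNIs can differ from those of $N$, because an rNNI may change in-degrees of neighbours. I will therefore bound $2(m_{11}+m_{22})+3m_{12}+4m_{21}$ only through the invariants $m$ and $r$. In particular, $m_{21}\leq r$ holds in every network in $\mathcal{N}$ with $r$ reticulations, and the number of reticulations $r$ is preserved because in binary networks $r$ is determined by $n$, the number of leaves, and $m$. I will also ensure that the per-network overhead is absorbed in the stated bound, or make it explicit if necessary.
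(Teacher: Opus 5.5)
Your proof is correct and follows the paper's route. Termination comes from $\delta_{\text{rNNI}}(N)\leq\eta^*(N)$ (\cref{cor:bounds}), computing $\eta^*(N)$ costs $\mathcal{O}(n^2m)$, and the bound $|\mathcal{B}_1|\leq 3m+r$ from \cref{prop::B} is raised to the $k$-th power; your observation that $m$ and $r=m-n+1$ are invariant under rNNI is what justifies applying that bound at every level, which the paper's $|\mathcal{B}_k(N)|\leq|\mathcal{B}_1(N)|^k$ leaves implicit. The only loose point, which the paper shares, is that an $\mathcal{O}(m)$ per-network cost is not literally absorbed by the $\mathcal{O}$-notation, so the second summand should be read as counting the networks examined, or else carry an extra factor $m$.
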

\begin{proof}
    We know from \cref{cor:bounds} that $\delta_{\text{rNNI}}(N)\leq\eta^*(N)$ and $\eta^*(N)$ can be computed in $\mathcal{O}(n^2m)$ time~\cite{davidov2020maximum}. Hence, \cref{alg:bruteforce-rntbp} terminates. Furthermore, from \cref{prop::B} we know that
    \begin{align*}
        |\mathcal{B}_1(N)|\leq 2(m_{11}+m_{22})+3m_{12}+4m_{21}<3(m_{11}+m_{12})+4(m_{22}+m_{21})=3(m-r)+4r=3m+r
    \end{align*}
    and therefore $|\mathcal{B}_k(N)|\leq|\mathcal{B}_1(N)|^k\leq(3m+r)^k.$
\end{proof}

Notice that the construction of $\mathcal{B}_k(N)$ in \cref{alg:bruteforce-rntbp} allows us to backtrack in line~6 to also identify a sequence of $k$ networks $\hat{N}_1,\dots,\hat{N}_k\in\mathcal{N}$ such that $\hat{N}_1=N$, $\hat{N}_k\in\mathcal{N}_{\text{TB}}$ and $\hat{N}_i\in\mathcal{B}_1(\hat{N}_{i-1})$ for all $i\in\{2,\dots,k\}$. However, since $\eta^*(N)\leq 2r-2$ (Lemma 5.20 in~\cite{janssen2021rearranging}), the runtime guarantee of \cref{prop::algo} is exponential in the number of reticulation vertices. This means, only if $r$ grows logarithmic in the input length, i.e., for a family of input networks which is close to tree-based, then \cref{alg:bruteforce-rntbp} solves the NTBP exactly in polynomial time. In the special case that there exists a $\mathcal{W}_N$-saturated M-W-rNNI-matching in~$N$ (\cref{cor:rNNI:1}) we can strengthen this result because then the NTBP of~$N$ can be solved exactly in linear time~\cite{hayamizu2021structure}.

\section{Conclusion and future research}\label{sec:future}
In this article we have conducted a comparative study of the $\delta_{\text{rNNI}}$ to derive \cref{prop::rNNISummary}, a linear order of all known proximity measures for tree-basedness of rooted binary phylogenetic networks. The same result might hold if we remove the restriction to binary networks because our proof rests on \cref{thm::TB} which has been extended to the non-binary case~\cite{pons2025fence}. However, this extended decomposition of a rooted phylogenetic network into maximal zig-zag trails is characterised by three forbidden subgraphs and the differentiation of crowns into five pairwise distinct types of maximal zig-zag paths (see Figure 3 in~\cite{pons2025fence}). Therefore, a generalization of \cref{prop:rNNI:1} to rooted non-binary phylogenetic networks might be achievable when working with the extended maximal zig-zag trail decomposition. Though we think a substantially longer proof and possibly proof automation are necessary. We also notice that the application of \cref{prop::rNNISummary} leads to a combinatorial explosion of candidates for an optimal solution to $\delta_{\text{rNNI}}$ (see \cref{cor:rNNI:1} and \cref{exmp:rNNI:1}). This complicates a potential characterization of a family of networks in $\mathcal{N}$ for which $\delta^*(N)=\delta_{\text{rNNI}}(N)$ holds. The same challenge persists in the construction of networks in $\mathcal{N}$ for which rNNI moves tighten neither lower bounds $\delta^*(N)$ nor upper bounds $\eta^*(N)$ particularly well. Specifically, for any positive integers $k_1,k_2$, we have shown that there exists a family $\mathcal{F}$ of networks $N\in\mathcal{N}$ such that $\delta^*(N)+k_1=\delta_{\text{rNNI}}(N)=\eta^*(N)-k_2$ by first applying \cref{prop::delta+k} and then \cref{prop::eta-k}. Here the lack of uniqueness of maximum covering subtrees affected by a rNNI operation (see \cref{lem::switching}) in contrast to the effect of rNNI operations on W-fences (see \cref{prop:rNNI:1}) differentiates the claim of \cref{prop::delta+k} for all networks in $\mathcal{N}$ from the network existence claim in \cref{prop::eta-k}. Similar to the question on generalizing \cref{prop:rNNI:1} from binary to non-binary networks, the construction of $N\in\mathcal{N}$ in the proof of \cref{prop::eta-k} might still be possible for any given $\mathcal{W}_N$-saturated M-W-rNNI-matching with a substantially longer proof. While the complexity of the NTBP is an open question, we know from \cref{prop::algo} that for a sufficiently small $k_2$ we can compute $\delta_{\text{rNNI}}$ in polynomial time because $\eta^*(N)$ can be computed efficiently. Hence, for sufficiently large $k_1$ and $k_2$, family $\mathcal{F}$ seems to contain the input networks for which the NTBP might be hard to solve. 

Further future research directions concern the boundaries of our study. For example, one can consider the NNI proximity measure of tree-basedness for unrooted instead of rooted phylogenetic networks, whose decision version is NP-complete~\cite{fischer2020tree} to further study its relation with comparable proximity measures. Furthermore, one can consider more general topological rearrangement operations than rNNI such as Subnet Prune and Regraft (SNPR) operations~\cite{bordewich2017lost} to study proximity from tree-basedness in rooted phylogenetic networks. This generalization is not trivial as SNPR operations can reconnect vertices at large topological distances, allowing for large distances in M-W-pairs or W-M-pairs to describe or characterize the ATBP through SNPR moves while ensuring acyclicity of the network. This leads to more relevant subgraphs in an analysis of SNPR moves similar to \cref{prop:rNNI:1}. Lastly, one can define a rNNI proximity measure from orchard networks, a subclass of tree-based networks, or tree-child networks, a subclass of orchard networks~\cite{kong2022classes}. To the best of our knowledge both measures have not been studied before and might be interesting because the analogous version of the related ATBP for orchard or tree-child network is NP-hard and linear-time solvable, respectively~\cite{van2023making}. This question also includes the study of new upper bounds because the CTBP is derived from the definition of tree-basedness. For example, one could study the minimum number of vertices in a network who do not have a tree vertex as their child as a proximity measure for tree-childness to maybe obtain upper bounds on the rNNI tree-childness proximity measure.

\paragraph{Funding.} This work was supported by grant OCENW.M.21.306 of the Dutch Research Council (NWO).

\bibliographystyle{plainnat}

\appendix

\section{Omitted proofs}\label{sec:appendix}

\propdeltaSupermodular*
\begin{proof}
    Let $G_N$ be the bipartite graph whose vertex partition sets are copies of $V(N)$ and an edge $\{u,v\}\in E(G_N)$ joins the first copy of $u$ with the second copy of $v$ if and only if $(u,v)\in E(N)$. Let $M$ be a maximum cardinality matching in $G_N$. Then, we know from Lemma 4.1 in~\cite{francis2018new} that
    \begin{align*}
        \delta^*(N)=|\left\{u\in V(G_N)\,:\,\exists(u,v)\in E(G_N)\text{ with }\{u,v\}\notin M\right\}|-|X|
    \end{align*}
    Recall that, for $W\subseteq V(N)$, $\Delta W=\{v\in V(N)\,:\,\exists\,(w,v)\in E(N)\text{ with }w\in W\}$. Since $\Delta W$ is the set of neighbors of $W$ in $G_N$ if we associate $W$ with the first vertex partition set of $G_N$, we know from Hall's marriage theorem~\cite{plummer1986matching} that
    \begin{align*}
        |M|=|V(N)|-\max_{W\subseteq V(N)}(|W|-|\Delta W|).
    \end{align*}
    Since $|\left\{u\in V(G_N)\,:\,\exists(u,v)\in E(N)\text{ with }\{u,v\}\notin M\right\}|=|V(N)|-|M|$, we obtain
    \begin{align}\label{eq::deltaWX}
        \delta^*(N)=\max_{W\subseteq V(N)}(|W|-|\Delta W|)-|X|.
    \end{align}
    Observe that $(u,v)\notin E(G_N)$ for all $u\in X$. Hence, $\Delta X=\emptyset$, i.e., every maximizer of~\eqref{eq::deltaWX} contains $X$. Thus, our claim follows.
\end{proof}

\propkappaRatio*
\begin{proof}
Recall that, for $W\subseteq V(N)$, $\Delta W=\{v\in V(N)\,:\,\exists\,(w,v)\in E(N)\text{ with }w\in W\}$. First, we show that
\begin{align*}
    \kappa^*(N) =\max_{W\subseteq V\setminus X}\left\lceil\frac{|W\setminus\Delta W|}{|\Delta W\setminus W|}\right\rceil.
\end{align*}

    Let $T$ be a covering subtree of $N$, let $W\subseteq V\setminus X$ and let $v\in V(T)\cap W\setminus \Delta W$. Since $V(T)$ contains every leaf but $W$ contains none, for every directed path $P$ in $T$ from $v$ to a leaf of $N$, there exists a first edge $(a,b)\in E(P)$ such that $a\in W$ and $b\notin W$. By definition, $b\in\Delta W\setminus W$. Let $\phi:V(T)\cap W\setminus\Delta W\to V(T)\cap\Delta W\setminus W$ be defined by $\phi(v)=b$ if and only if there exists a directed path $P$ from $v$ to $b$ in $T$ such that $V(P)\setminus\{b\}\subseteq W$. We show that $\phi$ is injective. Suppose by contradiction there exist two pairwise distinct vertices $v,v'\in V(T)\cap W\setminus\Delta W$ such that $\phi(v)=\phi(v')=b$. Since there exist directed paths from both $v$ and $v'$ to $b$ in $T$ and $T$ is a tree, there exists without loss of generality a directed path $P'$ from $v$ to $v'$ in $T$. However, $V(P')\subseteq W$ by our choice of $b$. This leads to a contradiction because $v'\in W\setminus\Delta W$ but $V(P')\setminus\{v\}\subseteq\Delta W$ by definition of $\Delta W$. Thus, $\phi$ is injective.

    Next, by definition of $\kappa^*(N)$ there exist covering subtrees $T_1,\dots,T_{\kappa^*(N)}$ such that $\bigcup_{i=1}^{\kappa^*(N)}V(T_i)=V(N)$. Then,
    \begin{align*}
        |W\setminus\Delta W|\leq\sum_{i=1}^{\kappa^*(N)}\left|V(T_i)\cap W\setminus\Delta W\right|\stackrel{\phi\text{ injective}}{\leq}\sum_{i=1}^{\kappa^*(N)}\left|V(T_i)\cap \Delta W\setminus W\right|\leq \kappa^*(N)\cdot|\Delta W\setminus W|.
    \end{align*}
    Hence,
    \begin{align*}
    \kappa^*(N) \geq\max_{W\subseteq V\setminus X}\left\lceil\frac{|W\setminus\Delta W|}{|\Delta W\setminus W|}\right\rceil.
\end{align*}
    Conversely, for a positive integer $k$, construct an auxiliary graph $G_{\text{lb}}^k$ for $N$ as follows:
    \begin{enumerate}
        \item For all $v\in V(N)$, $G_{\text{lb}^k}$ contains auxiliary vertices $v_{\text{in}}, v_{\text{out}}$ and edge $(v_{\text{in}},v_{\text{out}})$;
        \item For all $(u,v)\in E(N)$, $G_{\text{lb}^k}$ contains edge $(u_{\text{out}},v_{\text{in}})$;
        \item For all $v_{\text{in}},v_{\text{out}}\in V(G_{\text{lb}}^k)$, $G_{\text{lb}^k}$ contains edges $(s,v_{\text{in}}),(v_{\text{out}},t)$ for auxiliary vertices $s,t$;
        \item For $E=E(G_{\text{lb}}^k)$, define lower bound function $l:E\to\mathbb{R}_{\geq 0}$ and upper bound function $u:E\to\mathbb{R}_{\geq 0}$ for any $s$-$t$-flow on $G_{\text{lb}}^k$ by $l(e)=1$ if $e=(v_{\text{in}},v_{\text{out}})$ and $l(e)=0$ otherwise, and $u(e)=k$ for all $e\in E$.
    \end{enumerate}
    Then, we know from Theorem 1 in~\cite{davidov2020maximum} that $G_{\text{lb}}^k$ has a feasible $s$-$t$-flow if and only if there exist pairwise distinct covering subtrees $T_1,\dots,T_k$ of $N$ due to upper bound function $u$ and $\bigcup_{i=1}^kV(T_i)=V(N)$ due to lower bound function $l$. Hence, $\kappa^*(N)$ is equal to the minimum choice of $k$ such that there exists a feasible $s$-$t$-flow for $G_{\text{lb}}^k$. For a fixed $k$, we can find a $s$-$t$-flow on $G_{\text{lb}}^k$ (if it exists) by solving a maximum-cost or minimum-cost flow problem for an uniform cost function on $E(G_{\text{lb}}^k)$ in $\mathcal{O}(|V(N)|^2|E(N)|)$ time~\cite{davidov2020maximum}. To find a minimum $k$ for which a feasible $s$-$t$-flow on $G_{\text{lb}}^k$ exists, we apply binary search. Thus, we can compute $\kappa^*(N)$ in $\mathcal{O}(|V(N)|^2|E(N)|\log|V(N)|)$ time.
    
    For $U\subseteq V(G_{\text{lb}}^k)$, let $\delta^-(U)=\{(u,v)\in E\,:\,v\in U\}$ and $\delta^+(U)=\{(u,v)\in E\,:\,u\in U\}$. Since there exists a feasible $s$-$t$-flow in $G_{\text{lb}}^{k}$ if and only if
    \begin{align*}
        \sum_{e\in\delta^-(U)}l(e)&\leq\sum_{e\in\delta^+(U)}u(e)=\sum_{e\in\delta^+(U)}k~&~&\forall\, U\subseteq V\left(G_{\text{lb}}^{k}\right)\\
        \Leftrightarrow~~~\frac{\sum_{e\in\delta^-(U)}l(e)}{|\delta^+(U)|}&\leq k~&~&\forall\, U\subseteq V\left(G_{\text{lb}}^{k}\right)
    \end{align*}
    we deduce
    \begin{align}\label{eq::kappaU}
    \max_{U\subseteq V\left(G_{\text{lb}}^{\kappa^*(N)}\right)}\left\lceil\frac{\sum_{e\in\delta^-(U)}l(e)}{|\delta^+(U)|}\right\rceil=\kappa^*(N).
    \end{align}
    Let $U^*$ be a maximizer of \eqref{eq::kappaU}. Let $W^*=\{v\in V(N)\setminus X\,:\,v_{\text{out}}\in U^*\}$. Then, $\Delta W^*=\{v\in V(N)\,:\,\exists\,(u,v)\in E(N)\text{ with }u_{\text{out}}\in U^*\}$. Consider $(u,v)\in E(N)$ such that $u_{\text{out}}\notin U^*$ and $v_{\text{out}}\in U^*$. Suppose by contradiction $v_{\text{in}}\in U^*$. Then, we can remove $v_{\text{in}}$ from $U^*$ to add the contribution $l((v_{\text{in}},v_{\text{out}}))=1$ to the nominator in \eqref{eq::kappaU} without increasing the denominator, contradicting the maximality of $U^*$. Hence, $v_{\text{in}}\notin U^*$. In other words,
    \begin{align*}
        \sum_{e\in\delta^-(U^*)}l(e)=|\delta^-(U^*)|=\left|\{v\in V(N)\setminus X\,:\,u_{\text{out}}\notin U^*,v_{\text{out}}\in U^*~\forall\,(u,v)\in E(N)\}\right|=|W^*\setminus\Delta W^*|.
    \end{align*}
    For edges $(u,v)\in E(N)$ such that $u_{\text{out}}\in U^*$ and $v_{\text{out}}\notin U^*$, analogous arguments apply ensuring $u_{\text{in}}\in U^*$. Hence, $|\delta^+(U^*)|=|\Delta W^*\setminus W^*|$. Thus, from~\eqref{eq::kappaU} it follows
    \begin{align*}
        \kappa^*(N)=\left\lceil\frac{|W^*\setminus\Delta W^*|}{|\Delta W^*\setminus W^*|}\right\rceil\leq\max_{W\subseteq V\setminus X}\left\lceil\frac{|W\setminus\Delta W|}{|\Delta W\setminus W|}\right\rceil.
    \end{align*}
    Now, we can apply \cref{prop::deltaSupermodular} to obtain
    \begin{align*}
        \kappa^*(N)-1 =\max_{W\subseteq V\setminus X}\left\lceil\frac{|W\setminus\Delta W|-|\Delta W\setminus W|}{|\Delta W\setminus W|}\right\rceil=\left\lceil\frac{\max\{|W|-|\Delta W|\,:\,W\subseteq V\setminus X\}}{|\Delta W^*\setminus W^*|}\right\rceil=\left\lceil\frac{\delta^*(N)}{|\Delta W^*\setminus W^*|}\right\rceil
    \end{align*}
    for some non-empty set $W^*\subseteq V\setminus X$.
\end{proof}

\begin{figure}[!b]
    \centering
    \includegraphics[width=0.7\linewidth]{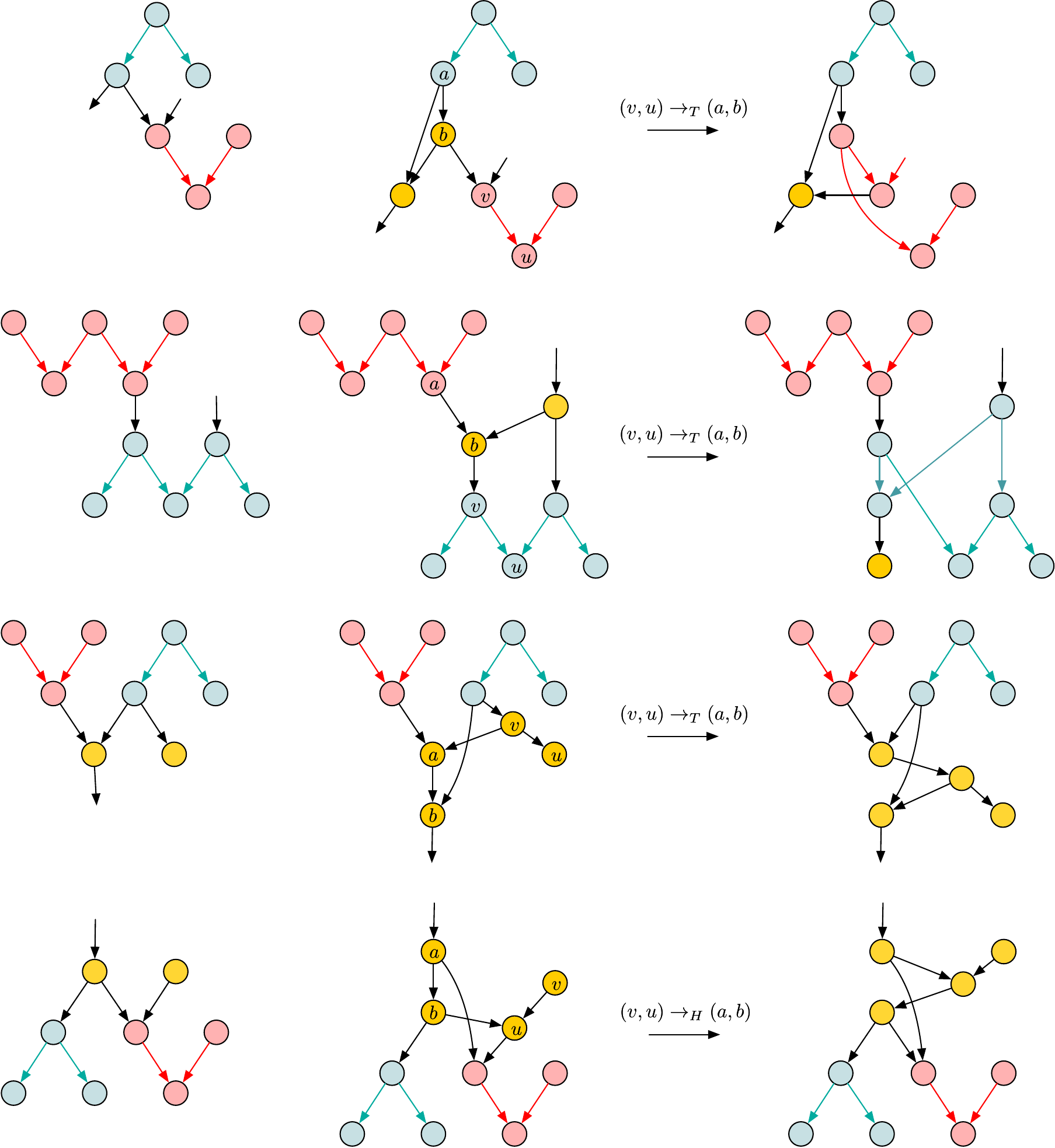}
    \caption{Graphs used in the proof of \cref{prop::delta+k}.}
    \label{fig:delta+k2}
\end{figure}


\propdeltak*
\begin{proof}
Assume there exists a $\mathcal{W}_N$-saturated M-W-rNNI-matching in $N$ for which one of \cref{prop:rNNI:1}.1 to~\cref{prop:rNNI:1}.5 is satisfied at least once. We proved our claim for the condition in \cref{prop:rNNI:1}.1. Conditions in \cref{prop:rNNI:1}.2 to~\cref{prop:rNNI:1}.5 are illustrated in the graphs on the left in \cref{fig:delta+k2} (from top to bottom in order of the conditions) by associating red vertices and edges to the W-fence $Z$ and blue vertices and edges to the M-fence $Z'$ satisfying the respective conditions. Then, the graphs in the middle depict the (possibly repeated) insertion of a few vertices and edges which increase the minimal topological distance between any pair of vertices $x\in Z$, $y\in Z'$. Observe that this construction transforms $N$ into $N'\in\mathcal{N}$ with $\mathcal{W}_{N'}=\mathcal{W}_N$ and $\mathcal{M}_{N'}=\mathcal{M}_N$ such that $Z$ and $Z'$ no longer satisfy any one of the conditions in~\cref{prop:rNNI:1}. However, applying the rNNI operation $\sigma$ shown in \cref{fig:delta+k2} to $N'$ yields the graph on the right in the same figure. In each case we can identify a W-fence and a M-fence in $\mathcal{Z}(\sigma(N'))$ (drawn in red and blue, respectively) which satisfy the same condition in \cref{prop:rNNI:1} as our initial zig-zag paths $Z$ and $Z'$ in $\mathcal{Z}(N)$. Furthermore, by construction $\sigma$ does not increase $|\mathcal{W}_{N'}|$ or $|\mathcal{M}_{N'}|$. This means, there exists a $\mathcal{W}_{\sigma(N')}$-saturated M-W-rNNI-matching in $\sigma(N')$ because there exists a  $\mathcal{W}_{N}$-saturated M-W-rNNI-matching in $N$. Thus, \cref{cor:rNNI:1} implies $\delta^*(N')+1=\delta_{\text{rNNI}}(N')$ and we can draw the same conclusion as in the case of \cref{prop:rNNI:1}.1.
\end{proof}

\end{document}